\documentclass{article}

\usepackage[pagebackref,colorlinks,hypertexnames=false]{hyperref}
\usepackage{amsmath}
\usepackage{amsthm}
\usepackage{amssymb}
\usepackage[dvips,letterpaper,margin=1in,bottom=1in]{geometry}

\usepackage[utf8]{inputenc}
\usepackage[english]{babel}

\usepackage{bm}

\usepackage{subcaption}
\usepackage{tikz}
\usepackage{pgfplots}
\pgfplotsset{compat=1.12}

\pgfplotsset{
    discard unless two cols/.style n args={4}{
        x filter/.code={
            \edef\vala{\thisrow{#1}}
            \edef\wanta{#2}
            \edef\valb{\thisrow{#3}}
            \edef\wantb{#4}

            \ifx\vala\wanta
                \ifx\valb\wantb
                \else
                    
                \fi
            \else
                
            \fi
        }
    }
}

\usepackage{zref-clever}
\newcommand{\cref}[1]{\zcref{#1}}
\zcsetup{
    cap,
    nameinlink=false
}
\zcRefTypeSetup{equation}{
    Name-sg = Eq.,
    Name-pl = Eqs.,
    refbounds = {(,,,)}
}

\usepackage{mathtools}

\usepackage{algorithm}
\usepackage{algpseudocode}

\newcommand{\addtheorem}[1]{%
  \AddToHook{env/#1/begin}{%
    \zcsetup{countertype={theorem=#1}}%
  }%
  \zcRefTypeSetup{#1}{Name-sg={\MakeUppercase #1}}%
  \newtheorem{#1}[theorem]{\MakeUppercase #1}%
}

\newtheorem{theorem}{Theorem}[section]
\addtheorem{axiom}
\addtheorem{lemma}
\addtheorem{corollary}
\addtheorem{proposition}
\addtheorem{fact}
\addtheorem{remark}

\newcommand{\ketbra}[3][]{\left\lvert #2 \vphantom{#3} \right>_{#1}\!\!\left< #3 \vphantom{#2} \right\rvert}

\DeclarePairedDelimiter\rbra{\lparen}{\rparen}
\DeclarePairedDelimiter\sbra{\lbrack}{\rbrack}
\DeclarePairedDelimiter\cbra{\{}{\}}
\DeclarePairedDelimiter\abs{\lvert}{\rvert}
\DeclarePairedDelimiter\Abs{\lVert}{\rVert}
\DeclarePairedDelimiter\ceil{\lceil}{\rceil}
\DeclarePairedDelimiter\floor{\lfloor}{\rfloor}
\DeclarePairedDelimiter\ket{\lvert}{\rangle}
\DeclarePairedDelimiter\bra{\langle}{\rvert}
\DeclarePairedDelimiter\ave{\langle}{\rangle}

\DeclareMathOperator{\Span}{span}
\DeclareMathOperator{\tr}{tr}

\newcommand{\CC}{\mathbb{C}}
\newcommand{\EE}{\mathbb{E}}
\newcommand{\NN}{\mathbb{N}}
\newcommand{\RR}{\mathbb{R}}

\newcommand{\cA}{\mathcal{A}}
\newcommand{\cF}{\mathcal{F}}

\title{Weakly Measured Loops for Quantum Amplitude Amplification:\\
Oracle Savings and Adaptive Search with Unknown Target Probability
}
\author{
    Wang Fang\footnote{United Arab Emirates University, \url{Wang.Fang@uaeu.ac.ae}.}
    \and Chris Heunen\footnote{University of Edinburgh, \url{Chris.Heunen@ed.ac.uk}.}
}
\date{}

\begin{document}

\maketitle

\begin{abstract}
We present a family of quantum amplitude amplification algorithms that interleave ordinary Grover rotations with tunable weak measurements in loops.
A successful measurement yields the target state and stops the algorithm, while after a failure, the loop resumes from the post-measurement state without restarting.
We express the leading costs in expected Grover iterations as $p \to 0$.
For a known target probability $p$, an exact weak measurement-conditioned loop for states near the angle $\pi/4$ uses $(\pi/8+1/4+o(1))/\sqrt{p}$ iterations, improving on the standard Grover constant $\pi/4$ and on optimised restart Grover, and matching the corresponding infimum in our continuous-angle analysis.
When only a lower bound $0 < p_0 \leq p < 1/2$ is available, a discrete Lyapunov equation gives the exact expected oracle cost for every fixed measurement strength and a closed-form optimal strength.
Using the strength selected from $p_0$, the expected number of Grover iterations is at most $\frac{1}{2\sqrt{2}}(\frac{1}{\sqrt{p}}+\frac{1}{\sqrt{p_0}})$, where the expected cost decreases as the actual $p$ increases, and gives the leading constant $1/\sqrt{2}$ at the promise boundary $p = p_0$.
For completely unknown $p$, we identify measurement strength $\kappa(t)=\Theta(1/t)$ as the critical scale within the regular schedules considered here and analyse $\kappa_b(t)=\min\{1/2,b/t\}$.
For each fixed $b > 2$, we rigorously derive a closed-form Gamma-function expression $C(b)$ giving $(C(b)/2+o(1))/\sqrt{p}$ expected iterations.
Numerical minimisation of this explicit expression gives $b \approx 5.2$ and $C(b)/2 \approx 1.01$.
The results show that weak measurements preserve the $\Theta(1/\sqrt{p})$ search scale while adding an explicit fixed-point control mechanism and provable oracle savings.
\end{abstract}

\newpage

\tableofcontents
\newpage

\section{Introduction}
Grover search~\cite{Grover96} and quantum amplitude amplification~\cite{BHM+02} are canonical examples of the overcooking problem.
The optimal query complexity is $\Theta\rbra{1/\sqrt{p}}$~\cite{BBH+98},
the dynamics are confined to a two-dimensional plane, and the danger of applying too many rotations is well understood.

Let $p = \Abs*{\Pi_{\text{target}}\cA\ket{0^n}}^2$ denote the target probability, i.e., the initial probability of the target subspace, for a state-preparation unitary oracle $\cA$.
If $p$ is known exactly, one can choose an appropriate number of Grover iterations and measure at the end.
When needed, phase-matched variants can
further adjust the reflections so that the last rotation lands exactly in the target subspace, achieving success probability $1$~\cite{Hoyer00,Long01,BHM+02}.
If only a lower bound $p_0 \leq p$ is available, a fixed-point construction is more suitable.\footnote{In this discussion, we restrict attention to amplification procedures that retain the Grover-type quadratic speedup in $O\rbra{1/\sqrt{p_0}}$ for fixed target error, rather than to arbitrary fixed-point variants.}
Instead of repeatedly applying the same $\pi$-phase reflections, fixed-point quantum search~\cite{YLC14,GSL+19} uses a carefully chosen sequence of phase-modified reflections, whose overall action is robust for every $p \geq p_0$, thereby preventing overcooking from larger values of $p$.
When $p$ is unknown and no lower bound is available, a pre-determined number of iterations cannot be applied directly.
In this regime, one instead uses randomized stopping
times, geometrically increasing trial lengths with restarts, or amplitude estimation and quantum counting subroutines to achieve the
relevant scale $O\rbra{1/\sqrt{p}}$~\cite{BBH+98,BHM+02}.

Most previous work has focused on adjusting the phases of the Grover reflections and on choosing the number of iterations.
\emph{We study a different control mechanism: weak measurement}.
Here, a weak measurement of the target subspace is the two-outcome measurement with operators
\begin{equation*}
    M_0 = I-\Pi_{\text{target}}+\sqrt{1-\kappa}\,\Pi_{\text{target}},
    \qquad M_1 = \sqrt{\kappa}\,\Pi_{\text{target}},
    \qquad 0 < \kappa \leq 1.
\end{equation*}
Outcome $1$ certifies success and projects into the target subspace.
Outcome $0$ reduces the unnormalised target amplitude by $\sqrt{1-\kappa}$ while preserving the non-target amplitude.
For small $\kappa$, the resulting normalised state is close to the state before measurement.
All of our algorithms in this paper are based on the same elementary loop,
\begin{equation*}
    \text{weak measurement of the target qubit} \quad \longrightarrow\quad \text{on failure, one ordinary Grover rotation}.
\end{equation*}
If the measurement succeeds, the algorithm halts on a target state; a failed measurement with $\kappa < 1$ preserves both components of the Grover plane, allowing progress from preceding rotations to be retained.
The algorithm can therefore continue from the
post-measurement state rather than restart from scratch.

The motivation comes from a simple geometric observation that will be illustrated in Section~\ref{sec:pi_over_4_and_weak_measurement}.
Standard amplitude amplification is usually analysed globally by choosing the number of rotations so that the final angle is close to $\pi/2$.
We instead consider the local increase in target probability caused by one Grover rotation.
As shown in \cref{sec:pi_over_4_and_weak_measurement}, this increase is largest when the rotation is centred at the balanced angle $\pi/4$.
Near this region, a weak measurement detects the target with probability approximately $\kappa/2$, while
the failure branch can still be placed near the angle $\pi/4$ for the next rotation and weak measurement.

\emph{This observation suggests replacing the global question
``when should Grover search stop?'' by the local question
``how strongly should the state be measured after each rotation?''}
For known $p$, the strength can be chosen so that a failed measurement returns the state to a prescribed angle and one further Grover rotation restores the same checkpoint.
When only a lower bound $p_0$ is known, exact angle control is impossible, but a fixed strength can damp the dynamics uniformly over all $p \geq p_0$.
When $p$ is completely unknown, the measurement strength can instead be decreased adaptively as a function of the number of preceding failures.

Thus, our algorithms are genuine measurement-conditioned loops, which we call \emph{weakly measured loops} following previous work~\cite{AMH22}, rather than fixed-length sequences of Grover rotations. 
Their unitary component is always the ordinary Grover rotation, and the algorithmic design problem is entirely the choice of weak measurement strengths.
We solve this problem in three information regimes:
\begin{equation*}
    \text{known }p, \qquad \text{lower-bounded }p, \quad \text{and} \quad  \text{unknown }p,
\end{equation*}
The $\pi/4$ observation gives the exact geometric picture in the first regime and the guiding intuition for the remaining two. 
The later algorithms do not require the state to be steered explicitly to $\pi/4$; instead, they use the same weakly measured loops without knowledge of the true Grover angle.

\subsection{Our results}
Before illustrating our results, let us review the standard notions for quantum amplitude amplification.
We consider the standard two-dimensional subspace for quantum amplitude amplification~\cite{BHM+02}.
Let
\begin{equation}\label{eq:oracle_A}
    \cA\ket{0^n} = \cos\rbra{\theta}\ket{0}\ket{\psi_0} + \sin\rbra{\theta}\ket{1}\ket{\psi_1}, \qquad \theta = \arcsin\rbra{\sqrt{p}} \in \rbra{0,\pi/4},
\end{equation}
where $p$ is the probability for the target state $\ket{1}\ket{\psi_1}$.
The Grover rotation is given by
\begin{equation}\label{eq:grover_operator}
    G = -\cA S_0 \cA^{\dagger} S_{1},
\end{equation}
where $S_0 = I-2\ketbra{0^n}{0^n}$ reflects about the orthogonal complement of the all-zero input state, and $S_1 = \rbra*{\ketbra{0}{0}-\ketbra{1}{1}}\otimes I = I-2\Pi_{\text{target}}$ flips the phase of the target subspace.
$G$ rotates any state inside the subspace $\Span\cbra{\ket{0}\ket{\psi_0}, \ket{1}\ket{\psi_1}}$ by an angle $2\theta$ as 
\begin{equation}\label{eq:grover_rotation}
    G\ket{\phi_\alpha} = \ket{\phi_{\alpha+2\theta}},
\end{equation}
where we write
\begin{equation}\label{eq:alpha_state}
    \ket{\phi_\alpha} = \cos\rbra{\alpha}\ket{0}\ket{\psi_0} + \sin\rbra{\alpha}\ket{1}\ket{\psi_1}
\end{equation}
for any angle $\alpha \in \sbra{0,2\pi}$.

\paragraph{Cost conventions.}
Let $N_G$ denote the number of ordinary Grover iterations and $N_{\cA/\cA^\dagger}$ the total number of calls to the state-preparation oracle or its adjoint.
Each of our algorithms prepares the register once, using one oracle call, and each Grover iteration uses two further calls. Thus,
\begin{equation*}
    N_{\cA/\cA^\dagger} = 2N_G+1,
    \qquad \text{total angle supplied by Grover rotations} = 2\theta N_G,
\end{equation*}
The rotation angle is a cumulative cost, not the final state angle after measurement backaction.
An expected oracle cost $\rbra{c+o\rbra{1}}/\sqrt{p}$ therefore corresponds to $\rbra{c/2+o\rbra{1}}/\sqrt{p}$ expected Grover iterations, since $\theta \sim \sqrt{p}$.
The abstract uses Grover iterations; the formal results retain oracle counts because the input model supplies query access to $\cA$ and $\cA^\dagger$.

\paragraph{Exact schedule for known $p$.}
We first consider the case where the target probability $p$, and hence $\theta$, is known.
In this regime, we can easily control the angle $\alpha$ of the state $\ket{\phi_{\alpha}}$ because we know the exact $\theta$.
We rotate the initial state $\cA\ket{0^n}$ to the first checkpoint state with an angle at or slightly beyond $\pi/4$.
The weak measurement strength is then chosen so that, upon a failure outcome, the angle decreases by exactly $2\theta$.
One subsequent Grover rotation therefore restores the checkpoint, and the algorithm repeats this one-step loop until the target is observed.

\begin{theorem}
    [Oracle counts for known $p$; {\cref{thm:exact_known_p}, restated}]
    Given query access to the oracle $\cA$ and its inverse $\cA^\dagger$ in \cref{eq:oracle_A}, together with the target probability $0 < p \leq \sin^2\rbra{\frac{\pi}{8}}$ and $\theta = \arcsin\rbra{\sqrt{p}}$, there is a quantum algorithm that takes $p$ as input and outputs the target state $\ket{1}\ket{\psi_1}$.
    Its expected total number of calls to $\cA$ (and $\cA^{\dagger}$) is less than
    \begin{equation*}
        \rbra*{\frac{\pi}{4}+\frac{2\theta}{\sin\rbra{4\theta}}+\theta}\cdot \frac{1}{\theta}.
    \end{equation*}
\end{theorem}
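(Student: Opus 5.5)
The plan is to analyse the explicit one-step loop described before the statement and to bound its expected cost using only elementary trigonometry in the Grover plane.

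\textbf{Setup.} Start from $\cA\ket{0^n}=\ket{\phi_\theta}$, which costs one oracle call. Apply $k=\ceil*{(\pi/4-\theta)/(2\theta)}$ Grover rotations. By \cref{eq:grover_rotation} this reaches the checkpoint $\ket{\phi_\beta}$ with
\begin{equation*}
    \beta=(2k+1)\theta=\pi/4+\delta, \qquad \delta\in[0,2\theta).
\end{equation*}
Since $\theta\le\pi/8$, we have $\beta<\pi/2$ and $\beta-2\theta\in[\pi/4-2\theta,\pi/4)\subset[0,\pi/4)$. The loop is: weakly measure with strength $\kappa$; on failure apply one Grover rotation, and repeat.

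\textbf{Choice of $\kappa$ and per-round success probability.} Outcome $0$ maps $\ket{\phi_\beta}$ to an unnormalised vector whose angle $\beta'$ satisfies $\tan\beta'=\sqrt{1-\kappa}\tan\beta$. We choose
\begin{equation*}
    1-\kappa=\frac{\tan^2(\beta-2\theta)}{\tan^2\beta}\in(0,1),
\end{equation*}
so that $\beta'=\beta-2\theta$ and a single $G$ restores $\ket{\phi_\beta}$ exactly. The failure probability is the squared norm
\begin{equation*}
    \cos^2\beta+(1-\kappa)\sin^2\beta=\frac{\cos^2\beta}{\cos^2(\beta-2\theta)}.
\end{equation*}
Hence every round succeeds independently with the same probability
\begin{equation*}
    q=\frac{\cos^2(\beta-2\theta)-\cos^2\beta}{\cos^2(\beta-2\theta)}=\frac{\sin(2\beta-2\theta)\sin(2\theta)}{\cos^2(\beta-2\theta)}.
\end{equation*}
The number of rounds is therefore geometric with mean $1/q$, so the expected number of Grover iterations after the checkpoint is $1/q-1$. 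On success the state is $\ket{1}\ket{\psi_1}$, which gives correctness. The expected oracle count is
\begin{equation*}
    N_{\cA/\cA^\dagger}=(2k+1)+2(1/q-1)=\frac{\pi/4+\delta}{\theta}+\frac{2}{q}-2.
\end{equation*}

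\textbf{Reduction to an inequality in $\delta$.} Comparing with the claimed bound $\frac{\pi}{4\theta}+\frac{2}{\sin 4\theta}+1$, it suffices to prove, uniformly for $\delta\in[0,2\theta)$,
\begin{equation*}
    \frac{\delta}{\theta}+\frac{2}{q}<3+\frac{2}{\sin4\theta}.
\end{equation*}
Substituting $\beta=\pi/4+\delta$ gives $\sin(2\beta-2\theta)=\cos(2\delta-2\theta)$ and $2\cos^2(\beta-2\theta)=1+\sin(4\theta-2\delta)$. The target inequality therefore becomes
\begin{equation*}
    \frac{\delta}{\theta}+\frac{1+\sin(4\theta-2\delta)}{\sin2\theta\,\cos(2\delta-2\theta)}<3+\frac{2}{\sin4\theta}.
\end{equation*}

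\textbf{Main obstacle and how to handle it.} The crude approach bounds each factor at its worst $\delta$ separately: $\cos(2\delta-2\theta)\ge\cos2\theta$, $\sin(4\theta-2\delta)\le\sin4\theta$, and $\delta/\theta<2$. This only yields $2/q\le 2+2/\sin4\theta$ and loses an additive $1$, since the worst $\delta$ for $q$ (namely $\delta=0$) and for $\delta/\theta$ (namely $\delta\to2\theta$) differ. The endpoints already satisfy the inequality with room to spare: at $\delta=0$ the left side is $2+2/\sin4\theta$, and as $\delta\to2\theta$ it tends to the same value. So the plan is to show that the left side, as a function of $u=\delta/\theta\in[0,2]$, is convex, or at least does not exceed the endpoint value by more than $1$.

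First I would try bounding the second term by its chord. Use $\sin(4\theta-2\delta)\le\sin4\theta\,(1-u/2)+\text{(concavity correction)}$, together with $\cos(2\delta-2\theta)\ge\cos2\theta$, which gives a term at most about $2/\sin4\theta+(1-u/2)\cdot 2+O(\theta)$. Adding $u$ then leaves a maximum of $2+2/\sin4\theta$ plus a correction of order $\theta u$ terms, which stays below the slack of $1$ for $\theta\le\pi/8$. If the concavity correction proves awkward, I would fall back to a direct derivative check in $u$ on $[0,2]$. This is the only delicate step; everything else is exact.
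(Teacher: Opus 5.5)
Your algorithm, choice of $\kappa$, and exact cost formula are the paper's: your $k$, $\beta$, $\beta-2\theta$ are its $K$, $\alpha^*$, $\beta^*$. Your reduction to $\delta/\theta+2/q<3+2/\sin 4\theta$ is also correct. The gap is that this inequality carries the whole content of the bound and is left unproved, and the route you sketch does not close it as stated. Suppose you replace $\cos(2\delta-2\theta)$ by $\cos 2\theta$ and $\sin(4\theta-2\delta)$ by its chord. Then the excess of your bound over the endpoint value $2+2/\sin 4\theta$ is exactly $2(\sin x/\sin a - x/a)$, with $a=4\theta$ and $x=4\theta-2\delta\in(0,a]$. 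This is concave in $x$ with an interior maximum, so convexity is not available for this bound. You must show it stays below $1$ uniformly for $a\le\pi/2$, not merely that it is $O(\theta)$. The natural estimate $\sin x\le x$ only gives $2(a/\sin a-1)$, which equals $\pi-2>1$ at $\theta=\pi/8$. So you need an additional split or an exact maximisation. (Also, $1-\kappa>0$ requires $\beta-2\theta>0$ strictly, not just $\ge 0$; this holds because $\beta-2\theta=(2k-1)\theta\ge\theta$.)

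The missing ingredient is an exact identity. With $v=2\delta-2\theta\in[-2\theta,2\theta)$, expanding $\sin(4\theta-2\delta)=\sin(2\theta-v)$ gives
\[
\frac{2}{q}=1+\frac{1}{\sin 2\theta\,\cos v}-\cot 2\theta\,\tan v .
\]
The paper uses this identity with the split $\delta\le\theta$ versus $\delta>\theta$ (its cases $K\le\pi/(8\theta)$ and $K>\pi/(8\theta)$).

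When $\delta\le\theta$, the crude bound $2/q\le 2+2/\sin 4\theta$ suffices because $\delta/\theta\le 1$. The two equalities cannot hold together, since they need $\delta=\theta$ and $\delta=0$ respectively.

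When $\delta>\theta$, the term $-\cot 2\theta\tan v$ is negative, so $2/q<1+2/\sin 4\theta$. Together with $\delta/\theta<2$, this recovers exactly the lost $1$.

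The same identity also makes your first instinct rigorous. As a function of $u=\delta/\theta$, the left-hand side has second derivative $4\theta^2(1+\sin^2 v-2\cos 2\theta\sin v)/(\sin 2\theta\cos^3 v)\ge 0$. Hence it is convex and bounded by its endpoint value $2+2/\sin 4\theta$, which even sharpens the theorem's bound by one oracle call. That convexity comes precisely from the $\cos(2\delta-2\theta)$ denominator that your chord argument discards.
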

As $p \to 0$, the leading constant in the $1/\theta \sim 1/\sqrt{p}$ scaling is $\pi/4+1/2$.
Equivalently, the leading constant in expected Grover iterations is $\pi/8+1/4$, compared with $\pi/4$ for standard Grover amplification.
Under our oracle-counting convention, standard Grover amplification has asymptotic cost $\frac{\pi}{2}\frac{1}{\theta}$.
The ratio of our leading constant to the standard Grover constant is therefore $\frac{\pi/4+1/2}{\pi/2} \approx 0.8183$.
Thus, the exact weak measurement schedule saves approximately $18\%$ of the oracle calls in the small-$p$ limit.
It also improves on the optimised restart-after-failure strategy of~\cite{BBH+98}, whose corresponding ratio is approximately $0.88$.

The continuous-angle calculation in
\cref{sec:continuous_heuristic} further shows that
$\pi/4+1/2$ is the infimum within the natural family of one-checkpoint protocols considered there.
In this sense, the $\pi/4$ region is not only a useful heuristic: it identifies the optimal limiting checkpoint for this class of weakly measured loops.

\paragraph{Fixed-point schedule for lower-bounded $p$.}
We next assume that $p$ is not known exactly, but that a lower bound $p_0 \leq p$ is available.
Exact control of the $\pi/4$ region is no longer possible, because the rotation angle $2\theta$ is unknown.
We therefore prepare $\cA\ket{0^n}$ and repeatedly perform a weak measurement of fixed strength $\kappa$, applying one Grover rotation after each failure.
Writing $x = \frac{1-\sqrt{1-\kappa}}{1+\sqrt{1-\kappa}}$,
the exact expected oracle cost, including preparation, is $Q\rbra{p,\kappa} = \frac{1}{x}+\frac{x}{2p}$.
This expression is minimised at $x = \sqrt{2p}$, giving the optimal strength
\begin{equation*}
    \kappa^*\rbra{\theta} = \frac{4\sqrt{2}\sin\rbra{\theta}}{\rbra*{1+\sqrt{2}\sin\rbra{\theta}}^2}
    = \frac{4\sqrt{2p}}{\rbra*{1+\sqrt{2p}}^2},
\end{equation*}
as proved in \cref{thm:fixed_point_expected_cost}. If only $p_0$ is known, we select the strength using $\theta_0 = \arcsin\rbra{\sqrt{p_0}}$, which leads to the
following result.

\begin{theorem}
    [Oracle counts for lower-bounded $p \geq p_0$; {\cref{thm:fixed_point_lower_bounded}, restated}]
    Given query access to $\cA$ and $\cA^\dagger$ satisfying \cref{eq:oracle_A}, and a known lower bound $0 < p_0 \leq p < 1/2$, there is a quantum algorithm that takes only $p_0$ as input and outputs the target state $\ket{1}\ket{\psi_1}$.
    Its expected total number of oracle calls is
    \begin{equation*}
        \frac{1}{\sqrt{2p_0}}+\frac{\sqrt{p_0}}{\sqrt{2}\,p}
        \leq \frac{1}{\sqrt{2}}\rbra*{\frac{1}{\sqrt{p_0}}+\frac{1}{\sqrt{p}}}.
    \end{equation*}
\end{theorem}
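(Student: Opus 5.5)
The plan is to run the fixed-strength loop with the strength tuned to the lower bound, and to evaluate its cost using the exact formula $Q(p,\kappa)=\frac1x+\frac{x}{2p}$ from \cref{thm:fixed_point_expected_cost}. The algorithm prepares $\cA\ket{0^n}$. It then alternates a weak measurement of fixed strength $\kappa_0=\kappa^*(\theta_0)=\frac{4\sqrt{2p_0}}{(1+\sqrt{2p_0})^2}$ with one Grover rotation after each failure, and halts on outcome $1$, at which point the register holds $\ket{1}\ket{\psi_1}$. Only $p_0$ enters the algorithm, through $\kappa_0$.

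The first step is to compute the parameter $x$ at this strength. We have $1-\kappa_0=\rbra*{\frac{1-\sqrt{2p_0}}{1+\sqrt{2p_0}}}^2$. Since $p_0<1/2$, the base is positive, so $\sqrt{1-\kappa_0}=\frac{1-\sqrt{2p_0}}{1+\sqrt{2p_0}}$. Substituting into $x=\frac{1-\sqrt{1-\kappa}}{1+\sqrt{1-\kappa}}$ gives $x_0=\sqrt{2p_0}$ exactly. I would also check that $0<\kappa_0\le 1$, which is the AM--GM inequality $4a\le(1+a)^2$ with $a=\sqrt{2p_0}$. This confirms that the measurement is a valid weak measurement.

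The second step is to invoke \cref{thm:fixed_point_expected_cost} at the true $p$. That theorem holds for every fixed $\kappa$ and every $p<1/2$, so we may use it even though the strength was selected from $p_0$. It gives the expected oracle cost $Q(p,\kappa_0)=\frac{1}{\sqrt{2p_0}}+\frac{\sqrt{2p_0}}{2p}=\frac{1}{\sqrt{2p_0}}+\frac{\sqrt{p_0}}{\sqrt2\,p}$, which is the stated equality. For the inequality, $p_0\le p$ gives $\frac{\sqrt{p_0}}{p}\le\frac{\sqrt p}{p}=\frac1{\sqrt p}$, and combining the two terms yields $\frac1{\sqrt2}\rbra*{\frac1{\sqrt{p_0}}+\frac1{\sqrt p}}$.

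The main obstacle is making sure \cref{thm:fixed_point_expected_cost} really applies uniformly. Its derivation via the discrete Lyapunov equation must require only $\kappa\in(0,1]$ and $p<1/2$, and it must not require $\kappa$ to equal the optimum for the actual $p$. I would also confirm that the loop terminates almost surely, which follows from finiteness of $Q$, so that the expectation is well defined. If that theorem were stated only at the optimum, I would redo its Lyapunov computation for general $\kappa$. The cost formula is stated there for a general strength, however, so I expect the argument to reduce to the substitution above.
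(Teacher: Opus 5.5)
Your proposal is correct and is essentially the paper's proof: identify the chosen strength with $x_0=\sqrt{2p_0}$, substitute it into the general fixed-strength cost formula $1/x+x/(2p)$ evaluated at the true $p$, and use $\sqrt{p_0}/p\le 1/\sqrt{p}$ for the inequality. One small precision is that the cost theorem is stated for $0<\kappa<1$ strictly, not $0<\kappa\le 1$. You already have this, since $\sqrt{1-\kappa_0}=(1-\sqrt{2p_0})/(1+\sqrt{2p_0})>0$ for $p_0<1/2$.
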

At $p = p_0$, the exact expected oracle cost is $\sqrt{2/p_0}$ and the expected number of Grover iterations is $1/\sqrt{2p_0}-1/2$. The leading ratio to standard Grover amplification is $2\sqrt{2}/\pi \approx 0.9003$, corresponding to a saving of approximately $10\%$.
More importantly, the expected cost automatically decreases when the true $p$ is larger than the promised lower bound.
Conventional fixed-point amplitude amplification~\cite{YLC14,GSL+19} instead chooses a predetermined query count from $p_0$ and the desired failure probability, and this count does not decrease when the actual $p$ is larger.

\paragraph{Adaptive schedule for unknown $p$.}
Our main result concerns the case in which $p$ is completely unknown.
We retain the weakly measured loop and choose the measurement strength in round $t\geq1$ using only the number $t-1$ of preceding failures.
Within the regular schedule classes specified in \cref{lem:kappa_order}, we identify
\begin{equation*}
    \kappa\rbra{t} = \Theta\rbra*{\frac{1}{t}}
\end{equation*}
as the critical decay scale for retaining the quadratic speedup.
This leads to the one-parameter family
\begin{equation*}
    \kappa_b\rbra{t} = \min\cbra*{\frac{1}{2},\frac{b}{t}}, \qquad t > 0
\end{equation*}
for measurement strengths, and the following result.

\begin{theorem}
    [Oracle counts for completely unknown $p$; {\cref{thm:adaptive_unknown_p}, restated}]
    Given query access to the oracle $\cA$ and its inverse $\cA^\dagger$ in \cref{eq:oracle_A}, for every fixed $b > 2$, there is a quantum algorithm that requires no information about $p$ and outputs the target state $\ket{1}\ket{\psi_1}$.
    If $Q_p\rbra{b}$ denotes its expected total number of oracle calls to $\cA$ (and $\cA^{\dagger}$), then
    \begin{equation*}
        \lim_{p \to 0}\sqrt{p}\, Q_p\rbra{b} = C\rbra{b} = \frac{b\sqrt{\pi}\Gamma\rbra*{\frac{b+2}{4}}^2\Gamma\rbra*{\frac{b-2}{2}}}{\rbra{b-1}\Gamma\rbra*{\frac{b}{4}}^2\Gamma\rbra*{\frac{b-1}{2}}},
    \end{equation*}
    where $\Gamma$ is the Gamma function~\cite[Chapter~12.4]{WW21}.
\end{theorem}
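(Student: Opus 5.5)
The plan is to track the \emph{unnormalised} state of the loop rather than the normalised angle, because for unnormalised amplitudes the whole process is linear. Write the state after the preparation and after $t$ failed rounds as $c_t\ket{0}\ket{\psi_0}+y_t\ket{1}\ket{\psi_1}$. The failure operator $M_0$ multiplies $y$ by $\sqrt{1-\kappa_b(t)}$, and the rotation $G$ of \cref{eq:grover_rotation} acts on $(c,y)$ by the rotation matrix $R_{2\theta}$. Hence $v_t=(c_t,y_t)^{\mathsf T}$ satisfies
\begin{equation*}
    v_t = R_{2\theta}\,D_t\,v_{t-1},\qquad D_t=\mathrm{diag}\rbra*{1,\sqrt{1-\kappa_b(t)}},\qquad v_0=(\cos\theta,\sin\theta)^{\mathsf T}.
\end{equation*}
The probability of surviving $t$ measurements is $S_t=\Abs{D_t v_{t-1}}^2=\Abs{v_t}^2$. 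Each survival costs one Grover iteration, so by the cost conventions the expected oracle cost is
\begin{equation*}
    Q_p(b)=1+2\sum_{t\geq1}S_t .
\end{equation*}
The task therefore reduces to the asymptotics of $\theta\sum_t\Abs{v_t}^2$ as $\theta\to0$.

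\emph{Step 1: continuum limit.} Use the rescaled time $s=\theta t$. For $t\geq 2b$ we have $\kappa_b(t)=b/t=b\theta/s$, so one step changes $v$ by $\theta\bigl(2J v-\tfrac{b}{2s}\,\mathrm{diag}(0,1)v\bigr)+O(\theta^2/s^2)$. This suggests the limiting linear system
\begin{equation*}
    c'=-2y,\qquad y'=2c-\frac{b}{2s}\,y .
\end{equation*}
The first $2b$ rounds, where $\kappa=1/2$, take $O(1)$ steps, i.e.\ scaled time $O(\theta)\to0$. During these rounds $v$ stays $O(\theta)$-close to $(1,0)$, so the limiting initial condition is $c(0)=1$, $y(0)=0$. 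Eliminating $y=-c'/2$ gives
\begin{equation*}
    c''+\frac{b}{2s}c'+4c=0 .
\end{equation*}
This is a Bessel equation: the regular solution is $c(s)=\Gamma(\nu+1)\,s^{-\nu}J_\nu(2s)$ with $\nu=(b-2)/4$. Using $J_\nu'=-J_{\nu+1}+\tfrac{\nu}{z}J_\nu$, one gets $y(s)=\Gamma(\nu+1)\,s^{-\nu}J_{\nu+1}(2s)$.

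\emph{Step 2: evaluate the integral.} The expected limit is
\begin{equation*}
    C(b)=2\int_0^\infty\rbra*{c(s)^2+y(s)^2}\,ds
    =2\Gamma(\nu+1)^2\int_0^\infty s^{-2\nu}\rbra*{J_\nu(2s)^2+J_{\nu+1}(2s)^2}\,ds .
\end{equation*}
Since $J_\mu(2s)^2=O(1/s)$, the integrand decays like $s^{-1-2\nu}=s^{-b/2}$, which is integrable exactly when $b>2$; this explains the hypothesis. Each term is a Weber--Schafheitlin integral $\int_0^\infty t^{-\lambda}J_\mu(t)^2\,dt$, which has a closed Gamma-function form. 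Summing the two terms with $\mu=\nu$ and $\mu=\nu+1$ and simplifying with the duplication formula should yield the stated expression. The factors $\Gamma\rbra{\tfrac{b+2}{4}}^2/\Gamma\rbra{\tfrac b4}^2$ come from $\Gamma(\nu+1)^2$ and the $\Gamma(\mu+\tfrac12-\tfrac\lambda2)$ terms, and $\Gamma\rbra{\tfrac{b-2}{2}}/\Gamma\rbra{\tfrac{b-1}{2}}$ comes from $\Gamma(\lambda)/\Gamma\rbra{\tfrac{\lambda+1}{2}}^2$-type factors. I would verify this algebra carefully but expect it to be routine.

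\emph{Step 3: rigorous passage to the limit (main obstacle).} On each compact window $s\in[0,T]$, standard Euler-scheme error estimates for the linear recursion give $\Abs{v_{\lfloor s/\theta\rfloor}}^2\to c(s)^2+y(s)^2$ uniformly. The delicate part is the tail: a bound $S_t\leq K(\theta t)^{-b/2}$ for $t\geq T/\theta$ that is uniform in $\theta$, so that dominated convergence applies to $\theta\sum_t S_t$.

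I would first prove this via a Lyapunov-type functional adapted to the Bessel equation. The natural candidate is the discrete analogue of the energy $E(s)=s^{b/2}\rbra{c^2+y^2}$, corrected by a cross term of order $cy/s$, chosen so that $E$ is almost nonincreasing along the discrete recursion up to a summable error. An alternative is to compare the discrete recursion with a WKB/averaging form: over one period of rotation the squared norm decays on average by the factor $1-\kappa_b(t)/2$, so
\begin{equation*}
    \prod_{u\leq t}\rbra*{1-\frac{b}{2u}}\approx t^{-b/2}
\end{equation*}
after controlling the oscillatory remainders with a summation-by-parts argument.

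Either bound makes the tail $\theta\sum_{t>T/\theta}S_t=O(T^{1-b/2})$, which vanishes as $T\to\infty$ precisely because $b>2$. Combined with the window convergence, this gives $\sqrt p\,Q_p(b)\to C(b)$, using $\theta/\sqrt p\to1$.
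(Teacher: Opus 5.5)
Your skeleton is the paper's: a linear recursion for the unnormalised state, the continuum limit $c'=-2y$, $y'=2c-\frac{b}{2s}y$ with $c(0)=1$, uniform convergence on compact windows, an algebraic tail bound, and evaluation of $2\int_0^\infty (c^2+y^2)\,ds$. Using the physical state $v_t$ instead of the paper's rotated $u_t=R_\theta^{\dagger}v_t$ changes nothing.

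The genuinely different part is Step 2. The paper never writes the solution in closed form. It proves existence of the regular solution by a contraction argument at the singular point. It then extends $x$ evenly and Fourier-transforms $\tau x''+\frac{b}{2}x'+4\tau x=0$ into a first-order equation, which gives $\hat x\propto(4-\omega^2)_+^{(b-4)/4}$. Finally it evaluates $\int_0^\infty(x^2+\frac14 x'^2)\,d\tau$ by Plancherel and two Beta integrals.

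Your Bessel solution is the same object, since that Fourier transform is a multiple of $\tau^{-\nu}J_\nu(2\tau)$, and your route does close. With $\lambda=2\nu$, the Weber--Schafheitlin formula gives $\int_0^\infty s^{-2\nu}J_\mu(2s)^2\,ds=\frac{\Gamma(2\nu)\,\Gamma(\mu-\nu+\frac12)}{2\,\Gamma(\nu+\frac12)^2\,\Gamma(\mu+\nu+\frac12)}$. The cases $\mu=\nu$ and $\mu=\nu+1$ contribute $\Gamma(\frac12)$ and $\Gamma(\frac32)$, and they combine through $1+\frac{1}{4\nu+1}=\frac{b}{b-1}$ to exactly $C(b)$. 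So $\Gamma(\frac b4)^2$ comes from $\Gamma(\nu+\frac12)^2$, $\Gamma(\frac{b-1}{2})$ comes from $\Gamma(2\nu+\frac12)$, and no duplication formula is needed; your attribution of the factors is slightly off. Your route gives for free the existence, the uniqueness (the other Frobenius exponent $1-\frac b2$ is negative), the continuity at $0$, and the $s^{-b/2}$ decay of the limiting solution. The cost is citing a tabulated Bessel integral instead of elementary Beta integrals.

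Step 3 needs two repairs.

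First, standard Euler estimates do not apply on $[0,T]$. The coefficient $-\frac{b}{2s}$ is unbounded, of size about $\frac{1}{4\theta}$ at $s=2b\theta$. The one-step consistency error you wrote, $O(\theta^2/s^2)=O(1/t^2)$, sums over $2b\le t\le\delta/\theta$ to $O(1)$, not $o(1)$. Controlling only the first $2b$ rounds covers scaled time $O(\theta)$, not this boundary layer. Handle it as the paper does, in three steps:
\begin{itemize}
\item For $t\le\lceil\delta/\theta\rceil$, bound $v_t-e_0$ directly. A failed measurement only decreases the angle, so the angle of $v_t$ is at most $(2t+1)\theta=O(\delta)$. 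Also $1-S_t\le\sum_{j\le t}\kappa_b(j)\sin^2\alpha_j\le b\sum_j\alpha_j^2/j=O(b\delta^2)$.
\item Run the Euler comparison only on $[\delta,T]$, where the coefficient is bounded and Lipschitz. The initial $O(\delta)$ error does not grow, because every one-step map has norm at most $1$.
\item Let $\delta\to0$, using continuity of your explicit solution at $0$.
\end{itemize}

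Second, for the tail you do not need the sharp exponent $b/2$. Chasing it forces exact control of the accumulated angle drift. Any $p\in(1,\frac b2)$ suffices, and your averaging idea gives it directly; this is exactly the paper's argument. Work on blocks of $q\asymp1/\theta$ steps starting at $T\gtrsim b^2q$. The bound $\abs{\arctan(\sqrt{1-x}\tan\alpha)-\alpha}\le x$ for $x\le\frac12$ keeps the angles within $bq^2/(2T)$ of a free rotation. Hence $\sum_j\sin^2\alpha_{T+j+1}\ge\frac q2-\frac{bq^2}{2T}-\frac{1}{\sin 2\theta}$, so each block multiplies $S$ by at most $\exp(-pq/(T+q))$. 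Chaining blocks yields $S_t\le C(t\theta)^{-p}$ uniformly in $\theta$.

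Your Lyapunov alternative does work in the continuum: $F=s^{b/2}(c^2+y^2)-\frac b4 s^{b/2-1}cy$ satisfies $F'=\frac b4 s^{b/2-2}cy$, so $F$ stays bounded. Its discrete version, however, requires showing that the $O(\theta)$-per-step discretisation errors cancel, which is harder than the block argument.
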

This gives an explicit leading constant for a weakly measured search loop that requires no prior knowledge of $p$.
In particular, $C\rbra{4} = 2\pi/3$, and numerical minimisation gives a minimum near $b \approx 5.2$, where $C\rbra{5.2} \approx 2.02$.
In Grover iterations, the leading constant is $C\rbra{b}/2$, approximately $1.01$ at this choice.
The oracle constant $C\rbra{5.2}$ is about $1.28$ times the $\pi/2$ leading oracle constant for standard Grover amplification when $p$ is known.
Since the standard Grover assumes knowledge of $p$, the more relevant comparison in the unknown-$p$ setting is with algorithms that also achieve $O\rbra{1/\sqrt{p}}$ without such knowledge.
Many weak-measurement, dissipative, and related search schemes already achieve this scaling; \emph{the distinction here is the leading constant factor analysis and the closed-form oracle constant for this discrete weakly measured loop}.

\subsection{Techniques and proof overview}
All three analyses exploit the fact that a failed weak measurement preserves the two-dimensional Grover geometry.
If the state immediately before measurement is
\begin{equation*} 
    \ket{\phi_{\alpha}}
    = \cos\rbra{\alpha}\ket{0}\ket{\psi_0} + \sin\rbra{\alpha}\ket{1}\ket{\psi_1},
\end{equation*}
then the failure branch of a weak measurement of strength $0 < \kappa < 1$ has angle
\begin{equation*}
    \beta = \arctan\rbra*{\sqrt{1-\kappa}\tan\rbra{\alpha}}.
\end{equation*}
A round of the algorithm is therefore a one-dimensional return map on the Grover angle, together with a probability of being killed by a successful outcome.
Expected oracle costs can consequently be written as sums of survival probabilities.

\paragraph{Exact return maps.}
For known $p$, we prescribe two angles
$0 < \beta < \alpha < \pi/2$ and invert the failure map for the angle:
\begin{equation*}
    \kappa\rbra{\alpha,\beta}
    =
    1-\frac{\tan^2\rbra{\beta}}{\tan^2\rbra{\alpha}}.
\end{equation*}
Choosing $\alpha-\beta = 2\theta$ makes one Grover iterate return the failure branch from $\beta$ to $\alpha$.
The loop then has a geometric survival law, so its expected cost is obtained exactly.
A continuous relaxation of this checkpoint construction yields the lower bound
\begin{equation*}
    E\rbra{\alpha,\beta} > \frac{\pi}{4}+\frac{1}{2},
\end{equation*}
whose infimum is approached as
$\alpha,\beta \to \pi/4$.
Here $E\rbra{\alpha,\beta}$ is the expected rotation measured from angle $0$ in the continuous relaxation.
Rounding the continuous checkpoint to the discrete Grover grid gives the exact schedule and the oracle bound in \cref{thm:exact_known_p}.

\paragraph{A Lyapunov equation for fixed strength.}
For the lower-bounded regime, exact angle tracking is unavailable.
A failed measurement followed by a Grover rotation acts as $R_\theta D_\kappa$ on the physical state, where $R_\theta$ is the Grover rotation and $D_\kappa$ is the measurement operator of the failure branch of the weak measurement.
For the analysis, we use coordinates rotated by $R_\theta^{-1}$, in which the unnormalised failure dynamics is
\begin{equation*}
    A_{\theta,\kappa} = D_\kappa R_\theta.
\end{equation*}
The prepared state has rotated coordinates $u_0 = \rbra*{\cos\rbra{\theta},-\sin\rbra{\theta}}^{\dagger}$.
The survival probability after $m$ failed measurements is $\Abs{A_{\theta,\kappa}^m u_0}^2$, so the expected cost is determined by
\begin{equation*}
    Q_{\theta,\kappa} = \sum_{m \geq 0}
    \rbra[\big]{A_{\theta,\kappa}^m}^{\dagger}A_{\theta,\kappa}^m.
\end{equation*}
The matrix $A_{\theta,\kappa}$ is Schur stable (that is, all eigenvalues lie inside the unit disc), and $Q_{\theta,\kappa}$ is the unique solution of the discrete Lyapunov equation~\cite{BCS18}
\begin{equation*}
    Q_{\theta,\kappa} - A_{\theta,\kappa}^{\dagger}
    Q_{\theta,\kappa}A_{\theta,\kappa} = I.
\end{equation*}
Solving this equation gives the full expected oracle cost $2u_0^\dagger Q_{\theta,\kappa}u_0-1$, including preparation.
The subtraction accounts for the absence of a Grover rotation after the successful measurement.
An arithmetic--geometric mean argument then gives the closed-form optimiser $\kappa^*\rbra{\theta}$.
Substituting the worst-case angle $\theta_0$ and exploiting monotonicity in the true angle yields the lower-bounded \cref{thm:fixed_point_lower_bounded}.
\paragraph{A continuous limit for unknown $p$.}
For a time-dependent strength, the survival probability is a product of conditional failure probabilities.
Combining this product with the angle recurrence identifies $\kappa\rbra{t} = \Theta\rbra{1/t}$ as the critical scale for a constant probability of success within $\floor*{\pi/\rbra{8\theta}}$ rounds as $p \to 0$, among the three asymptotic regimes treated by \cref{lem:kappa_order}.
This motivates $\kappa_b\rbra{t} = \min\cbra*{\frac{1}{2},\frac{b}{t}}$.

To determine the leading constant, we fix $b > 2$ and rescale discrete time by $\tau = t\theta$.
As $\theta \to 0$, the unnormalised two-dimensional recurrence $u_{t+1} = A_{\theta, \kappa_b\rbra{t+1}}u_t$ converges to the singular differential equation
\begin{equation*}
    z'\rbra{\tau} = \begin{pmatrix}
        0 & -2 \\
        2 & -\dfrac{b}{2\tau}
    \end{pmatrix}
    z\rbra{\tau} \quad \text{with} \quad z\rbra{0} = \begin{pmatrix}
        1\\0
    \end{pmatrix}.
\end{equation*}
The proof has three components.
We first construct the regular solution at the singular point $\tau = 0$.
We then prove uniform convergence of the discrete dynamics to this solution on compact time intervals.
Finally, an algebraic tail bound permits extension from compact intervals to the entire half-line.
Consequently,
\begin{equation*}
    \lim_{\theta \to 0}\theta \sum_{t \geq 0} \Abs*{u_t}^2 = \int_{0}^{\infty} \Abs*{z\rbra{\tau}}^2\mathrm{d}\tau.
\end{equation*}
Eliminating the second component of the differential equation gives a second-order differential equation.
We solve it in Fourier space and evaluate the
limiting integral using the Parseval--Plancherel identity together with Beta- and Gamma-function identities.
This calculation produces the closed form $C\rbra{b}$ in \cref{thm:adaptive_unknown_p}.

\subsection{Related work}
Grover search~\cite{Grover96} and amplitude amplification~\cite{BHM+02} provide the basic rotation framework used throughout this paper.
Exact amplification methods adjust the phases of reflections when the target probability is known~\cite{Hoyer00,Long01,BHM+02}.
For unknown target probability, Boyer, Brassard, H{\o}yer, and Tapp introduced randomised iteration counts and restart strategies that retain the quadratic speedup~\cite{BBH+98}.
Fixed-point amplification instead seeks robustness over a promised interval $p \geq p_0$.
Earlier fixed-point approaches combine amplification with measurements~\cite{AC12}.
The optimal phase schedule of Yoder, Low, and Chuang~\cite{YLC14} and later formulations through quantum singular-value transformation~\cite{GSL+19} achieve robustness by modifying the reflection phases.
In contrast, the algorithms in this paper keep the unitary part fixed to the ordinary Grover rotation and use the measurement strength, rather than reflection phases or stopping times, as the control design.

Our work belongs to a line in which measurement, damping, or open-system evolution is used as part of the search mechanism.
\emph{Critically damped quantum search}~\cite{Mizel09} introduced damping into Grover dynamics and chose the damping strength through a spectral analysis.
In contrast, the optimisation in the proof of \cref{thm:fixed_point_expected_cost} shows that the expected-cost-optimal strength is not the critically damped value, where the eigenvalues are all real~\cite{Mizel09}, but a strength for which the relevant eigenvalues remain complex.
The closest prior work to our unknown-$p$ result is Wang's dissipative quantum search, developed in Chapter~6 of \emph{Computational Problems Related to Open Quantum Systems}~\cite{Wang18}.
Wang gives an open-system search procedure that does not require prior knowledge of $p$ and proves $O\rbra{1/\sqrt{p}}$ query complexity for fixed error, but does not evaluate the asymptotic expected-cost constant considered here.
These works established that controlled nonunitary evolution can remove overcooking without destroying the quadratic speedup.

Weakly measured loops have also been studied from the viewpoint of quantum programming.
\emph{Weakly measured while loops: peeking at quantum states}~\cite{AMH22} investigates while loops in which intermediate weak measurements reveal information without fully collapsing the program state, while preserving the quantum speedup.
That work does not claim an algorithmic advantage over restart-based loops.
Its contribution is instead to the semantics and analysis of weakly measured quantum loops.
\emph{Differentiable quantum programming with unbounded loops}~\cite{FYW23} develops automatic differentiation methods for such quantum loop programs and uses gradient-based optimisation to search for effective measurement strengths.
Its numerical experiments indicate that weak-measurement-assisted Grover loops can outperform the standard Grover constant in suitable regimes.
The distinction of the present work is the exact strength optimisation and leading constant factor analysis for the weakly measured loops, rather than the existence of weakly measured loops themselves.

There is also a broader open-system and Zeno-effect literature showing that nonunitary dynamics can support search speedups.
Avron, Fraas, Graf, and Grech~\cite{AFG+10} studied dephasing Lindblad evolutions and showed that, with suitable scheduling and dephasing rates, Grover scaling can be recovered in a dephasing adiabatic search setting.
Pyshkin et al.~\cite{PGL+22} studied a nonunitary continuous-time Grover variant based on frequent Zeno-type measurements, deriving analytical bounds and discussing a tunable tradeoff between quantum and classical behavior.
More recently, \emph{Grover Speedup from Many Forms of the Zeno Effect}~\cite{BCD24} showed that several analogue Zeno mechanisms, including measurement, decoherence, and decay, can support Grover-like $\sqrt{N}$ speedups.
These results support the same broad viewpoint as ours: measurement and dissipation need not be treated only as noise.
They can also be algorithmic resources.
The models, however, are mostly continuous-time or analogue models, and there is no explicit constant factor indicating oracle savings over standard Grover amplification.

Our work also fits into the general theory of weak measurement.
Since the weak-measurement and weak-value formalism of Aharonov, Albert, and Vaidman~\cite{AAV88}, weak measurement has been viewed as a tunable tradeoff between information-gain and disturbance: one obtains partial information about a system while disturbing the state only partially.
This tradeoff has been quantified in several forms~\cite{FP96,FJ01,Busch09,CL12} and examined in the context of gentle measurements~\cite{AR19}.
Weak measurements also appear naturally in quantum feedback control~\cite{LJ00}, and in the theory of quantum trajectories, where measurement-conditioned evolution provides a stochastic description of open-system dynamics~\cite{B02}.
More broadly, weak measurements and measurement-conditioned nonunitary updates have recently appeared in algorithmic primitives beyond search.
In Gibbs sampling, weak measurement, for simulating a Lindbladian, has been used to implement quantum Metropolis-type updates that extract partial energy information without fully destroying coherence~\cite{JS24}.
Related developments appear in algorithms based on quantum trajectories and repeated-interaction models for simulating Lindbladian dynamics, where measurement-like updates are interleaved with coherent evolution to realize nonunitary processes efficiently~\cite{PSW25,BM25,BM26}.
These approaches share a common principle: weak measurements are used not merely for readout, but as an integral computational resource that allows partial progress to be retained and refined across iterations.
Our work brings this perspective into amplitude amplification in the discrete oracle model, providing explicit strength schedules and closed-form analyses of oracle cost in the known-$p$, lower-bounded-$p$, and unknown-$p$ regimes.

\subsection{Discussion and outlook}
The results in this paper concern amplitude amplification, but the underlying message is broader.
We use weak measurements as a control mechanism for an iterative quantum process whose coherent dynamics would otherwise over-rotate.
In the Grover setting, this control problem is one-dimensional after restricting to the usual invariant plane, and the measurement strength can be chosen explicitly.
This raises the natural question of whether similar adaptive measurement schedules can be used in other quantum algorithms with a repeated rotation, hitting, or amplification structure.

One immediate direction is quantum walk search.
Many quantum walk algorithms can be understood through an effective low-dimensional dynamics involving a marked subspace, a spectral gap, or a hitting-time scale.
When this scale is known, one can often choose a suitable running time; when it is unknown, the situation is analogous to amplitude amplification with unknown $p$.
It would be interesting to ask whether weak measurements of the marked subspace can replace or improve the usual choice of stopping time.
In particular, one may hope for adaptive measurement schedules that depend only on the number of failures and not on the unknown spectral gap or marked-state overlap.

A second direction is to study loops of quantum channels more systematically.
In the fixed-strength case, the expected cost is governed by a Lyapunov equation for the failure map.
In the adaptive case, the leading constant is obtained from a continuous limit of a time-dependent product of nonunitary maps.
These two tools suggest a general framework for analysing expected hitting times of quantum-channel loops, beyond the special two-dimensional Grover plane.

A third direction is optimal schedule design.
For unknown $p$, we identify $\kappa(t)=\Theta\rbra{1/t}$ as the critical scale within the regular schedules of \cref{lem:kappa_order}, and evaluate the leading constant for the family $\kappa_b(t)=\min\cbra{\frac{1}{2},\frac{b}{t}}$.
This family is simple and analytically tractable, but we do not prove that it is globally optimal among all adaptive schedules.
A more complete theory would formulate the choice of measurement strengths as an optimal-control problem for the limiting dynamics, possibly allowing time-dependent schedules more general than $b/t$.

Finally, our work suggests a different way to think about quantum algorithm design with intermediate measurements.
Weak measurements need not be viewed only as noise, nor only as a way to extract partial information.
They can also shape the future trajectory of the quantum state.
In amplitude amplification, this turns Grover search from a fixed-length rotation into a measurement-conditioned loop with provable constant-factor oracle savings in the known-$p$ regime and at the promise boundary $p = p_0$ in the lower-bounded regime.
Understanding when similar savings are available for other quantum algorithms remains an open problem.

\subsection{Organisation}
The remainder of the paper is organised as follows.
\cref{sec:pi_over_4_and_weak_measurement} illustrates the $\pi/4$ growth observation and introduces the weak measurement used throughout the paper.
\cref{sec:exact_for_known_p} studies the known-$p$ case, first through a continuous-angle heuristic and then through an exact discrete schedule.
\cref{sec:fixed_point_for_lower_bounded_p} derives the expected cost for a fixed strength, optimises that strength, and establishes the expected-cost bound for $p\geq p_0$.
Finally, \cref{sec:adaptive_for_unknown_p} treats completely unknown $p$, develops an adaptive weak measurement schedule, identifies the critical $\Theta(1/t)$ scale within the regular schedule classes considered there, evaluates the asymptotic constant $C(b)$, and compares the resulting schedule numerically with previous weakly measured and dissipative search procedures.

\section{The \texorpdfstring{$\pi/4$}{π/4} observation and weak measurements}\label{sec:pi_over_4_and_weak_measurement}
Quantum amplitude amplification rotates the initial state $\ket{\phi_\theta} = \cA\ket{0^n}$ from the angle $\theta$ to nearly $\pi/2$, achieving the target state with high probability after $O(1/\sqrt{p}) = O(1/\theta)$ iterations.
More precisely, the number of iterations required is approximately $\pi/(4\theta) - 1/2$. Consequently, one can apply $G$ approximately $\pi/(4\theta)$ times to bring the state close to $\ket{1}\ket{\psi_1}$.

\paragraph{The \texorpdfstring{$\pi/4$}{pi/4} growth region.}
Now consider the state after $k$ iterations:
\begin{equation*}
    G^k\cA\ket{0^n} = \ket{\phi_{\rbra{2k+1}\theta}}.
\end{equation*}
Let $P(k) = \sin^2\rbra{\rbra{2k+1}\theta}$ denote the probability of measuring the target state after $k$ iterations.
Extending $k$ to a real variable $x$, we consider the smooth function
\begin{equation*}
    P(x) = \sin^2\rbra{\rbra{2x+1}\theta}, \qquad x \in \sbra*{0, \tfrac{\pi}{4\theta} - \tfrac{1}{2}}.
\end{equation*}
Its derivative is $P'(x) = 2\theta\sin\rbra{2\rbra{2x+1}\theta}$.
As illustrated in \cref{fig:probability_derivative}, as $x$ increases from $0$ to $\frac{\pi}{4\theta}-\frac{1}{2}$, $P'(x)$ first increases, reaches a maximum, and then decreases.
The maximum occurs at the point $x^*$ satisfying $(2x^*+1)\theta = \pi/4$.
\begin{figure}[htb]
    \centering
    \begin{tikzpicture}
        \begin{axis}[
            scale=.65,
            clip=false,
            xmin=0,xmax=pi/2+.1,
            ymin=0,ymax=1.1,
            axis lines=left,
            ytick={0,0.3,1},
            yticklabels={$0$, $2\theta$, $1$},
            xtick={0,pi/4,pi/2},
            xticklabels={$0$,$x^*$,$\tfrac{\pi}{4\theta} - \tfrac{1}{2}$}]
            \addplot[domain=0:pi/2,samples=200] {sin(deg(x))^2} node[below right,pos=0.8,overlay]{$P(x) = \sin^2\rbra{\rbra{2x+1}\theta}$};
            \addplot[thick,domain=0:pi/2,samples=200,blue] {0.3*sin(2*deg(x))} node[above right,pos=0.6,overlay]{$P'(x) = 2\theta\sin\rbra{2\rbra{2x+1}\theta}$};
            \begin{scope}[dashed,gray]
                \draw (axis cs:pi/4, 1) -- (axis cs:pi/4,0);
                \draw (axis cs:pi/2, 1) -- (axis cs:pi/2,0);
                \draw (axis cs:0, 1) -- (axis cs:pi/2,1);
                \draw (axis cs:0, 0.3) -- (axis cs:pi/4,0.3);
            \end{scope}
        \end{axis}
    \end{tikzpicture}
    \caption{The target probability $P(x)$ and its derivative $P'(x)$, whose maximum $x^*$ satisfies $(2x^*+1)\theta = \pi/4$.}\label{fig:probability_derivative}
\end{figure}

Thus, the Grover operator has the greatest impact on the target probability when the angle is near $\pi/4$, i.e., when $(2x+1)\theta \approx \pi/4$.
When the state is far from this angle, the effect of each Grover iteration becomes significantly less beneficial.
\emph{This observation suggests keeping the angle close to $\pi/4$ throughout the amplification process by weak measurements~\cite{B02}, thereby maximising the effectiveness of each application of $G$}.

\paragraph{Limit the angle to near $\pi/4$ using weak measurements.}
Our weak measurement always acts on the first qubit and has the form
\begin{equation}
    M = \cbra*{M_0 = \ketbra{0}{0}+\sqrt{1-\kappa}\ketbra{1}{1}, M_1 = \sqrt{\kappa}\ketbra{1}{1}}, \qquad 0<\kappa \leq 1,
\end{equation}
where $\kappa$ controls the measurement strength.
For a state $\ket{\phi_\alpha}$ with angle $\alpha \in [0,\pi/2]$:
\begin{itemize}
    \item the $M_1$ branch corresponds to a successful measurement, occurring with probability
    \begin{equation}\label{eq:success_p}
        p_1 = \bra{\phi_\alpha}M_1^{\dagger}M_1\ket{\phi_\alpha} = \kappa\sin^2\rbra{\alpha},
    \end{equation}
    collapsing the state into the desired state $M_1\ket{\phi_\alpha}/\sqrt{p_1} = \ket{1}\ket{\psi_1}$, and terminating the process;
    \item the $M_0$ branch corresponds to a failure, occurring with probability
    \begin{equation}\label{eq:failure_p}
        p_0 = \bra{\phi_\alpha}M_0^{\dagger}M_0\ket{\phi_\alpha} = 1-\kappa\sin^2\rbra{\alpha},
    \end{equation}
    and leading to the post-measurement state
    \begin{equation}\label{eq:failure_state}
        M_0\ket{\phi_\alpha}/\sqrt{p_0} = \ket*{\phi_{\arctan\rbra*{\sqrt{1-\kappa}\tan\rbra{\alpha}}}} \propto \cos\rbra{\alpha}\ket{0}\ket{\psi_0}+\sqrt{1-\kappa}\sin\rbra{\alpha}\ket{1}\ket{\psi_1}.
    \end{equation}
    Unlike projective measurements, the state does not collapse directly to  $\ket{0}\ket{\psi_0}$.
    Instead, if $\kappa$ is small, it retains useful geometric information about the original angle $\alpha$.
\end{itemize}

To illustrate the benefit of weak measurements as simply as possible, suppose we have rotated the angle to $\pi/3$ using Grover iterations and choose $\kappa = 2/3$.
The success branch $M_1$ then occurs with probability $\kappa\sin^2\rbra{\pi/3} = 1/2$, while the failure branch $M_0$ also occurs with probability $1/2$ and, via \cref{eq:failure_state}, results in restoring the state with angle $\pi/4$, since
\begin{equation*}
    \ket{\phi_{\pi/4}} \propto \cos\rbra{\pi/3}\ket{0}\ket{\psi_0}+\sqrt{1/3}\sin\rbra{\pi/3}\ket{1}\ket{\psi_1} = \frac{1}{2}\ket{0}\ket{\psi_0} + \frac{1}{2}\ket{1}\ket{\psi_1}.
\end{equation*}
If we repeat the above procedure at the angle $\pi/3$ until success, and use the rotation angle as a cost metric, the expected cost follows a geometric series:
The first attempt costs $\pi/3$, and each subsequent failure requires rotating by an additional angle of $\pi/3 - \pi/4 = \pi/12$ using Grover iterations to restore the checkpoint before retrying.
Hence, the expected cost is
\begin{equation}\label{eq:cost_pi_over_3}
    \frac{\pi}{3}+\sum_{m\geq 1}\rbra*{\frac{1}{2}}^m \frac{\pi}{12} = \frac{5\pi}{12} = \frac{5}{6}\cdot \frac{\pi}{2}.
\end{equation}
This already improves upon the standard $\pi/2$ cost associated with rotating the state from near $0$ to nearly $\pi/2$ using Grover iterations alone.
This example is simple, yet it accurately demonstrates the advantage of employing weak measurements in amplitude amplification.

\section{Exact schedule for known \texorpdfstring{$p$}{p}}\label{sec:exact_for_known_p}
We start with the most favorable setting, where the target probability $p$ is known exactly.

\subsection{The continuous-angle heuristic}\label{sec:continuous_heuristic}
{We now generalise the construction in \cref{sec:pi_over_4_and_weak_measurement} by allowing the checkpoint and failure angles to vary continuously, while using the same measurement strength on each retry.
This heuristic measures the initial rotation from angle $0$ to identify the leading cost as $\theta \to 0$.
Suppose more generally that the state has been rotated to an angle $\alpha \in (0,\pi/2]$, and the weak measurement is chosen so that the failure branch results in a state at a smaller angle $\beta \in \rbra{0, \alpha}$.
At $\alpha = \pi/2$, a projective measurement succeeds with certainty, with rotation cost $\pi/2$; below we therefore consider $\alpha < \pi/2$.
Then \cref{eq:failure_state} gives $\tan\rbra{\beta} = \sqrt{1-\kappa}\tan\rbra{\alpha}$, and  the required measurement strength $\kappa$ is
\begin{equation}\label{eq:kappa_alpha_beta}
    \kappa\rbra{\alpha, \beta} = 1 - \frac{\tan^2\rbra{\beta}}{\tan^2\rbra{\alpha}}.
\end{equation}
Using \cref{eq:failure_p,eq:success_p}, the probabilities of failure and success branches become
\begin{equation*}
    p_0\rbra{\alpha,\beta} = 1-\kappa\rbra{\alpha,\beta}\sin^2\rbra{\alpha} = 1- \sin^2\rbra{\alpha}+\frac{\cos^2\rbra{\alpha}\sin^2\rbra{\beta}}{\cos^2\rbra{\beta}} = \frac{\cos^2\rbra{\alpha}}{\cos^2\rbra{\beta}} \text{ and }
    p_1\rbra{\alpha,\beta} = 1 - \frac{\cos^2\rbra{\alpha}}{\cos^2\rbra{\beta}},
\end{equation*}
respectively.
Analogously to \cref{eq:cost_pi_over_3}, the expected rotation angle becomes
\begin{equation}\label{eq:cost_alpha_beta}
    E\rbra{\alpha, \beta} = \alpha + \sum_{m \geq 1} \rbra{p_0\rbra{\alpha,\beta}}^m\rbra{\alpha - \beta} = \alpha + \frac{p_0\rbra{\alpha,\beta}}{p_1\rbra{\alpha,\beta}}\rbra{\alpha - \beta} = \alpha + \frac{\cos^2\rbra{\alpha}\rbra{\alpha - \beta}}{\cos^2\rbra{\beta} - \cos^2\rbra{\alpha}}.
\end{equation}
Since $\cos^2\rbra{\beta} - \cos^2\rbra{\alpha} = \sin^2\rbra{\alpha}\cos^2\rbra{\beta} - \cos^2\rbra{\alpha}\sin^2\rbra{\beta} = \sin\rbra{\alpha+\beta}\sin\rbra{\alpha - \beta}$, we have
\begin{equation*}
    E\rbra{\alpha, \beta} = \alpha + \frac{\cos^2\rbra{\alpha}\rbra{\alpha - \beta}}{\sin\rbra{\alpha+\beta}\sin\rbra{\alpha - \beta}} > \alpha + \frac{\cos^2\rbra{\alpha}}{\sin\rbra{\alpha+\beta}},
\end{equation*}
where the inequality follows from $\rbra{\alpha-\beta}/{\sin\rbra{\alpha - \beta}} > 1$ for $\alpha-\beta > 0$.
We now distinguish two cases:
\begin{itemize}
    \item If $0 < \alpha \leq \pi/4$, then $0 < \alpha+\beta < 2\alpha \leq \pi/2$, so $E\rbra{\alpha,\beta} > \alpha+\frac{\cos^2\rbra{\alpha}}{\sin\rbra{2\alpha}} = \alpha+\frac{1}{2}\cot\rbra{\alpha}$. Define $f\rbra{\alpha} = \alpha+\frac{1}{2}\cot\rbra{\alpha}$. Its derivative is $f'\rbra{\alpha} = 1 - \frac{1}{2}\csc^2\rbra{\alpha} \leq 0$ for $0 < \alpha \leq \pi/4$, so $f$ is decreasing on this interval. Thus $f\rbra{\alpha} \geq f\rbra{\pi/4} = \pi/4+1/2$. Hence $E\rbra{\alpha,\beta} > \pi/4+1/2$.
    \item If $\pi/4 \leq \alpha < \pi/2$, then $\sin\rbra*{\alpha+\beta} \leq 1$, so $E\rbra{\alpha,\beta} > \alpha + \cos^2\rbra{\alpha}$. Define $g\rbra{\alpha} = \alpha+\cos^2\rbra{\alpha}$. Its derivative is $g'\rbra{\alpha} = 1 - \sin\rbra{2\alpha} \geq 0$ for $\pi/4 \leq \alpha \leq \pi/2$, so $g$ is increasing on this interval. Hence $g\rbra{\alpha} \geq g\rbra{\pi/4} = \pi/4+1/2$, and again $E\rbra{\alpha, \beta} > \pi/4+1/2$.
\end{itemize}
Therefore, for all admissible $\rbra{\alpha, \beta}$,
\begin{equation}\label{eq:expected_angle_lower_bound}
    E\rbra{\alpha,\beta} > \frac{\pi}{4}+\frac{1}{2}.
\end{equation}
Taking $\alpha = \pi/4$ and letting $\beta \to \alpha^-$, we obtain the limiting infimum
\begin{equation*}
    E\rbra{\pi/4,\beta} \to \frac{\pi}{4}+\frac{1}{2}.
\end{equation*}
Normalised by the standard Grover rotation angle $\pi/2$, this corresponds to
\begin{equation}\label{eq:inf_ratio}
    \frac{\pi/4+1/2}{\pi/2} = \frac{2+\pi}{2\pi} \approx 0.818309886,
\end{equation}
which implies that employing weak measurements reduces the limiting expected rotation angle by roughly $18\%$.
This calculation is motivational and validates our $\pi/4$ observation: rotating the state into the neighbourhood of $\pi/4$ is beneficial and near-optimal within this family of repeated checkpoint measurements.

\subsection{The exact weak measurement schedule}
\label{sec:exact_schedule}

Building on the analysis in \cref{sec:continuous_heuristic}, we now present \cref{algo:exact_known_p} for the case where the target probability $p$ is known.
\cref{thm:exact_known_p} gives the oracle count of this algorithm.

\begin{algorithm*}[ht]
    \caption{Quantum amplitude amplification with an exact weak measurement schedule for known $p$.}
    \label{algo:exact_known_p}
    \begin{algorithmic}[1]
        \Require A quantum oracle $\cA$ (with inverse $\cA^{\dagger}$) satisfying \cref{eq:oracle_A}, along with a target probability $0 < p \leq \sin^2\rbra{\frac{\pi}{8}}$.

        \Ensure The target state $\ket{1}\ket{\psi_1}$.
        
        \State Let $G$ be the Grover rotation as in \cref{eq:grover_operator}.

        \State $\theta \gets \arcsin\rbra{\sqrt{p}}$, $K \gets \ceil*{\frac{\pi}{8\theta} - \frac{1}{2}}$, $\alpha^* \gets \rbra{2K+1}\theta$, $\beta^* \gets \rbra{2K-1}\theta$, $\kappa^* \gets 1 - \frac{\tan^2\rbra{\beta^*}}{\tan^2\rbra{\alpha^*}}$.

        \State Initialise an $n$-qubit register to the state $\cA\ket{0^n}$.
        \State Apply $G^{K}$ to the register. \Comment{Register in state $\ket{\phi_{\alpha^*}}$}

        \Repeat
            \State Perform the weak measurement $\cbra*{M_0 = \ketbra{0}{0}+\sqrt{1-\kappa^*}\ketbra{1}{1}, M_1 = \sqrt{\kappa^*}\ketbra{1}{1}}$ on the first qubit.
            \If {outcome $= 1$}
                \State \Return \Comment{Success: register in state $\ket{1}\ket{\psi_1}$}
            \Else \Comment{Register in state $\ket{\phi_{\beta^*}}$}
                \State Apply $G$ to the register.
            \EndIf
        \Until{success}
    \end{algorithmic}
\end{algorithm*}

\begin{theorem}
    [Oracle counts for known $p$]\label{thm:exact_known_p}
    For any target probability $0 < p \leq \sin^2\rbra{\frac{\pi}{8}}$, let $\theta = \arcsin\rbra{\sqrt{p}}$.
    The expected number of calls to $\cA$ (and $\cA^{\dagger}$) required by \cref{algo:exact_known_p} is less than
    \begin{equation*}
        \rbra*{\frac{\pi}{4}+\frac{2\theta}{\sin\rbra{4\theta}}+\theta}\cdot \frac{1}{\theta}.
    \end{equation*}
\end{theorem}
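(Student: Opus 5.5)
We count oracle calls directly from the structure of \cref{algo:exact_known_p}. The algorithm makes one call to prepare $\cA\ket{0^n}$ and $2K$ calls for $G^K$. Each failed measurement then triggers one further Grover iteration, costing two calls. By construction $\kappa^* = \kappa\rbra{\alpha^*,\beta^*}$ from \cref{eq:kappa_alpha_beta}, so a failure at $\ket{\phi_{\alpha^*}}$ leaves $\ket{\phi_{\beta^*}}$ by \cref{eq:failure_state}. Since $\alpha^*-\beta^* = 2\theta$, one application of $G$ returns the register to $\ket{\phi_{\alpha^*}}$ by \cref{eq:grover_rotation}.

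First we check that this is admissible, i.e.\ $0<\beta^*<\alpha^*<\pi/2$ and $0<\kappa^*<1$. Since $K\ge 1$ (because $\theta\le\pi/8$ gives $\pi/(8\theta)-1/2\ge 1/2>0$), we have $\beta^*\ge\theta>0$. From $K < \pi/(8\theta)+1/2$ we get $\alpha^* < \pi/4+2\theta \le \pi/2$. The minimality of $K$ gives $(2K-1)\theta < \pi/4$, while $K\ge\pi/(8\theta)-1/2$ gives $\alpha^*\ge\pi/4$. Hence
\begin{equation*}
    \beta^*<\pi/4\le\alpha^*<\pi/4+2\theta, \qquad \alpha^*+\beta^* = 4K\theta \in \sbra*{\pi/2-2\theta,\ \pi/2+2\theta}.
\end{equation*}

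Because every round begins in the same state $\ket{\phi_{\alpha^*}}$, the number of failures $F$ is geometric with failure probability $p_0 = \cos^2\rbra{\alpha^*}/\cos^2\rbra{\beta^*}$. Using the identity from \cref{sec:continuous_heuristic},
\begin{equation*}
    \EE\sbra{F} = \frac{p_0}{p_1} = \frac{\cos^2\rbra{\alpha^*}}{\sin\rbra{\alpha^*+\beta^*}\sin\rbra{2\theta}}.
\end{equation*}
The expected number of oracle calls is therefore
\begin{equation*}
    1+2K+\frac{2\cos^2\rbra{\alpha^*}}{\sin\rbra{4K\theta}\sin\rbra{2\theta}},
\end{equation*}
and since $2K+1 = \alpha^*/\theta$, this equals
\begin{equation*}
    \frac{\alpha^*}{\theta}+\frac{2\cos^2\rbra{\alpha^*}}{\sin\rbra{4K\theta}\sin\rbra{2\theta}}.
\end{equation*}

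The main obstacle is to bound this sum by the stated expression uniformly in $\theta$. We trade the $\alpha^*$ term against the failure term rather than bounding them separately. Write $\alpha^* = \pi/4+\delta$ with $\delta\in[0,2\theta)$; then $\alpha^*+\beta^* = \pi/2+2\delta-2\theta$.

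For the first factor of the denominator, $\abs{2\delta-2\theta}\le 2\theta$ gives $\sin\rbra{\alpha^*+\beta^*} = \cos\rbra{2\delta-2\theta}\ge\cos\rbra{2\theta}$. Hence $\sin\rbra{4K\theta}\sin\rbra{2\theta}\ge\sin\rbra{4\theta}/2$.

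For the numerator, $2\cos^2\rbra{\alpha^*} = 1+\cos\rbra{2\alpha^*} = 1-\sin\rbra{2\delta}\le 1$.

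Combining these, the failure term is at most $2/\sin\rbra{4\theta}$, i.e.\ at most $\frac{2\theta}{\sin\rbra{4\theta}}\cdot\frac{1}{\theta}$. The remaining term satisfies $\alpha^*/\theta < (\pi/4+2\theta)/\theta$, which would give $\pi/4+2\theta+2\theta/\sin\rbra{4\theta}$ in the bracket — one $\theta$ too many compared with the claim. So the numerator bound must be sharpened: the loss $\sin\rbra{2\delta}$ in $1-\sin\rbra{2\delta}$ has to compensate for the excess $\delta$ in $\alpha^*$.

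To do this we show
\begin{equation*}
    \frac{\delta}{\theta} - \frac{2\sin\rbra{2\delta}}{\sin\rbra{4\theta}} \le \theta\cdot\frac{1}{\theta}=1,
\end{equation*}
i.e.
\begin{equation*}
    \delta - \frac{2\theta\sin\rbra{2\delta}}{\sin\rbra{4\theta}}\le\theta \qquad \text{for } \delta\in[0,2\theta).
\end{equation*}
The left side is convex in $\delta$ on this range, since $-\sin\rbra{2\delta}$ is convex for $2\delta\in[0,\pi/2]$ (note $\theta\le\pi/8$). A convex function attains its maximum at an endpoint. At $\delta=0$ it equals $0\le\theta$. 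At $\delta=2\theta$ it equals $2\theta-\theta\cdot 2=0\le\theta$, using $\sin\rbra{4\theta}$ in both numerator and denominator.

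Summing the two pieces then gives
\begin{equation*}
    \rbra*{\frac{\pi}{4}+\theta+\frac{2\theta}{\sin\rbra{4\theta}}}\frac{1}{\theta}.
\end{equation*}
Strict inequality follows from $\delta<2\theta$, or from the strictness in the $\cos$ bound.
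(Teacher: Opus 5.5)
Your proof is correct, and it reaches the bound by a different final step from the paper's. The shared part is this: both arguments compute the same exact expectation $1+2K+2\cos^2(\alpha^*)/(\sin(4K\theta)\sin(2\theta))$, and both bound the denominator below by $\sin(4\theta)/2$ using $\sin(4K\theta)\ge\cos(2\theta)$.

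The paper then splits into two cases. If $K\le\pi/(8\theta)$, it bounds $\cos^2(\alpha^*)\le\tfrac12$ and $K\le\pi/(8\theta)$ separately. If $K>\pi/(8\theta)$, it expands $\cos^2((2K+1)\theta)$ with the double-angle formula and discards the negative term $\tfrac12\cot(4K\theta)\cot(2\theta)$. In both cases $K$ (or $K-\tfrac12$) is bounded separately from the failure term, and adding the preparation call produces the extra $\theta$ in the bracket.

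You instead absorb the preparation into $\alpha^*/\theta=2K+1$ and write $\alpha^*=\pi/4+\delta$. A single convexity argument then lets the drop $\sin(2\delta)$ in the numerator pay for the overshoot $\delta$, with no case distinction.

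Your argument actually proves more than you claim. The function $g(\delta)=\delta-2\theta\sin(2\delta)/\sin(4\theta)$ vanishes at both endpoints, so $g\le 0$, not merely $g\le\theta$. Hence the expected number of oracle calls is at most $\pi/(4\theta)+2/\sin(4\theta)$, one call below the theorem's bound. This improved bound is attained when $\alpha^*=\pi/4$ exactly (e.g.\ $\theta=\pi/12$, $K=1$), so it is sharp. It also makes the strict inequality in the theorem immediate, since $g\le 0<\theta$.

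Your stated reasons for strictness are not the right ones. At $\delta=0$ the cosine bound holds with equality. The condition $\delta<2\theta$ plays no role either, because $g(0)=0$ as well.
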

\begin{proof}
    With $p \leq \sin^2\rbra{\frac{\pi}{8}}$, we have $K \geq 1$.
    The choice of $K$ gives $0 < \beta^* < \pi/4 \leq \alpha^* < \pi/4+2\theta \leq \pi/2$, so $0 < \kappa^* < 1$ and each weak measurement has a positive probability of success.
    Since the $\kappa^*$ in \cref{algo:exact_known_p} satisfies \cref{eq:kappa_alpha_beta} for $\alpha^*$ and $\beta^*$, the analysis of the expected cost in \cref{eq:cost_alpha_beta} applies here.
    The expected number of times the algorithm executes a Grover rotation $G$ is
    \begin{equation*}
        T_G = K + \sum_{m \geq 1} \rbra*{p_0\rbra{\alpha^*,\beta^*}}^m \cdot 1 = K + \frac{p_0\rbra{\alpha^*,\beta^*}}{p_1\rbra{\alpha^*,\beta^*}} = K + \frac{\cos^2\rbra{\alpha^*}}{\cos^2\rbra{\beta^*} - \cos^2\rbra{\alpha^*}},
    \end{equation*}
    where $p_0\rbra{\alpha^*,\beta^*} = \cos^2\rbra{\alpha^*}/\cos^2\rbra{\beta^*}$ and $p_1\rbra{\alpha^*,\beta^*} = 1-\cos^2\rbra{\alpha^*}/\cos^2\rbra{\beta^*}$ are the probabilities of obtaining outcomes $0$ and $1$ in the weak measurement, respectively; $K$ is the number of Grover rotations $G$ used to reach the state $\ket{\phi_{\alpha^*}}$ and for each failure branch, an additional $G$ is used to bring $\ket{\phi_{\beta^*}}$ back to $\ket{\phi_{\alpha^*}}$.
    Then, the same trick of $\cos^2\rbra{\beta^*} - \cos^2\rbra{\alpha^*} = \sin\rbra{\alpha^*+\beta^*}\sin\rbra{\alpha^* - \beta^*}$ gives
    \begin{equation*}
        T_G = K + \frac{\cos^2\rbra{\alpha^*}}{\sin\rbra{\alpha^*+\beta^*}\sin\rbra{\alpha^* - \beta^*}}.
    \end{equation*}
    Substituting $\alpha^* = \rbra{2K+1}\theta$ and $\beta^* = \rbra{2K-1}\theta$ gives
    \begin{align*}
        T_G = K + \frac{\cos^2\rbra{\rbra{2K+1}\theta}}{\sin\rbra{4K\theta}\sin\rbra{2\theta}}.
    \end{align*}
    From $K = \ceil*{\frac{\pi}{8\theta} - \frac{1}{2}}$, we have $\rbra{2K-1}\theta < \frac{\pi}{4} \leq \rbra{2K+1}\theta$ and $\frac{\pi}{2}-2\theta \leq 4K\theta < \frac{\pi}{2}+2\theta$.
    In particular, $\sin\rbra{4K\theta} \geq \cos\rbra{2\theta}$ and $K < \frac{\pi}{8\theta}+\frac{1}{2}$.
    If $K \leq \frac{\pi}{8\theta}$, then
    \begin{equation*}
        T_G \leq K+\frac{1}{\sin\rbra{4\theta}} \leq \frac{\pi}{8\theta}+\frac{1}{\sin\rbra{4\theta}}.
    \end{equation*}
    If $K > \frac{\pi}{8\theta}$, expand $\cos^2\rbra{\rbra{2K+1}\theta}$ to obtain
    \begin{equation*}
        T_G = K-\frac{1}{2}+\frac{1}{2\sin\rbra{4K\theta}\sin\rbra{2\theta}}+\frac{1}{2}\cot\rbra{4K\theta}\cot\rbra{2\theta}
        < \frac{\pi}{8\theta}+\frac{1}{\sin\rbra{4\theta}},
    \end{equation*}
    because $\cot\rbra{4K\theta} < 0$.
    Since each Grover rotation $G$ makes $2$ calls to $\cA$ (and $\cA^{\dagger}$), together with the initial call to prepare $\cA\ket{0^n}$, the total expected number of oracle calls is
    \begin{equation*}
        T_\cA = 2T_G+1 < \frac{\pi}{4\theta} + \frac{2}{\sin\rbra{4\theta}}+1 = \rbra*{\frac{\pi}{4}+\frac{2\theta}{\sin\rbra{4\theta}}+\theta}\cdot \frac{1}{\theta}.
        \qedhere
    \end{equation*}
\end{proof}

Since $\theta T_\cA = E\rbra{\alpha^*,\beta^*}$, the lower bound in \cref{eq:expected_angle_lower_bound} and the upper bound in \cref{thm:exact_known_p} imply $\theta T_\cA \to \pi/4+1/2$ as $\theta \to 0$.
\cref{thm:exact_known_p} provides a rigorous improvement in the leading oracle cost over the standard Grover approach for amplitude amplification~\cite{BHM+02}, for which the oracle count of $\cA$ (and $\cA^{\dagger}$) required to obtain the target state with high probability is approximately $\frac{\pi}{2}\cdot \frac{1}{\theta}$:
In the limit $p \to 0$ (or equivalently $\theta \to 0$), our weak measurement schedule improves by a factor of
\begin{equation}\label{eq:exact_ratio}
    \frac{\frac{\pi}{4}+\frac{2\theta}{\sin\rbra{4\theta}}+\theta}{\frac{\pi}{2}} \to \frac{2+\pi}{2\pi}.
\end{equation}
This is optimal, in the sense that it matches exactly the infimum ratio in the continuous heuristic of \cref{eq:inf_ratio}.

\paragraph{Comparison with restart Grover.}
A related strategy known as restart Grover appears in the literature~\cite[Section~3]{BBH+98}.
In that approach, one selects an integer $m < \frac{\pi}{4\theta}$, prepares the state $G^m\cA\ket{0^n}$, and then performs a projective measurement $\cbra{M_0 = \ketbra{0}{0},M_1 = \ketbra{1}{1}}$ on the first qubit.
If the outcome is $0$, the entire procedure is restarted from the beginning; if the outcome is $1$, the target state is obtained.
By choosing $m$ optimally, the expected number of oracle calls can be reduced compared to the standard Grover algorithm.
The asymptotic ratio of this expected cost to that of standard Grover is approximately $0.88$ (see \cite{BBH+98} for details), which is larger (i.e., worse) than the infimum $\frac{2+\pi}{2\pi} \approx 0.8183$ achieved by our exact weak measurement schedule (see \cref{eq:exact_ratio}).
In the continuous-angle limit, the restart Grover procedure corresponds to the boundary $\beta \to 0$ in \cref{sec:continuous_heuristic}, for which $E\rbra{\alpha,\beta} \to \alpha/\sin^2\rbra{\alpha}$.
The comparison above shows that \cref{algo:exact_known_p} has a smaller leading expected oracle cost in the small-$p$ regime.

\section{Fixed-point schedule for lower-bounded \texorpdfstring{$p$}{p}}\label{sec:fixed_point_for_lower_bounded_p}
We next move to the scenario where the target probability $p$ is not known exactly, but is known to satisfy a lower bound $p \geq p_0$ for some $p_0 > 0$.
In this regime, the exact weak measurement schedule developed in \cref{sec:exact_for_known_p} is no longer directly applicable, as the lack of precise knowledge of $p$ prevents us from reliably steering the state to an angle close to $\pi/4$.
Instead, we prepare $\cA\ket{0^n}$ and repeatedly perform a weak measurement of fixed strength $\kappa$.
A successful measurement returns the target state; after each failure, we apply one Grover rotation and continue.
The procedure is summarised in \cref{algo:fixed_point_lower_bounded_p}.
This general approach, which interleaves weak measurements with Grover iterations, has been explored in several works~\cite{Mizel09,Wang18,AMH22,FYW23}.

\begin{algorithm*}[ht]
    \caption{Quantum amplitude amplification with fixed-point weak measurement schedule}
    \label{algo:fixed_point_lower_bounded_p}
    \begin{algorithmic}[1]
        \Require A quantum oracle $\cA$ (and its inverse $\cA^{\dagger}$) satisfying \cref{eq:oracle_A} and a weak measurement strength $0 < \kappa \leq 1$.
        \Ensure The target state $\ket{1}\ket{\psi_1}$.
        \State Let $G$ be the Grover rotation as in \cref{eq:grover_operator}.
        \State Prepare an $n$-qubit register in the state $\cA\ket{0^n}$.
        \Repeat
            \State Perform the weak measurement $\cbra*{M_0 = \ketbra{0}{0}+\sqrt{1-\kappa}\ketbra{1}{1}, M_1 = \sqrt{\kappa}\ketbra{1}{1}}$ on the first qubit.
            \If {outcome $= 1$}
                \State \Return \Comment{Success: register in state $\ket{1}\ket{\psi_1}$}
            \EndIf
            \State Apply $G$ to the register.
        \Until{success}
    \end{algorithmic}
\end{algorithm*}

It is clear that a poorly chosen measurement strength can degrade or even eliminate the quantum speedup.
For example, setting $\kappa = 1$ reduces the weak measurement to a projective measurement, giving expected oracle cost $1+1/\rbra{2p} = \Theta\rbra{1/p}$ as $p \to 0$.
Conversely, an appropriate choice of $\kappa$ can preserve the quadratic speedup.
In particular, Mizel~\cite{Mizel09} derived a suitable $\kappa$ via eigenvalue analysis, while subsequent works~\cite{AMH22,Wang18} proved that with a proper choice, the expected oracle complexity remains $O\rbra{1/\sqrt{p}}$.
The complexity matches the standard Grover algorithm but does not demonstrate any explicit advantage arising from the weak measurements themselves.
More recently, in~\cite{FYW23}, the near-optimal choice of $\kappa$ is found by gradient-based numerical search, and numerical experiments then show that the resulting weak-measurement-assisted protocol actually requires fewer oracle calls than standard Grover.
Nevertheless, a theoretical derivation that identifies the optimal measurement strength and rigorously quantifies the resulting advantage remains absent from the literature.
This section aims to address this gap by providing a fixed-point schedule with provable improvement over the conventional approach.

\subsection{Expected cost analysis for a given measurement strength}
\label{sec:expected_cost_lower_bounded}
We begin by deriving the exact expected number of calls to $\cA$ and $\cA^{\dagger}$ in \cref{algo:fixed_point_lower_bounded_p}, including the initial preparation, for an arbitrary fixed measurement strength $\kappa$.
This expression will later allow us to choose the optimal $\kappa$ that minimises the expected cost.

In the basis $\cbra{\ket{0}\ket{\psi_0},\ket{1}\ket{\psi_1}}$, the prepared state, Grover rotation, and failure operator of the weak measurement are represented by
\begin{equation*}
    \psi_\theta = \begin{pmatrix}
        \cos\rbra{\theta} \\
        \sin\rbra{\theta}
    \end{pmatrix}, \qquad
    R_\theta = \begin{pmatrix}
        \cos\rbra{2\theta} & -\sin\rbra{2\theta} \\
        \sin\rbra{2\theta} & \cos\rbra{2\theta}
    \end{pmatrix}, \qquad
    D_\kappa = \begin{pmatrix}
        1 & 0 \\
        0 & \sqrt{1-\kappa}
    \end{pmatrix}.
\end{equation*}
An unsuccessful round first measures and then rotates, so its unnormalised state evolves by $B_{\theta,\kappa} = R_\theta D_\kappa$.
To use a convenient Lyapunov representation, we also define
\begin{equation*}
    A_{\theta,\kappa} = D_\kappa R_\theta, \qquad
    u_0 = R_\theta^{\dagger}\psi_\theta = \begin{pmatrix}
        \cos\rbra{\theta} \\
        -\sin\rbra{\theta}
    \end{pmatrix}.
\end{equation*}
The identity $B_{\theta,\kappa} = R_\theta A_{\theta,\kappa}R_\theta^{\dagger}$ shows that
$B_{\theta,\kappa}^m\psi_\theta = R_\theta A_{\theta,\kappa}^m u_0$.
Thus $u_0$ is only an auxiliary coordinate vector for the analysis; the algorithm prepares $\psi_\theta$ directly.

Let $T \in \cbra{1,2,\ldots}\cup\cbra{\infty}$ denote the index of the first successful weak measurement.
The survival probability after $m$ unsuccessful measurements is
\begin{equation}\label{eq:m_fixed_point_failures_probability}
    S_m\rbra{\theta,\kappa} = \Pr\rbra{T > m}
    = \Abs*{B_{\theta,\kappa}^m\psi_\theta}^2
    = \Abs*{A_{\theta,\kappa}^m u_0}^2.
\end{equation}
Each unsuccessful measurement is followed by one Grover iteration, while the successful measurement ends the algorithm immediately.
Consequently, $N_G = T-1$ and the tail-sum identity gives
\begin{equation}\label{eq:cost_round_lower_bounded_p}
    \EE\sbra{N_G} = \sum_{m \geq 1}\Pr\rbra{T > m}
    = \sum_{m \geq 1}S_m\rbra{\theta,\kappa},\qquad
    Q_{\mathrm{total}}\rbra{\theta,\kappa}
    = 1+2\sum_{m \geq 1}S_m\rbra{\theta,\kappa}.
\end{equation}
Here the initial $1$ counts the preparation call.
The probability of success at measurement $m$ is $S_{m-1}-S_m$.

To establish convergence, we examine the eigenvalues of $A_{\theta,\kappa}$.
Since
\begin{equation*}
    \det\rbra{A_{\theta,\kappa}} = \sqrt{1-\kappa}, \qquad \tr\rbra{A_{\theta,\kappa}} = \rbra*{1+\sqrt{1-\kappa}}\cos\rbra{2\theta},
\end{equation*}
the characteristic polynomial of $A_{\theta,\kappa}$ is
\begin{equation}\label{eq:characteristic_polynomial}
    \chi\rbra{\lambda} = \lambda^2 - \rbra*{1+\sqrt{1-\kappa}}\cos\rbra{2\theta}\lambda + \sqrt{1-\kappa}.
\end{equation}
Assuming $0 < \kappa < 1$ and $\theta \in \rbra{0,\pi/4}$, we have
\begin{align*}
    \chi\rbra{1} &= 1+\sqrt{1-\kappa} - \rbra*{1+\sqrt{1-\kappa}}\cos\rbra{2\theta} > 0, \\
    \chi\rbra{-1} &= 1+\sqrt{1-\kappa} + \rbra*{1+\sqrt{1-\kappa}}\cos\rbra{2\theta} > 0, \\
    \abs{\chi\rbra{0}} &= \sqrt{1-\kappa} < 1.
\end{align*}
By the Jury stability criterion~\cite{Jury62}, all eigenvalues of $A_{\theta,\kappa}$ lie strictly inside the unit disc.\footnote{For a monic quadratic polynomial $\chi\rbra{\lambda} = \lambda^2+a_1\lambda +a_0$, if $\chi\rbra{\pm 1} > 0$ and $\abs{\chi\rbra{0}} < 1$, then the real roots must lie strictly between $-1$ and $1$.
If the roots are complex conjugates $\lambda_1 = \bar{\lambda}_2$, then $\abs{\lambda_1}^2 = \lambda_1\lambda_2 = \chi\rbra{0}$.
The condition $\abs{\chi\rbra{0}} < 1$ ensures $\abs{\lambda_1},\abs{\lambda_2} < 1$.}
Consequently, the expected cost in \cref{eq:cost_round_lower_bounded_p} is finite, and the matrix series
\begin{equation}\label{eq:cost_q}
    Q_{\theta,\kappa} = \sum_{m \geq 0} \rbra[\big]{A_{\theta,\kappa}^m}^{\dagger}A_{\theta,\kappa}^m
\end{equation}
also converges to the closed form given in the following lemma.

\begin{lemma}\label{lem:cost_q_closed_form}
    For any $0 < \kappa < 1$ and $\theta \in \rbra{0,\pi/4}$, the matrix series $Q_{\theta,\kappa}$ in \cref{eq:cost_q} converges and admits the closed form
    \begin{equation}
        Q_{\theta,\kappa} = \begin{pmatrix}
            \dfrac{2\sqrt{1-\kappa}}{\kappa} + \dfrac{\kappa}{\rbra*{1+\sqrt{1-\kappa}}^2\sin^2\rbra{2\theta}} & \dfrac{-\cos\rbra{2\theta}}{\rbra*{1+\sqrt{1-\kappa}}\sin\rbra{2\theta}} \\
            \dfrac{-\cos\rbra{2\theta}}{\rbra*{1+\sqrt{1-\kappa}}\sin\rbra{2\theta}} & \dfrac{2}{\kappa}
        \end{pmatrix}.
    \end{equation}
\end{lemma}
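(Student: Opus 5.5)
Convergence comes first. The Jury-criterion argument above shows that the spectral radius $\rho$ of $A_{\theta,\kappa}$ satisfies $\rho<1$. Hence for any $\rho<r<1$ there is a constant $c$ with $\Abs{A_{\theta,\kappa}^m}\le c\,r^m$, via Gelfand's formula or by diagonalising when the eigenvalues are distinct; the double-root case is handled with a Jordan form. The terms of \cref{eq:cost_q} are therefore bounded in norm by $c^2r^{2m}$, so the series converges absolutely to a real symmetric positive-definite matrix. Telescoping the partial sums then gives the discrete Lyapunov equation
\begin{equation*}
    Q_{\theta,\kappa}-A_{\theta,\kappa}^{\dagger}Q_{\theta,\kappa}A_{\theta,\kappa}=I,
\end{equation*}
because $\sum_{m=0}^{M}(A^m)^\dagger A^m - A^\dagger\bigl(\sum_{m=0}^{M}(A^m)^\dagger A^m\bigr)A = I-(A^{M+1})^\dagger A^{M+1}\to I$. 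This solution is unique. The linear map $X\mapsto X-A^\dagger XA$ has eigenvalues $1-\bar\lambda_i\lambda_j$, and these are all nonzero since $\abs{\lambda_i}<1$. It therefore suffices to verify that the displayed matrix $Q$ satisfies the Lyapunov equation.

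For the verification I write $s=\sqrt{1-\kappa}$, $c=\cos(2\theta)$, $\sigma=\sin(2\theta)$, so that $\kappa=(1-s)(1+s)$ and
\begin{equation*}
    A=\begin{pmatrix} c & -\sigma\\ s\sigma & sc\end{pmatrix}.
\end{equation*}
Rather than plug the candidate in blindly, I would solve the equation for a general symmetric $Q=\begin{pmatrix}q_1&q_2\\q_2&q_3\end{pmatrix}$. This produces three linear equations in $q_1,q_2,q_3$:
\begin{align*}
    q_1 - \bigl(c^2q_1+2s\sigma c\,q_2+s^2\sigma^2q_3\bigr)&=1,\\
    q_2 - \bigl(-c\sigma q_1+s(c^2-\sigma^2)q_2+s^2c\sigma q_3\bigr)&=0,\\
    q_3-\bigl(\sigma^2q_1-2s\sigma c\,q_2+s^2c^2q_3\bigr)&=1.
\end{align*}
Adding the first and third equations gives $(1-s^2)q_3+\dots$, and I expect $q_1$ to cancel, which should yield $q_3=2/\kappa$ quickly. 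Given $q_3$, the second equation expresses $q_1$ in terms of $q_2$. Substituting this into the first equation leaves a single linear equation for $q_2$, which should produce $q_2=-c/((1+s)\sigma)$. Back-substitution then gives $q_1$. To match the stated form of $q_1$, I would use $c^2=1-\sigma^2$ and $1-s^2=\kappa$, and the identity
\begin{equation*}
    \frac{2s}{\kappa}+\frac{\kappa}{(1+s)^2\sigma^2}=\frac{2s}{\kappa}+\frac{1-s}{(1+s)\sigma^2}.
\end{equation*}

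The only real obstacle is the algebra in this last step. Eliminating $q_1$ and $q_2$ involves factors of $1-s$ and $1+s$ that must cancel, and the final expression must be recognised in the stated form. A reliable fallback is direct substitution: plug the candidate $Q$ into each of the three scalar equations, clear the denominators $\kappa(1+s)\sigma^2$, and check that each resulting polynomial identity in $s$, $c$, $\sigma$ holds modulo $c^2+\sigma^2=1$. Together with uniqueness, this completes the proof.
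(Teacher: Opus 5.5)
Your proposal is correct and matches the paper's proof step by step. Both arguments use geometric decay of $\Abs{A_{\theta,\kappa}^m}$ for absolute convergence, an index shift to obtain the discrete Lyapunov equation, uniqueness from Schur stability, and the same elimination: $q_3=2/\kappa$ from adding the diagonal equations, $q_2$ from combining the first two equations (equivalent to the paper's $\cos(2\theta)\cdot$(first) $-\sin(2\theta)\cdot$(second)), and $q_1$ by back-substitution. The differences are only cosmetic: you prove uniqueness via the nonzero spectrum $1-\bar\lambda_i\lambda_j$ of $X\mapsto X-A^{\dagger}XA$ rather than the paper's iteration $\Delta=(A^{\dagger})^m\Delta A^m\to 0$, and you telescope partial sums instead of shifting the infinite series.
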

\begin{proof}
    For any $0 < \kappa < 1$ and $\theta \in \rbra{0,\pi/4}$, we have already shown that all eigenvalues of $A_{\theta,\kappa}$ have modulus less than $1$.
    Consequently, there exist constants $C > 0$ and $r \in \rbra{0,1}$ such that for all $m \geq 0$, $\Abs*{A_{\theta,\kappa}^m} \leq Cr^m$.
    It follows that
    \begin{equation*}
        \Abs[\big]{\rbra[\big]{A_{\theta,\kappa}^m}^{\dagger}A_{\theta,\kappa}^m} \leq \Abs[\big]{\rbra[\big]{A_{\theta,\kappa}^m}^{\dagger}}\Abs[\big]{A_{\theta,\kappa}^m} \leq C^2r^{2m}
    \end{equation*}
    and therefore the series defining $Q_{\theta,\kappa}$ converges absolutely:
    \begin{equation*}
        \sum_{m \geq 0} \Abs*{\rbra[\big]{A_{\theta,\kappa}^m}^{\dagger}A_{\theta,\kappa}^m} \leq C^2\sum_{m \geq 0} r^{2m} < \infty.
    \end{equation*}
    Multiplying the series on the left by $A_{\theta,\kappa}^{\dagger}$ and on the right by $A_{\theta,\kappa}$ shifts the index by one:
    \begin{equation*}
        A_{\theta,\kappa}^{\dagger}Q_{\theta,\kappa} A_{\theta,\kappa} = \sum_{m \geq 0} A_{\theta,\kappa}^{\dagger}\rbra[\big]{A_{\theta,\kappa}^m}^{\dagger}A_{\theta,\kappa}^mA_{\theta,\kappa} = \sum_{m \geq 1} \rbra[\big]{A_{\theta,\kappa}^m}^{\dagger}A_{\theta,\kappa}^m = Q_{\theta,\kappa} - I.
    \end{equation*}
    Hence $Q_{\theta,\kappa}$ satisfies the equation
    \begin{equation}\label{eq:q_equation}
        X - A_{\theta,\kappa}^{\dagger} X A_{\theta,\kappa} = I.
    \end{equation}
    Because all eigenvalues of $A_{\theta,\kappa}$ have modulus less than $1$, the solution to \cref{eq:q_equation} is unique.\footnote{Suppose that there are two solutions $X_1$ and $X_2$. Then their difference $\Delta = X_1 - X_2$ satisfies $\Delta = A_{\theta,\kappa}^{\dagger} \Delta A_{\theta,\kappa} = \rbra[\big]{A_{\theta,\kappa}^{\dagger}}^m \Delta A_{\theta,\kappa}^m$ for any $m \geq 1$. Taking $m \to \infty$ forces $\Delta = 0$.}

    To derive the entries of $Q_{\theta,\kappa}$ explicitly, we write
    \begin{equation*}
        Q_{\theta,\kappa} = \begin{pmatrix}
            x & y \\
            y & z
        \end{pmatrix}.
    \end{equation*}
    Substituting this into \cref{eq:q_equation} yields
    \begin{equation*}
        \begin{pmatrix}
            x & y \\
            y & z
        \end{pmatrix} -
        \begin{pmatrix}
        \cos\rbra{2\theta} & \sqrt{1-\kappa}\sin\rbra{2\theta} \\
        -\sin\rbra{2\theta} & \sqrt{1-\kappa}\cos\rbra{2\theta}
    \end{pmatrix}
    \begin{pmatrix}
            x & y \\
            y & z
        \end{pmatrix}
    \begin{pmatrix}
        \cos\rbra{2\theta} & -\sin\rbra{2\theta} \\
        \sqrt{1-\kappa}\sin\rbra{2\theta} & \sqrt{1-\kappa}\cos\rbra{2\theta}
    \end{pmatrix} =
    \begin{pmatrix}
        1 & 0 \\
        0 & 1
    \end{pmatrix}
    \end{equation*}
    Carrying out the matrix multiplications gives the following system of scalar equations:
    \begin{gather}
        \label{eq:q_equation_1} x\sin^2\rbra{2\theta} - 2y\sqrt{1-\kappa}\sin\rbra{2\theta}\cos\rbra{2\theta} - z\rbra{1-\kappa}\sin^2\rbra{2\theta} = 1, \\
        \label{eq:q_equation_2} x\sin\rbra{2\theta}\cos\rbra{2\theta} + y\rbra*{1-\sqrt{1-\kappa}\rbra{\cos^2\rbra{2\theta} - \sin^2\rbra{2\theta}}} - z\rbra{1-\kappa}\sin\rbra{2\theta}\cos\rbra{2\theta} = 0, \\
        \label{eq:q_equation_3} -x\sin^2\rbra{2\theta} + 2y\sqrt{1-\kappa}\cos\rbra{2\theta}\sin\rbra{2\theta}+z\rbra{1-\rbra{1-\kappa}\cos^2\rbra{2\theta}} = 1
    \end{gather}
    Adding \cref{eq:q_equation_1} and \cref{eq:q_equation_3} simplifies to $z\rbra{1-\rbra{1-\kappa}} = 2$, i.e.,
    \begin{equation*}
        z = \frac{2}{\kappa}.
    \end{equation*}
    Next, taking $\cos\rbra{2\theta}$ times \cref{eq:q_equation_1} minus $\sin\rbra{2\theta}$ times \cref{eq:q_equation_2} leaves $-y\rbra{1+\sqrt{1-\kappa}}\sin\rbra{2\theta} = \cos\rbra{2\theta}$, so
    \begin{equation*}
        y = \dfrac{-\cos\rbra{2\theta}}{\rbra*{1+\sqrt{1-\kappa}}\sin\rbra{2\theta}}.
    \end{equation*}
    Finally, substituting the expressions of $y$ and $z$ into \cref{eq:q_equation_2} gives
    \begin{equation*}
        x = \frac{2\rbra{1-\kappa}}{\kappa} + \dfrac{\rbra*{1-\sqrt{1-\kappa}\rbra{\cos^2\rbra{2\theta} - \sin^2\rbra{2\theta}}}}{\rbra*{1+\sqrt{1-\kappa}}\sin^2\rbra{2\theta}}.
    \end{equation*}
    Using $\cos^2\rbra{2\theta} - \sin^2\rbra{2\theta} = 1-2\sin^2\rbra{2\theta}$, this simplifies to
    \begin{align*}
        x &= \frac{2\rbra{1-\kappa}}{\kappa} + \frac{2\sqrt{1-\kappa}}{1+\sqrt{1-\kappa}} + \frac{1-\sqrt{1-\kappa}}{\rbra*{1+\sqrt{1-\kappa}}\sin^2\rbra{2\theta}} \\
        &= \frac{2\sqrt{1-\kappa}}{\kappa} + \frac{\kappa}{\rbra*{1+\sqrt{1-\kappa}}^2\sin^2\rbra{2\theta}}.
    \end{align*}
    Thus the claimed closed form for $Q_{\theta,\kappa}$ follows.
\end{proof}

\begin{remark}
    \cref{eq:q_equation} is a discrete-time Lyapunov equation for the linear system $x\rbra{t+1} = A_{\theta,\kappa}x\rbra{t}$.
    Such equations arise naturally in stability analysis, control theory, and the study of linear recurrences, see, for example, \cite[Section~3]{BCS18}.
    The uniqueness of the solution follows from the fact that $A_{\theta,\kappa}$ is Schur stable (that is, all eigenvalues lie inside the unit disc).
\end{remark}

The closed form in \cref{lem:cost_q_closed_form}, together with $S_0 = \Abs{u_0}^2 = 1$, gives
\begin{equation*}
    \EE\sbra{N_G} = u_0^{\dagger}Q_{\theta,\kappa}u_0-1,
    \qquad Q_{\mathrm{total}}\rbra{\theta,\kappa}
    = 2u_0^{\dagger}Q_{\theta,\kappa}u_0-1.
\end{equation*}

\begin{theorem}\label{thm:fixed_point_expected_cost}
    For any $0 < \kappa < 1$ and $\theta \in \rbra{0,\pi/4}$, put $p = \sin^2\rbra{\theta}$. The expected total number of calls to $\cA$ and $\cA^{\dagger}$ in \cref{algo:fixed_point_lower_bounded_p}, including preparation, is
    \begin{equation}\label{eq:fixed_point_expected_cost}
        Q_{\mathrm{total}}\rbra{\theta,\kappa}
        = \frac{\rbra*{1+\sqrt{1-\kappa}}^2}{\kappa}
        +\frac{\kappa}{2p\rbra*{1+\sqrt{1-\kappa}}^2}.
    \end{equation}
    For fixed $\theta$, this cost is minimised uniquely by
    \begin{equation}\label{eq:optimal_kappa}
        \kappa^*\rbra{\theta}
        = \frac{4\sqrt{2}\sin\rbra{\theta}}{\rbra*{1+\sqrt{2}\sin\rbra{\theta}}^2}
        = \frac{4\sqrt{2p}}{\rbra*{1+\sqrt{2p}}^2} \in \rbra{0,1},
    \end{equation}
    and the corresponding minimum is
    \begin{equation*}
        Q_{\mathrm{total}}\rbra{\theta,\kappa^*\rbra{\theta}}
        = \sqrt{\frac{2}{p}},\qquad
        \EE\sbra{N_G} = \frac{1}{\sqrt{2p}}-\frac{1}{2}.
    \end{equation*}
\end{theorem}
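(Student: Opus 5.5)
Starting from the identity $Q_{\mathrm{total}}(\theta,\kappa) = 2u_0^{\dagger}Q_{\theta,\kappa}u_0-1$, I would compute the quadratic form with the closed form of \cref{lem:cost_q_closed_form}. Write $s=\sqrt{1-\kappa}$ and $u_0=(\cos\theta,-\sin\theta)^{\dagger}$. Then
\begin{equation*}
u_0^{\dagger}Q_{\theta,\kappa}u_0 = x\cos^2\theta - 2y\sin\theta\cos\theta + z\sin^2\theta .
\end{equation*}
Evaluating the three terms:
\begin{itemize}
\item the $z$-term is $2\sin^2\theta/\kappa$;
\item the $y$-term is $\cos(2\theta)\sin(2\theta)/\bigl((1+s)\sin(2\theta)\bigr)=\cos(2\theta)/(1+s)$;
\item the $x$-term splits as $2s\cos^2\theta/\kappa + \kappa\cos^2\theta/\bigl((1+s)^2\cdot 4\sin^2\theta\cos^2\theta\bigr)$, and the second piece equals $\kappa/\bigl(4p(1+s)^2\bigr)$.
\end{itemize}
After doubling, this last piece already yields the $\kappa/(2p(1+s)^2)$ term in \cref{eq:fixed_point_expected_cost}.

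For the remaining terms I would use $\kappa=(1-s)(1+s)$. The target is that
\begin{equation*}
\frac{4s\cos^2\theta+4\sin^2\theta}{\kappa} + \frac{2\cos(2\theta)}{1+s} - 1 = \frac{(1+s)^2}{\kappa}.
\end{equation*}
Put $4s\cos^2\theta+4\sin^2\theta = 2(1+s) - 2(1-s)\cos(2\theta)$. Dividing by $\kappa$ gives $2/(1-s) - 2\cos(2\theta)/(1+s)$. The $\cos(2\theta)$ terms then cancel exactly, which leaves $2/(1-s)-1=(1+s)/(1-s)=(1+s)^2/\kappa$. This cancellation is the main check, and I expect it to go through. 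The $\theta$-dependence then survives only through $p$.

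For the optimisation I would substitute $x=(1-s)/(1+s)=\kappa/(1+s)^2$, which is a strictly decreasing bijection from $s\in(0,1)$ onto $x\in(0,1)$. The cost becomes $1/x + x/(2p)$. By AM--GM it is at least $2/\sqrt{2p}=\sqrt{2/p}$, with equality iff $x=\sqrt{2p}$. Strict convexity on $(0,\infty)$ gives uniqueness. One also needs $\sqrt{2p}<1$, which holds since $\theta<\pi/4$ gives $p<1/2$, so the minimiser is admissible.

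Inverting gives $s=(1-x)/(1+x)$ and $\kappa=1-s^2=4x/(1+x)^2$. Substituting $x=\sqrt{2p}=\sqrt2\sin\theta$ gives \cref{eq:optimal_kappa}, and $\kappa^*\in(0,1)$ follows because $x\in(0,1)$. Finally, $\EE[N_G]=(Q_{\mathrm{total}}-1)/2 = 1/\sqrt{2p}-1/2$.
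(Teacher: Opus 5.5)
Your proposal is correct and follows essentially the paper's route. You evaluate $2u_0^{\dagger}Q_{\theta,\kappa}u_0-1$ from the closed form of $Q_{\theta,\kappa}$, substitute $x=(1-\sqrt{1-\kappa})/(1+\sqrt{1-\kappa})$ to reduce the cost to $1/x+x/(2p)$, and finish with AM--GM at $x=\sqrt{2p}$. The only difference is the order of simplification: the paper rewrites the matrix entries in $x$ and $p$ before forming the quadratic form, while you form it in $s$ and $\theta$ and check the $\cos(2\theta)$ cancellation directly, which is correct.
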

\begin{proof}
    Introduce the substitution
    \begin{equation}\label{eq:kappa_x_substitution}
        x = \frac{1-\sqrt{1-\kappa}}{1+\sqrt{1-\kappa}} \in \rbra{0,1},
        \qquad \kappa = \frac{4x}{\rbra{1+x}^2}.
    \end{equation}
    In this variable, the entries from \cref{lem:cost_q_closed_form} become
    \begin{equation*}
        Q_{\theta,\kappa} = \begin{pmatrix}
            \dfrac{1-x^2}{2x}+\dfrac{x}{4p\rbra{1-p}} & -\dfrac{\rbra{1-2p}\rbra{1+x}}{4\sqrt{p\rbra{1-p}}} \\[1ex]
            -\dfrac{\rbra{1-2p}\rbra{1+x}}{4\sqrt{p\rbra{1-p}}} & \dfrac{\rbra{1+x}^2}{2x}
        \end{pmatrix}.
    \end{equation*}
    Since $u_0 = \rbra*{\sqrt{1-p},-\sqrt{p}}^{\dagger}$, we obtain
    \begin{equation*}
        u_0^{\dagger}Q_{\theta,\kappa}u_0 = \rbra{1-p}\rbra*{\frac{1-x^2}{2x}+\frac{x}{4p\rbra{1-p}}} + \frac{\rbra{1-2p}\rbra{1+x}}{2}+\frac{p\rbra{1+x}^2}{2x} = \frac{1}{2}+\frac{1}{2x}+\frac{x}{4p}.
    \end{equation*}
    Thus
    \begin{equation*}
        Q_{\mathrm{total}}\rbra{\theta,\kappa}
        = 2u_0^{\dagger}Q_{\theta,\kappa}u_0-1
        = \frac{1}{x}+\frac{x}{2p},
    \end{equation*}
    which is \cref{eq:fixed_point_expected_cost}. By the arithmetic--geometric mean inequality,
    \begin{equation*}
        \frac{1}{x}+\frac{x}{2p} \geq \sqrt{\frac{2}{p}},
    \end{equation*}
    with equality if and only if
    \begin{equation}\label{eq:x_optimal}
        x = \sqrt{2p} = \sqrt{2}\sin\rbra{\theta}.
    \end{equation}
    This value lies in $\rbra{0,1}$ for $0 < p < 1/2$; substituting it into \cref{eq:kappa_x_substitution} gives \cref{eq:optimal_kappa}.
    Finally, $Q_{\mathrm{total}} = 1+2\EE\sbra{N_G}$ gives the claimed Grover iteration count.
\end{proof}

In \cite[Lemma~6.3]{Wang18}, the measurement strength is set to $\kappa = 1-e^{-\lambda}$ for $\lambda \in \sbra{2\sqrt{p},4\sqrt{p}}$.
For sufficiently small $p$, this yields $\kappa \sim \lambda \in \sbra{2\sqrt{p},4\sqrt{p}}$.
In contrast, our optimal strength $\kappa^*\rbra{\theta}$ from \cref{thm:fixed_point_expected_cost} behaves asymptotically as $\kappa^*\rbra{\theta} \sim 4\sqrt{2}\theta \sim 4\sqrt{2}\sqrt{p}$.
At this strength, the exact total expected oracle cost is $\sqrt{2/p}$. Its ratio to the leading standard Grover cost $\pi/\rbra{2\sqrt{p}}$ is $2\sqrt{2}/\pi \approx 0.9003$, a reduction of roughly $10\%$ in the leading constant.
Although this saving is more modest than the $18\%$ achieved by \cref{algo:exact_known_p} when $p$ is known exactly, \cref{algo:fixed_point_lower_bounded_p} itself does not require the actual value of $p$ as input.
By setting $\kappa = \kappa^*\rbra{\theta_0}$ with $\theta_0 = \arcsin\rbra{\sqrt{p_0}}$, we can obtain a universal schedule that works for any actual target probability $p \geq p_0$, as established in the following subsection.

\subsection{The fixed-point weak measurement schedule}
The previous theorem, \cref{thm:fixed_point_expected_cost}, provides the optimal measurement strength when $\theta$ is known exactly.
In particular, however, we may only have a guarantee that $p \geq p_0$ (equivalently $\theta \geq \theta_0$).
The following theorem gives the exact expected oracle cost when the strength is chosen from the lower bound, $\kappa = \kappa^*\rbra{\theta_0}$.
The cost is $O\rbra{1/\sqrt{p_0}}$ uniformly over $p \in [p_0,1/2)$ and decreases as the actual $p$ increases.

\begin{theorem}[Oracle counts for lower-bounded $p \geq p_0$]\label{thm:fixed_point_lower_bounded}
    For any $0 < p_0 \leq p < 1/2$, let $\theta_0 = \arcsin\rbra{\sqrt{p_0}}$ and choose $\kappa = \kappa^*\rbra{\theta_0}$ as in \cref{eq:optimal_kappa}. The expected total number of oracle calls required by \cref{algo:fixed_point_lower_bounded_p} is
    \begin{equation*}
        Q_{\mathrm{total}}\rbra{\theta,\kappa^*\rbra{\theta_0}}
        = \frac{1}{\sqrt{2p_0}}+\frac{\sqrt{p_0}}{\sqrt{2}\,p}
        \leq \frac{1}{\sqrt{2}}\rbra*{\frac{1}{\sqrt{p_0}}+\frac{1}{\sqrt{p}}}
        \leq \sqrt{\frac{2}{p_0}}.
    \end{equation*}
    The exact cost is strictly decreasing in $p$ for fixed $p_0$.
\end{theorem}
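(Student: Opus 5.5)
The plan is to reduce everything to the closed form already proved in \cref{thm:fixed_point_expected_cost}. That theorem holds for every $0<\kappa<1$ and every $\theta\in\rbra{0,\pi/4}$, with no link between $\kappa$ and $\theta$. In the variable $x=\frac{1-\sqrt{1-\kappa}}{1+\sqrt{1-\kappa}}$ it reads
\begin{equation*}
Q_{\mathrm{total}}\rbra{\theta,\kappa}=\frac{1}{x}+\frac{x}{2p}, \qquad p=\sin^2\rbra{\theta}.
\end{equation*}
First I check admissibility. The condition $p<1/2$ gives $\theta\in\rbra{0,\pi/4}$. From $p_0\le p<1/2$ we get $\sqrt{2p_0}\in\rbra{0,1}$, so $\kappa=\kappa^*\rbra{\theta_0}\in\rbra{0,1}$ by \cref{eq:optimal_kappa}. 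Hence \cref{thm:fixed_point_expected_cost} applies to the true angle $\theta$ with this fixed strength.

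Next I identify $x$. The substitution \cref{eq:kappa_x_substitution} is a bijection $\kappa=4x/\rbra{1+x}^2$ between $x\in\rbra{0,1}$ and $\kappa\in\rbra{0,1}$. The strength $\kappa^*\rbra{\theta_0}$ was defined as the image of $x=\sqrt{2p_0}$ (see \cref{eq:x_optimal}). So $x=\sqrt{2p_0}$ exactly, and substituting gives
\begin{equation*}
Q_{\mathrm{total}}=\frac{1}{\sqrt{2p_0}}+\frac{\sqrt{2p_0}}{2p}=\frac{1}{\sqrt{2p_0}}+\frac{\sqrt{p_0}}{\sqrt{2}\,p}.
\end{equation*}
Strict monotonicity in $p$ is then immediate, because the second term is a positive constant times $1/p$.

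For the inequalities, $p_0\le p$ gives $\sqrt{p_0}/p=\sqrt{p_0/p}\cdot p^{-1/2}\le p^{-1/2}$, which yields the middle bound $\frac{1}{\sqrt2}\rbra{p_0^{-1/2}+p^{-1/2}}$. Then $p^{-1/2}\le p_0^{-1/2}$ gives the final bound $\sqrt{2/p_0}$.

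There is no real obstacle here. The only point that needs care is justifying that the earlier closed form may be evaluated at a mismatched pair $\rbra{\theta,\kappa^*\rbra{\theta_0}}$. This is legitimate because \cref{lem:cost_q_closed_form} and \cref{thm:fixed_point_expected_cost} were derived for arbitrary independent $\kappa$ and $\theta$ in the stated ranges, and we have verified both ranges above.
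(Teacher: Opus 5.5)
Your proposal is correct and follows the paper's proof: substitute $x=\sqrt{2p_0}$ into the closed form $Q_{\mathrm{total}}=\frac{1}{x}+\frac{x}{2p}$ from \cref{thm:fixed_point_expected_cost}, then use $\sqrt{p_0}\le\sqrt{p}$ and $p\ge p_0$ for the two inequalities, and the positive $1/p$ term for strict monotonicity. Your explicit admissibility check ($\theta\in\rbra{0,\pi/4}$ and $\kappa^*\rbra{\theta_0}\in\rbra{0,1}$, evaluated at the mismatched pair) is a point the paper leaves implicit.
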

\begin{proof}
    The chosen strength corresponds to $x_0 = \sqrt{2p_0}$ in \cref{eq:kappa_x_substitution}. Substituting into \cref{eq:fixed_point_expected_cost} gives
    \begin{equation*}
        Q_{\mathrm{total}} = \frac{1}{x_0}+\frac{x_0}{2p}
        = \frac{1}{\sqrt{2p_0}}+\frac{\sqrt{p_0}}{\sqrt{2}\,p}.
    \end{equation*}
    The first inequality follows from $\sqrt{p_0}/p \leq 1/\sqrt{p}$, and the second from $p \geq p_0$. The term proportional to $1/p$ decreases strictly as $p$ increases.
\end{proof}

The conventional fixed-point construction of~\cite{YLC14} chooses a predetermined query count from $p_0$ and a target failure probability; that count does not decrease when the actual $p$ is larger.
This fixed-horizon guarantee and our repeat-until-success expected cost are different measures.
To examine the former, \cref{algo:truncated_fixed_point_lower_bounded_p} performs at most $m$ weak measurements, each followed by a Grover rotation on failure, and then measures projectively if no earlier measurement has succeeded.

\begin{algorithm*}[ht]
    \caption{Truncated quantum amplitude amplification with fixed-point weak measurement schedule}
    \label{algo:truncated_fixed_point_lower_bounded_p}
    \begin{algorithmic}[1]
        \Require A quantum oracle $\cA$ (and its inverse $\cA^{\dagger}$) satisfying \cref{eq:oracle_A}, a weak measurement strength $0 < \kappa \leq 1$, and a non-negative integer number $m$ of rounds.
        \Ensure The target state $\ket{1}\ket{\psi_1}$ on success, or a failure flag.
        \State Let $G$ be the Grover rotation as in \cref{eq:grover_operator}.
        \State Prepare an $n$-qubit register in the state $\cA\ket{0^n}$.
        \For {$j\gets1,\ldots,m$}
            \State Perform the weak measurement $\cbra*{M_0 = \ketbra{0}{0}+\sqrt{1-\kappa}\ketbra{1}{1},M_1 = \sqrt{\kappa}\ketbra{1}{1}}$ on the first qubit.
            \If {outcome $= 1$}
                \State \Return \Comment{Register in state $\ket{1}\ket{\psi_1}$}
            \EndIf
            \State Apply $G$ to the register.
        \EndFor
        \State Perform the projective measurement $\cbra*{M_0 = \ketbra{0}{0},M_1 = \ketbra{1}{1}}$ on the first qubit.
        \State \Return the target state if the outcome is $1$, and a failure flag otherwise.
    \end{algorithmic}
\end{algorithm*}

\begin{theorem}\label{thm:truncated_fixed_point_probability}
    For $0 < p_0 \leq p < 1/2$, let $\theta_0 = \arcsin\rbra{\sqrt{p_0}}$ and choose $\kappa = \kappa^*\rbra{\theta_0}$.
    For any non-negative integer $m$, the failure probability of \cref{algo:truncated_fixed_point_lower_bounded_p} satisfies
    \begin{equation}\label{eq:truncated_fixed_point_probability}
        P_{\mathrm{fail}}\rbra{m} < 2\rbra*{\frac{1-\sqrt{2p_0}}{1+\sqrt{2p_0}}}^{m} \leq 2\exp\rbra*{-2\sqrt{2p_0}\,m}.
    \end{equation}
    In particular, for any desired failure probability $\epsilon \in \rbra{0,1/2}$,
    \begin{equation}\label{eq:truncated_rounds}
        m \geq \frac{\ln\rbra{2/\epsilon}}{2\sqrt{2p_0}}
    \end{equation}
    suffices to guarantee success probability at least $1-\epsilon$ uniformly for $p \in [p_0,1/2)$.
    The algorithm uses at most $1+2m$ oracle calls, including preparation.
\end{theorem}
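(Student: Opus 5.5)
The plan is to bound $P_{\mathrm{fail}}(m)$ by the survival probability and then control the survival probability through the spectrum of $A_{\theta,\kappa}$. The algorithm fails only if all $m$ weak measurements fail and the final projective measurement also fails, so
\[
P_{\mathrm{fail}}(m)\le \Pr(T>m)=S_m(\theta,\kappa)=\Abs*{A_{\theta,\kappa}^m u_0}^2
\]
by \cref{eq:m_fixed_point_failures_probability}. Write $s=\sqrt{1-\kappa}$. In the substitution \cref{eq:kappa_x_substitution} we have $s=(1-x)/(1+x)$. With $x_0=\sqrt{2p_0}$ the claimed bound is exactly $2s^m=2\det(A_{\theta,\kappa})^m$. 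So the goal is to show $\Abs{A^m u_0}^2<2s^m$.

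\textbf{Step 1: the eigenvalues are complex.} The discriminant of \cref{eq:characteristic_polynomial} is negative iff $\cos(2\theta)<2\sqrt{s}/(1+s)$. Since $(1+s)^2-4s=(1-s)^2$, we get $2\sqrt{s}/(1+s)=\sqrt{1-x_0^2}=\sqrt{1-2p_0}$. Also $\cos(2\theta)=1-2p\le 1-2p_0<\sqrt{1-2p_0}$ because $0<1-2p_0<1$. Hence both eigenvalues are complex conjugates of modulus $\sqrt{s}$, uniformly for $p\in[p_0,1/2)$.

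\textbf{Step 2: an invariant quadratic form.} Set $M=A/\sqrt{s}$. Then $\det M=1$ and $\abs{\tr M}<2$, so $M$ is real-similar to a rotation. Equivalently, there is a positive definite $P$ with $M^\dagger P M=P$, which means $A^\dagger P A=sP$. For a $2\times2$ matrix $M=\begin{pmatrix}a&b\\c&d\end{pmatrix}$ with $\det M=1$, such a $P$ can be written down explicitly as $P\propto\begin{pmatrix}c&(d-a)/2\\(d-a)/2&-b\end{pmatrix}$ (up to sign). Iterating gives $u^\dagger A^{m\dagger}PA^m u=s^m u^\dagger P u$, and therefore
\[
\Abs{A^m u_0}^2\le \frac{u_0^\dagger P u_0}{\lambda_{\min}(P)}\,s^m .
\]

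\textbf{Step 3: bound the prefactor (main obstacle).} I must show $u_0^\dagger P u_0/\lambda_{\min}(P)<2$, or at least that the condition number satisfies $\lambda_{\max}/\lambda_{\min}<2$. This should use the specific entries $a=\cos 2\theta$, $b=-\sin 2\theta$, $c=s\sin 2\theta$, $d=s\cos 2\theta$ (all divided by $\sqrt{s}$). Then compute $\operatorname{tr}P$ and $\det P$ and use $\lambda_{\max}/\lambda_{\min}=\bigl(\operatorname{tr}+\sqrt{\operatorname{tr}^2-4\det}\bigr)/\bigl(\operatorname{tr}-\sqrt{\operatorname{tr}^2-4\det}\bigr)$.

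I will first try evaluating $u_0^\dagger P u_0/\lambda_{\min}$ directly with $u_0=(\cos\theta,-\sin\theta)^\dagger$, since the alignment of $u_0$ may give a sharper bound than the condition number. The margin from Step 1 is $\sqrt{1-2p_0}-(1-2p)\ge\sqrt{1-2p_0}-(1-2p_0)$, and this is what should keep the ratio strictly below $2$. If the crude condition-number bound fails near the degenerate boundary, I would fall back on the exact two-term formula for $A^m$ (Cayley--Hamilton/Chebyshev form): $A^m=s^{m/2}\bigl(U_{m-1}(\tau)M-U_{m-2}(\tau)I\bigr)$ with $\tau=\tr M/2$. I would then bound $\Abs{A^m u_0}^2$ by a sinusoid of amplitude controlled by $1/\sin\phi$ in terms of the rotation angle $\phi$, and check that $u_0$'s phase keeps this below $2s^m$.

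\textbf{Step 4: remaining claims.} The second inequality in \cref{eq:truncated_fixed_point_probability} follows from $\ln\frac{1-y}{1+y}\le -2y$ for $y\in[0,1)$, since $\ln(1+y)-\ln(1-y)=2(y+y^3/3+\cdots)\ge 2y$. Then \cref{eq:truncated_rounds} gives $2e^{-2\sqrt{2p_0}m}\le\epsilon$, which is the required success probability. Finally, the algorithm makes one preparation call plus at most $m$ Grover rotations of two calls each, so it uses at most $1+2m$ oracle calls.
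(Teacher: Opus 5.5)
Your Steps 1, 2 and 4 are correct, and Step 1 is a tidier way to get $\Delta<0$ than the paper's. The problem is your first display. It discards exactly what the constant $2$ relies on, and the goal you set, $\Abs{A_{\theta,\kappa}^m u_0}^2<2s^m$, is false.

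\textbf{A counterexample at $m=1$.} Take $p=p_0=0.1$. Then $x_0=\sqrt{0.2}$, $\kappa=4x_0/(1+x_0)^2\approx 0.854$ and $s\approx 0.382$. So $S_1=1-\kappa p\approx 0.915$, while $2s\approx 0.764$. At $p=p_0=0.01$ one finds $S_3/s^3\approx 2.11$.

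\textbf{The general picture.} Write $s^{-m/2}A^m u_0=\cos(m\phi)\,u_0+\sin(m\phi)\,v$ for a fixed real vector $v$. A direct computation gives
\[
\sup_{\psi}\Abs{\cos\psi\,u_0+\sin\psi\,v}^2=\Bigl(1-\sqrt{2p_0/(4p(1-p))}\Bigr)^{-1},
\]
which tends to $2+\sqrt{2}$ as $p=p_0\to 0$. In that regime $\phi\approx\sqrt{2p_0}$ is small, so integer multiples $m\phi$ essentially attain this supremum. Your prefactor $u_0^\dagger P u_0/\lambda_{\min}(P)$ is exactly this supremum, because the normalised orbit traces the ellipse $w^\dagger P w=u_0^\dagger P u_0$. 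So Step 3 cannot be completed. Using the alignment of $u_0$ does not help, and neither does the Chebyshev fallback, since both still bound $\Abs{A^m u_0}^2$.

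\textbf{The missing idea: keep the final projective measurement.} The target component that inflates $S_m$ is precisely what that measurement catches. The correct quantity is
\[
P_{\mathrm{fail}}(m)=\abs{e_0^\dagger R_\theta A^m u_0}^2=s^m\abs{\ell^\dagger M^m u_0}^2,\qquad \ell=R_\theta^\dagger e_0 .
\]
Your invariant form still works here, via Cauchy--Schwarz in the $P$-inner product:
\[
\abs{\ell^\dagger M^m u_0}^2\le(\ell^\dagger P^{-1}\ell)(u_0^\dagger P u_0).
\]
This constant equals the supremum over $\psi$ of $(1-p)(\cos\psi+d\sin\psi)^2$, namely $(1-p)(1+d^2)$, where $d=(\sqrt{2p_0}-2p)/\sqrt{4p(1-p)-2p_0}$. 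This would be a valid alternative route giving the same constant as the paper.

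\textbf{The paper's route.} The paper computes $A^m u_0$ from the eigen-decomposition and obtains $P_{\mathrm{fail}}(m)=s^m(1-p)\abs{\cos(m\phi)+d\sin(m\phi)}^2$. It then applies Cauchy--Schwarz and shows
\[
(1-p)(1+d^2)=\frac{4p(1-p)(1-\sqrt{2p_0})}{4p(1-p)-2p_0}\le\frac{2-2p_0}{1+\sqrt{2p_0}}<2,
\]
using monotonicity in $p$ on $[p_0,1/2)$.
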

\begin{proof}
    Recall from \cref{eq:characteristic_polynomial} that the characteristic polynomial of $A_{\theta,\kappa}$ is $\chi\rbra{\lambda} = \lambda^2 - \rbra*{1+\sqrt{1-\kappa}}\cos\rbra{2\theta}\lambda + \sqrt{1-\kappa}$.
    Its discriminant is
    \begin{equation*}
        \Delta = \rbra*{1+\sqrt{1-\kappa}}^2\cos^2\rbra{2\theta} - 4\sqrt{1-\kappa} = \rbra*{1-\sqrt{1-\kappa}}^2 - \rbra*{1+\sqrt{1-\kappa}}^2\sin^2\rbra{2\theta}.
    \end{equation*}
    Substituting $\kappa = \kappa^*\rbra{\theta_0}$ from \cref{thm:fixed_point_expected_cost} yields
    \begin{align*}
        \Delta &= \rbra*{1-\sqrt{1-\kappa^*\rbra{\theta_0}}}^2 - \rbra*{1+\sqrt{1-\kappa^*\rbra{\theta_0}}}^2\sin^2\rbra{2\theta} \\
        &= \rbra*{1+\sqrt{1-\kappa^*\rbra{\theta_0}}}^2\rbra*{\rbra*{\frac{1-\sqrt{1-\kappa^*\rbra{\theta_0}}}{1+\sqrt{1-\kappa^*\rbra{\theta_0}}}}^2 - \sin^2\rbra{2\theta}}.
    \end{align*}
    With the optimality condition (see \cref{eq:kappa_x_substitution,eq:x_optimal}), this becomes
    \begin{equation}\label{eq:delta}
            \Delta = \rbra*{1+\sqrt{1-\kappa^*\rbra{\theta_0}}}^2\rbra*{2\sin^2\rbra{\theta_0}-\sin^2\rbra{2\theta}} = \rbra*{1+\sqrt{1-\kappa^*\rbra{\theta_0}}}^2\rbra*{2p_0-4p\rbra{1-p}}.
    \end{equation}
    Because $0 < p_0 \leq p < 1/2$, we have $4p\rbra{1-p} > 2p \geq 2p_0$, and hence
    \begin{equation*}
        \Delta < \rbra*{1+\sqrt{1-\kappa^*\rbra{\theta_0}}}^2\rbra{2p_0-2p} \leq 0.
    \end{equation*}
    Therefore, the eigenvalues of $A_{\theta,\kappa^*\rbra{\theta_0}}$ are distinct complex conjugates
    \begin{equation}\label{eq:fixed_point_eigenvalues}
        \lambda_1 = \frac{\rbra*{1+\sqrt{1-\kappa^*\rbra{\theta_0}}}\cos\rbra{2\theta} + i\sqrt{-\Delta}}{2}, \qquad \lambda_2 = \frac{\rbra*{1+\sqrt{1-\kappa^*\rbra{\theta_0}}}\cos\rbra{2\theta} - i\sqrt{-\Delta}}{2}.
    \end{equation}
    Since $\lambda_1\lambda_2 = \sqrt{1-\kappa^*\rbra{\theta_0}}$, we write
    \begin{equation}\label{eq:s}
        s = \sqrt{1-\kappa^*\rbra{\theta_0}}, \qquad \lambda_1 = \sqrt{s}e^{i\phi}, \quad \lambda_2 = \sqrt{s}e^{-i\phi}
    \end{equation}
    with
    \begin{equation}\label{eq:phi_cos_sin}
        \cos\rbra{\phi} = \frac{\rbra*{1+s}\cos\rbra{2\theta}}{2\sqrt{s}}, \qquad \sin\rbra{\phi} = \frac{\sqrt{-\Delta}}{2\sqrt{s}}.
    \end{equation}
    We now compute $A_{\theta,\kappa^*\rbra{\theta_0}}^m u_0$.
    Using the spectral decomposition, there exist $\alpha_1,\alpha_2,\beta_1,\beta_2 \in \CC$ such that
    \begin{equation*}
        A^m_{\theta,\kappa^*\rbra{\theta_0}}u_0 = s^{m/2}\begin{pmatrix}
            \alpha_1 \cos\rbra{m\phi}+ i\alpha_2\sin\rbra{m\phi} \\
            \beta_1 \cos\rbra{m\phi} + i\beta_2\sin\rbra{m\phi}
        \end{pmatrix}.
    \end{equation*}
    Here $u_0 = \rbra*{\cos\rbra{\theta},-\sin\rbra{\theta}}^{\dagger}$ and $A_{\theta,\kappa^*\rbra{\theta_0}}u_0 = \rbra*{\cos\rbra{\theta},s\sin\rbra{\theta}}^{\dagger}$.
    From the initial conditions $m = 0$ and $m = 1$ we obtain
    \begin{align*}
        \alpha_1 &= \cos\rbra{\theta}, &
        s^{1/2}\rbra{\alpha_1\cos\rbra{\phi}+i\alpha_2\sin\rbra{\phi}} &= \cos\rbra{\theta}, \\
        \beta_1 &= -\sin\rbra{\theta}, &
        s^{1/2}\rbra{\beta_1\cos\rbra{\phi}+i\beta_2\sin\rbra{\phi}} &= s\sin\rbra{\theta}.
    \end{align*}
    Solving them gives
    \begin{equation}\label{eq:fixed_point_decompsitions}
        \begin{aligned}
            \alpha_1 &= \cos\rbra{\theta}, &
            \alpha_2 &= -i\cos\rbra{\theta}\,\frac{1-s^{1/2}\cos\rbra{\phi}}{s^{1/2}\sin\rbra{\phi}}, \\
            \beta_1 &= -\sin\rbra{\theta}, &
            \beta_2 &= -i\sin\rbra{\theta}\,\frac{s+s^{1/2}\cos\rbra{\phi}}{s^{1/2}\sin\rbra{\phi}}.
        \end{aligned}
    \end{equation}
    Thus, after $m$ rounds of failures, the unnormalised state in the rotated coordinates is
    \begin{align*}
        A^m_{\theta,\kappa^*\rbra{\theta_0}}u_0
        &= s^{m/2}\begin{pmatrix}
            \cos\rbra{\theta}\rbra*{\cos\rbra{m\phi}+\dfrac{1-s^{1/2}\cos\rbra{\phi}}{s^{1/2}\sin\rbra{\phi}}\sin\rbra{m\phi}}\\[1ex]
            \sin\rbra{\theta}\rbra*{-\cos\rbra{m\phi}+\dfrac{s+s^{1/2}\cos\rbra{\phi}}{s^{1/2}\sin\rbra{\phi}}\sin\rbra{m\phi}}
        \end{pmatrix} \\
        &= s^{m/2}\begin{pmatrix}
            \cos\rbra{\theta}\rbra*{\cos\rbra{m\phi}+\dfrac{\sqrt{2p_0}+2p}{\sqrt{4p\rbra{1-p}-2p_0}}\sin\rbra{m\phi}}\\[1ex]
            \sin\rbra{\theta}\rbra*{-\cos\rbra{m\phi}+\dfrac{2-\sqrt{2p_0}-2p}{\sqrt{4p\rbra{1-p}-2p_0}}\sin\rbra{m\phi}}
        \end{pmatrix}.
    \end{align*}
    The last equality uses \cref{eq:phi_cos_sin,eq:delta} and $\rbra{1-s}/\rbra{1+s} = \sqrt{2p_0}$.
    The physical register state is $R_\theta A^m_{\theta,\kappa^*\rbra{\theta_0}}u_0$.
    Since the final measurement of \cref{algo:truncated_fixed_point_lower_bounded_p} checks for $\ket{0}\ket{\psi_0}$ and $\ket{1}\ket{\psi_1}$, its failure probability is
    \begin{equation}\label{eq:probability_failure_def}
            P_{\mathrm{fail}}\rbra{m} = \abs*{e_0^{\dagger}R_\theta A^m_{\theta,\kappa^*\rbra{\theta_0}}u_0}^2 = s^m\rbra{1-p}\abs*{\cos\rbra{m\phi}+\frac{\sqrt{2p_0}-2p}{\sqrt{4p\rbra{1-p}-2p_0}}\sin\rbra{m\phi}}^2,
    \end{equation}
    where $e_0 = \rbra{1,0}^{\dagger}$.
    The last equality uses $\cos\rbra{2\theta} = 1-2p$, $\sin\rbra{2\theta}\sin\rbra{\theta} = 2p\cos\rbra{\theta}$, and
    \begin{equation*}
        \rbra{1-2p}\rbra*{\sqrt{2p_0}+2p}-2p\rbra*{2-\sqrt{2p_0}-2p} = \sqrt{2p_0}-2p.
    \end{equation*}
    Applying the Cauchy--Schwarz inequality to \cref{eq:probability_failure_def}, we obtain
    \begin{align}\label{eq:probability_failure1}
        P_{\mathrm{fail}}\rbra{m}
        & \leq s^m\rbra{1-p}\rbra*{\cos^2\rbra{m\phi}+\sin^2\rbra{m\phi}}\rbra*{1+\frac{\rbra*{\sqrt{2p_0}-2p}^2}{4p\rbra{1-p}-2p_0}}\notag\\
        &= s^m\rbra{1-p}\rbra*{1+\frac{\rbra*{\sqrt{2p_0}-2p}^2}{4p\rbra{1-p}-2p_0}}.
    \end{align}
    Simplifying the coefficient gives
    \begin{equation*}
        \rbra{1-p}\rbra*{1+\frac{\rbra*{\sqrt{2p_0}-2p}^2}{4p\rbra{1-p}-2p_0}}
        = \rbra{1-p}\frac{4p\rbra{1-p}-4p\sqrt{2p_0}+4p^2}{4p\rbra{1-p}-2p_0} = \frac{4p\rbra{1-p}\rbra{1-\sqrt{2p_0}}}{4p\rbra{1-p}-2p_0}.
    \end{equation*}
    Since $4p\rbra{1-p}$ increases on $\rbra{0,1/2}$ and $z/\rbra{z-2p_0}$ decreases for $z > 2p_0$, the condition $p \geq p_0$ yields
    \begin{equation*}
        \frac{4p\rbra{1-p}\rbra{1-\sqrt{2p_0}}}{4p\rbra{1-p}-2p_0}
        \leq \frac{4p_0\rbra{1-p_0}\rbra{1-\sqrt{2p_0}}}{4p_0\rbra{1-p_0}-2p_0}
        = \frac{\rbra{2-2p_0}\rbra{1-\sqrt{2p_0}}}{1-2p_0}
        = \frac{2-2p_0}{1+\sqrt{2p_0}} < 2.
    \end{equation*}
    Applying this to \cref{eq:probability_failure1} yields
    \begin{equation}\label{eq:probability_failure2}
        P_{\mathrm{fail}}\rbra{m} \leq \frac{4p\rbra{1-p}\rbra{1-\sqrt{2p_0}}}{4p\rbra{1-p}-2p_0}s^m < 2s^m.
    \end{equation}
    Finally, from the optimality condition (\cref{eq:optimal_kappa}) for $\kappa^*\rbra{\theta_0}$ and \cref{eq:s}, we have
    \begin{equation*}
        s = \sqrt{1-\kappa^*\rbra{\theta_0}}
        = \frac{1-\sqrt{2}\sin\rbra{\theta_0}}{1+\sqrt{2}\sin\rbra{\theta_0}}
        = \frac{1-\sqrt{2p_0}}{1+\sqrt{2p_0}}.
    \end{equation*}
    Thus,
    \begin{equation*}
        P_{\mathrm{fail}}\rbra{m} < 2\rbra*{\frac{1-\sqrt{2p_0}}{1+\sqrt{2p_0}}}^m.
    \end{equation*}
    To guarantee $P_{\mathrm{fail}}\rbra{m} \leq \epsilon \in \rbra{0,1/2}$, it is enough to choose
    \begin{equation*}
        m \geq \frac{\ln\rbra{2/\epsilon}}{\ln\rbra*{\frac{1+\sqrt{2p_0}}{1-\sqrt{2p_0}}}}.
    \end{equation*}
    Using $\ln\rbra*{\frac{1+x}{1-x}} \geq 2x$ for $0 < x < 1$, with $x = \sqrt{2p_0}$, we obtain
    \begin{equation}\label{eq:log_1k_1k}
        -\ln\rbra{s} = \ln\rbra*{\frac{1+\sqrt{2p_0}}{1-\sqrt{2p_0}}} \geq 2\sqrt{2p_0}.
    \end{equation}
    This also gives $2s^m \leq 2\exp\rbra*{-2\sqrt{2p_0}\,m}$, completing \cref{eq:truncated_fixed_point_probability}.
    Consequently, a simpler sufficient condition is
    \begin{equation*}
        m \geq \frac{\ln\rbra{2/\epsilon}}{2\sqrt{2p_0}}
        \geq \frac{\ln\rbra{2/\epsilon}}{\ln\rbra*{\frac{1+\sqrt{2p_0}}{1-\sqrt{2p_0}}}},
    \end{equation*}
    which is \cref{eq:truncated_rounds}. Each unsuccessful round uses one Grover rotation, so there are at most $m$ rotations and one preparation, giving at most $1+2m$ oracle calls.
\end{proof}

To compare it with fixed-point quantum search, set $\epsilon = \delta^2$ and choose
$m = \ceil*{\frac{\ln\rbra{2/\delta^2}}{2\sqrt{2p_0}}}$.
Then, the complete algorithm uses at most $1+2m$ oracle calls.
For fixed $\delta$ and $p_0 \to 0$, this sufficient budget has leading term $\frac{\sqrt{2}\ln\rbra{\sqrt{2}/\delta}}{\sqrt{p_0}}$.
The familiar small-error query expression for fixed-point search~\cite{YLC14} is $\ln\rbra{2/\delta}/\sqrt{p_0}$, with a natural logarithm.
The advantage established in \cref{thm:fixed_point_lower_bounded} concerns expected cost and its dependence on the actual $p$.
Early success can shorten a run of the truncated algorithm, whose expected cost is at most that of the untruncated loop including preparation.

\section{Adaptive schedule for unknown \texorpdfstring{$p$}{p}}
\label{sec:adaptive_for_unknown_p}
We now consider the most challenging setting in which the target probability $p$ is completely unknown.
In this case, neither the exact schedule of \cref{sec:exact_for_known_p} nor the fixed-point schedule of \cref{sec:fixed_point_for_lower_bounded_p} (which requires a known lower bound $p_0$) is directly applicable.
We therefore adopt a general adaptive strategy shown in \cref{algo:adaptive_unknown_p}.

\begin{algorithm*}[ht]
    \caption{Quantum amplitude amplification with adaptive weak measurement schedule}
    \label{algo:adaptive_unknown_p}
    \begin{algorithmic}[1]
        \Require A quantum oracle $\cA$ (and its inverse $\cA^{\dagger}$) satisfying \cref{eq:oracle_A} and a weak measurement strength function $\kappa:\NN \to (0,1)$.

        \Ensure The target state $\ket{1}\ket{\psi_1}$ upon termination.
        
        \State Let $G$ be the Grover rotation as in \cref{eq:grover_operator}.

        \State $t \gets 0$

        \State Initialise an $n$-qubit register to the state $\cA\ket{0^n}$.
        \Repeat
            \State $t \gets t+1$
            \State Perform the weak measurement $\cbra*{M_0 = \ketbra{0}{0}+\sqrt{1-\kappa\rbra{t}}\ketbra{1}{1}, M_1 = \sqrt{\kappa\rbra{t}}\ketbra{1}{1}}$ on the first qubit.
            \If {outcome $= 1$}
                \State \Return \Comment{Success: register in state $\ket{1}\ket{\psi_1}$}
            \EndIf
            \State Apply $G$ to the register.
        \Until{success}
    \end{algorithmic}
\end{algorithm*}
To find a suitable strength function $\kappa:\NN \to (0,1)$, we first analyse the survival probability of \cref{algo:adaptive_unknown_p} after $m$ consecutive failures, and then use this analysis to design an effective control law.

\subsection{Survival probability under adaptive weak measurement strengths}
We work in the same two-dimensional subspace $\cbra{\ket{0}\ket{\psi_0}, \ket{1}\ket{\psi_1}}$ as in \cref{sec:fixed_point_for_lower_bounded_p}.
Let
\begin{gather*}
    u_0 = \begin{pmatrix}
        \cos\rbra{\theta} \\
        -\sin\rbra{\theta}
    \end{pmatrix}, \qquad
    R_\theta = \begin{pmatrix}
        \cos\rbra{2\theta} & -\sin\rbra{2\theta} \\
        \sin\rbra{2\theta} & \cos\rbra{2\theta}
    \end{pmatrix}, \qquad D_{\kappa\rbra{t}} = \begin{pmatrix}
        1 & 0 \\
        0 & \sqrt{1-\kappa\rbra{t}}
    \end{pmatrix}, \qquad
    A_{\theta,\kappa\rbra{t}} = D_{\kappa\rbra{t}}R_{\theta},
\end{gather*}
then the unnormalised state after $m$ consecutive failures of \cref{algo:adaptive_unknown_p} is $R_\theta u_m$, where
\begin{equation}\label{eq:u_m_adaptive}
    u_{m} = A_{\theta,\kappa\rbra{m}}\cdots A_{\theta,\kappa\rbra{1}}u_0,
\end{equation}
and the survival probability is $S_m\rbra{\theta, \kappa} = \Abs*{u_m}^2$.

To analyse this probability recursively, for each round $t \geq 1$, define $\alpha_t$ as the angle of the state just before the weak measurement in round $t$ (to reach this round, the algorithm must have experienced $t-1$ consecutive failures).
From \cref{eq:failure_state}, after a failure outcome, the post-measurement state has angle
$\arctan\rbra*{\sqrt{1-\kappa\rbra{t}}\tan\rbra{\alpha_{t}}}$.
Applying the subsequent Grover rotation adds $2\theta$ to the angle, giving the recurrence\footnote{The branch of the arctangent is chosen continuously with $\alpha_t$, starting from $\arctan\rbra{0} = 0$. The formula extends by continuity when $\alpha_t = \pi/2+k\pi$ for an integer $k$.}
\begin{equation}\label{eq:adaptive_alpha_t}
    \alpha_{t+1} = \arctan\rbra*{\sqrt{1-\kappa\rbra{t}}\tan\rbra{\alpha_{t}}} +2\theta, \qquad t\geq 1
\end{equation}
with the initial angle $\alpha_1 = \theta$ (since the register starts in $\cA\ket{0^n}$).
The following proposition expresses the survival probability after $m$ rounds in terms of the product of conditional failure probabilities.

\begin{proposition}\label{prop:survival_product}
    For any $m \geq 0$, $S_m\rbra{\theta, \kappa} = \prod_{t=1}^{m}\rbra*{1 - \kappa\rbra{t}\sin^2\rbra*{\alpha_t}}$.
\end{proposition}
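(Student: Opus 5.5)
We prove the identity by induction on $m$. The idea is that the unnormalised vector $u_m$ always has the direction of the physical state just before round $m+1$, and its squared norm picks up exactly one conditional failure probability per round. Set $v_t = R_\theta u_{t-1}$ for $t \geq 1$; this is the unnormalised physical state just before the weak measurement in round $t$. The inductive claim is
\begin{equation*}
    v_t = \sqrt{S_{t-1}}\begin{pmatrix}\cos\rbra{\alpha_t}\\ \sin\rbra{\alpha_t}\end{pmatrix}
    \quad\text{(up to an overall sign)},
    \qquad S_{t-1} = \prod_{s=1}^{t-1}\rbra*{1-\kappa\rbra{s}\sin^2\rbra{\alpha_s}}.
\end{equation*}

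The base case $t=1$ (that is, $m=0$) holds because $v_1 = R_\theta u_0 = \psi_\theta = \rbra{\cos\rbra{\theta},\sin\rbra{\theta}}^{\dagger}$ with $\alpha_1 = \theta$. The empty product is $1$, and $\Abs{u_0}^2 = 1$.

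For the inductive step, note that $u_t = D_{\kappa\rbra{t}}R_\theta u_{t-1} = D_{\kappa\rbra{t}}v_t$. We then use the single-measurement computation in \cref{eq:failure_p,eq:failure_state}. Applying $D_{\kappa\rbra{t}}$ to $\rbra{\cos\rbra{\alpha_t},\sin\rbra{\alpha_t}}^{\dagger}$ gives the vector $\rbra{\cos\rbra{\alpha_t},\sqrt{1-\kappa\rbra{t}}\sin\rbra{\alpha_t}}^{\dagger}$. Its squared norm is $1-\kappa\rbra{t}\sin^2\rbra{\alpha_t}$, and its direction is the angle $\beta_t = \arctan\rbra{\sqrt{1-\kappa\rbra{t}}\tan\rbra{\alpha_t}}$. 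Hence $\Abs{u_t}^2 = S_{t-1}\rbra{1-\kappa\rbra{t}\sin^2\rbra{\alpha_t}}$, and $u_t$ is a positive multiple of $\pm\rbra{\cos\rbra{\beta_t},\sin\rbra{\beta_t}}^{\dagger}$. Applying $R_\theta$ adds $2\theta$ to the angle, so by \cref{eq:adaptive_alpha_t} the vector $v_{t+1}$ has angle $\alpha_{t+1}$ and squared norm $S_t$. Since $S_m = \Abs{u_m}^2$, the case $t=m$ gives the proposition.

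The only delicate point is the branch of the arctangent and the associated sign. When $\alpha_t$ passes $\pi/2$, $\cos\rbra{\alpha_t}$ becomes negative, and we must check that the continuously chosen branch in the footnote to \cref{eq:adaptive_alpha_t} agrees with the actual direction of $D_{\kappa\rbra{t}}v_t$ up to an overall sign.

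To handle this, we observe that $D_{\kappa\rbra{t}}$ has positive diagonal entries, so it maps each open quadrant to itself and fixes the coordinate axes. Therefore the angle of $D_{\kappa\rbra{t}}\rbra{\cos\rbra{\alpha},\sin\rbra{\alpha}}^{\dagger}$ lies in the same quadrant as $\alpha$, depends continuously on $\alpha$, and has tangent $\sqrt{1-\kappa\rbra{t}}\tan\rbra{\alpha}$. This is exactly the continuous branch that equals $\alpha$ at multiples of $\pi/2$. At points where $\cos\rbra{\alpha_t}=0$, both sides fix the axis, which matches the continuity convention.

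Overall signs are irrelevant, because only squared norms and $\sin^2\rbra{\alpha_t}$ enter the product. The conditional probability of failure in round $t$, given survival so far, is $1-\kappa\rbra{t}\sin^2\rbra{\alpha_t}$ by \cref{eq:failure_p}. This matches the probabilistic reading $S_m = \Pr\rbra{T>m} = \prod_{t=1}^{m}\Pr\rbra{\text{fail at } t \mid T > t-1}$ and serves as a consistency check.
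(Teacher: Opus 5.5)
Your proof is correct and is essentially the paper's induction. The paper reads $S_m$ as $\Pr(T>m)$ and multiplies $S_{t-1}$ by the conditional failure probability $1-\kappa(t)\sin^2(\alpha_t)$ from \cref{eq:failure_p}, whereas you obtain the same factor directly from $\Abs{u_t}^2 = \Abs{D_{\kappa(t)}R_\theta u_{t-1}}^2$ via the invariant $R_\theta u_{t-1} = \sqrt{S_{t-1}}\,(\cos(\alpha_t),\sin(\alpha_t))^{\dagger}$; your quadrant argument for the arctangent branch makes explicit what the paper leaves implicit and later uses in \cref{eq:u_t_alpha_t}.
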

\begin{proof}
    In round $t \geq 1$, the conditional probability of a failure (given that no success occurred earlier) is, by \cref{eq:failure_p}, $ s_t = 1-\kappa\rbra{t}\sin^2\rbra{\alpha_t}$.
    The survival probabilities therefore satisfy
    \begin{equation*}
        S_{t}\rbra{\theta,\kappa} = s_t S_{t-1}\rbra{\theta,\kappa} = \rbra*{1-\kappa\rbra{t}\sin^2\rbra{\alpha_t}}S_{t-1}\rbra{\theta,\kappa}.
    \end{equation*}
    Since the base case $S_{0}\rbra{\theta,\kappa} = 1 = \prod_{t=1}^{0}\rbra*{1 - \kappa\rbra{t}\sin^2\rbra*{\alpha_t}}$ holds, with the empty product equal to $1$, we obtain the product formula by induction on $m$.
\end{proof}

From \cref{prop:survival_product}, the survival probability can be estimated via the following exponential bounds.

\begin{lemma}\label{lem:survival_probability_bound}
    For any $m \geq 0$,
    \begin{equation}\label{eq:survival_probability_bound1}
        \exp\rbra*{-\sum_{t=1}^m\frac{\kappa\rbra{t}\sin^2\rbra{\alpha_t}}{1-\kappa\rbra{t}\sin^2\rbra{\alpha_t}}} \leq S_m\rbra{\theta, \kappa} \leq \exp\rbra*{-\sum_{t=1}^m\kappa\rbra{t}\sin^2\rbra{\alpha_t}}.
    \end{equation}
    In particular, for any $0 \leq m \leq \floor*{\pi/\rbra{8\theta}}$,
    \begin{equation}\label{eq:survival_probability_bound2}
        \exp\rbra*{-2\sum_{t=1}^m\kappa\rbra{t}\sin^2\rbra{\alpha_t}} \leq S_m\rbra{\theta, \kappa} \leq \exp\rbra*{-\sum_{t=1}^m\kappa\rbra{t}\sin^2\rbra{\alpha_t}}.
    \end{equation}
\end{lemma}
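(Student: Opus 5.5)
The plan is to take logarithms in the product formula of \cref{prop:survival_product} and apply elementary scalar bounds on $\ln\rbra{1-x}$ factor by factor. Write $x_t = \kappa\rbra{t}\sin^2\rbra{\alpha_t}$. Since $\kappa\rbra{t} \in \rbra{0,1}$, each $x_t$ lies in $[0,1)$, so every factor $1-x_t$ is positive and
\begin{equation*}
    \ln S_m\rbra{\theta,\kappa} = \sum_{t=1}^m \ln\rbra{1-x_t}.
\end{equation*}
For $x \in [0,1)$ I will use the two standard inequalities
\begin{equation*}
    -\frac{x}{1-x} \leq \ln\rbra{1-x} \leq -x .
\end{equation*}
The upper bound is concavity of $\ln$, i.e.\ $\ln y \leq y-1$ at $y = 1-x$. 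The lower bound is the same inequality applied to $1/\rbra{1-x}$, namely $\ln\rbra{1/\rbra{1-x}} \leq x/\rbra{1-x}$. Summing over $t$ and exponentiating gives \cref{eq:survival_probability_bound1} directly; the case $m=0$ is trivial, since both sides equal $1$.

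For \cref{eq:survival_probability_bound2} the upper bound is unchanged. For the lower bound it suffices to show that $x_t \leq 1/2$ for every $t \leq m$, since then $1/\rbra{1-x_t} \leq 2$ and hence $x_t/\rbra{1-x_t} \leq 2x_t$. Because $\kappa\rbra{t} < 1$, it is enough to show $\sin^2\rbra{\alpha_t} \leq 1/2$, i.e.\ $0 \leq \alpha_t \leq \pi/4$. I will prove by induction on $t$ that
\begin{equation*}
    0 \leq \alpha_t \leq \rbra{2t-1}\theta \qquad \text{whenever } \rbra{2t-1}\theta < \pi/2 .
\end{equation*}
The base case is $\alpha_1 = \theta$. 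For the step, note that if $\alpha_t \in [0,\pi/2)$ then $0 \leq \sqrt{1-\kappa\rbra{t}}\tan\rbra{\alpha_t} \leq \tan\rbra{\alpha_t}$. On this range the continuous branch of the arctangent is the principal one, so the post-measurement angle lies in $[0,\alpha_t]$. Then \cref{eq:adaptive_alpha_t} gives $0 \leq \alpha_{t+1} \leq \alpha_t + 2\theta \leq \rbra{2t+1}\theta$.

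For $t \leq m \leq \floor*{\pi/\rbra{8\theta}}$ we have $\rbra{2t-1}\theta < 2m\theta \leq \pi/4$, so the induction stays in the admissible range and yields $\alpha_t \leq \pi/4$, as required.

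The only delicate point is this angle control: the inductive hypothesis must keep $\alpha_t$ inside $[0,\pi/2)$ so that the branch convention of the footnote to \cref{eq:adaptive_alpha_t} coincides with the principal arctangent. The cutoff $m \leq \floor*{\pi/\rbra{8\theta}}$ is exactly what guarantees this, with room to spare, because a failed measurement only decreases the angle and the monotone bound $\alpha_t \leq \rbra{2t-1}\theta$ never approaches $\pi/2$. Everything else is routine.
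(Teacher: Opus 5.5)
Your proposal is correct and follows essentially the same route as the paper. Both take logarithms in the product formula, apply $-x/(1-x)\le\ln(1-x)\le -x$ termwise, and for the second bound use the induction $0\le\alpha_t\le(2t-1)\theta\le\pi/4$ to get $\kappa(t)\sin^2(\alpha_t)\le 1/2$; your explicit check that $\alpha_t$ stays in $[0,\pi/2)$, so that the arctangent branch is the principal one, is a small piece of care the paper leaves implicit.
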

\begin{proof}
    It is easy to verify that for all $x \in [0,1)$,
    \begin{equation*}
        -\frac{x}{1-x} \leq \ln\rbra{1-x} \leq -x.
    \end{equation*}
    Applying this inequality with $x = \kappa\rbra{t}\sin^2\alpha_t$, which lies in $[0,1)$, gives
    \begin{equation*}
        -\frac{\kappa\rbra{t}\sin^2\rbra{\alpha_t}}{1-\kappa\rbra{t}\sin^2\rbra{\alpha_t}} \leq \ln\rbra*{1-\kappa\rbra{t}\sin^2\rbra{\alpha_t}} \leq - \kappa\rbra{t}\sin^2\rbra{\alpha_t}.
    \end{equation*}
    Summing over $t = 1,\dots,m$ yields
    \begin{equation}\label{eq:survival_probability_bound_ln}
        -\sum_{t=1}^m\frac{\kappa\rbra{t}\sin^2\rbra{\alpha_t}}{1-\kappa\rbra{t}\sin^2\rbra{\alpha_t}} \leq \sum_{t=1}^m\ln\rbra*{1-\kappa\rbra{t}\sin^2\rbra{\alpha_t}} \leq - \sum_{t=1}^m\kappa\rbra{t}\sin^2\rbra{\alpha_t}.
    \end{equation}
    Exponentiating and recalling from \cref{prop:survival_product} give \cref{eq:survival_probability_bound1}.

    Now assume $1\leq t \leq \floor*{\pi/\rbra{8\theta}}$.
    From the recurrence \cref{eq:adaptive_alpha_t}, we have $\alpha_{t+1} \leq \alpha_t +2\theta$ while $0\leq\alpha_t\leq\pi/4$, with $\alpha_1 = \theta$.
    Thus, we obtain by induction
    \begin{equation}\label{eq:alpha_t_bound}
        0\leq\alpha_t \leq \rbra{2t-1}\theta \leq 2t\theta \leq \frac{\pi}{4} \quad \text{for $1\leq t \leq \floor*{\pi/\rbra{8\theta}}$.}
    \end{equation}
    Consequently $\sin^2\rbra{\alpha_t} \leq \sin^2\rbra{\pi/4} = 1/2$.
    Therefore,
    \begin{equation*}
        \frac{\kappa\rbra{t}\sin^2\rbra{\alpha_t}}{1-\kappa\rbra{t}\sin^2\rbra{\alpha_t}} \leq \frac{\kappa\rbra{t}\sin^2\rbra{\alpha_t}}{1-1\cdot \frac{1}{2}} = 2\kappa\rbra{t}\sin^2\rbra{\alpha_t}.
    \end{equation*}
    Hence, for any $0\leq m \leq \floor*{\pi/\rbra{8\theta}}$,
    \begin{equation*}
        -\sum_{t=1}^m\frac{\kappa\rbra{t}\sin^2\rbra{\alpha_t}}{1-\kappa\rbra{t}\sin^2\rbra{\alpha_t}} \geq -2\sum_{t=1}^m\kappa\rbra{t}\sin^2\rbra{\alpha_t}.
    \end{equation*}
    Combining this with \cref{eq:survival_probability_bound_ln} and exponentiating gives \cref{eq:survival_probability_bound2}.
\end{proof}

With \cref{lem:survival_probability_bound}, we focus on the special case $m = \floor*{\pi/\rbra{8\theta}}$.
To guide the choice of schedule, we seek a constant probability of success within $\floor*{\pi/\rbra{8\theta}}$ rounds, i.e., $1-S_{\floor*{\pi/\rbra{8\theta}}}\rbra{\theta, \kappa} = \Theta\rbra{1}$.
From the bound in \cref{eq:survival_probability_bound2}, this condition translates into
\begin{equation*}
    \sum_{t=1}^{\floor*{\pi/\rbra{8\theta}}}\kappa\rbra{t}\sin^2\rbra{\alpha_t} = \Omega\rbra{1}.
\end{equation*}
Then, we can characterise the needed order of $\kappa\rbra{t}$ as follows.

\begin{lemma}\label{lem:kappa_order}
    Let $N_\theta=\floor*{\pi/\rbra{8\theta}}$.
    The scale $1/t$ is critical in the following sense.
    As $\theta\to 0$, if $\kappa\rbra{t}=\Theta\rbra{1/t}$, then $\sum_{t=1}^{N_\theta}\kappa\rbra{t}\sin^2\rbra{\alpha_t}=\Theta\rbra{1}$; if $\kappa\rbra{t}=o\rbra{1/t}$ or $\kappa\rbra{t}=\omega\rbra{1/t}$, then $\sum_{t=1}^{N_\theta}\kappa\rbra{t}\sin^2\rbra{\alpha_t}=o\rbra{1}$.
    Hence, among the three asymptotic regimes considered above, achieving a $\Theta\rbra{1}$ sum requires $\kappa\rbra{t}=\Theta\rbra{1/t}$.
\end{lemma}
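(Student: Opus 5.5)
The plan is to reduce everything to two-sided control of the angles $\alpha_t$ for $1\le t\le N_\theta$, using the bound $\alpha_t\le(2t-1)\theta\le\pi/4$ from \cref{eq:alpha_t_bound} together with a matching lower bound. Since $\sin x\asymp x$ on $[0,\pi/4]$, we have $\sin^2(\alpha_t)\asymp\alpha_t^2$. The sum $\sum_{t\le N_\theta}\kappa(t)\sin^2(\alpha_t)$ is therefore comparable to $\sum_t\kappa(t)\alpha_t^2$, and the question becomes how $\alpha_t$ scales with $t\theta$ under each regime of $\kappa$.

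\emph{Upper bounds.} I would use $\alpha_t\le 2t\theta$ in all cases. For $\kappa(t)=O(1/t)$ this gives
\[
\sum_{t\le N_\theta}\kappa(t)\alpha_t^2\lesssim\theta^2\sum_{t\le N_\theta}t=O(\theta^2N_\theta^2)=O(1).
\]
For $\kappa(t)=o(1/t)$, write $\kappa(t)\le\epsilon(t)/t$ with $\epsilon(t)\to0$. Splitting the sum at a fixed $T$ makes the first part $O(T^2\theta^2)\to0$, and the tail is at most $\sup_{t>T}\epsilon(t)\cdot O(1)$. So the sum is $o(1)$.

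\emph{Lower bound for $\Theta(1/t)$.} I need $\alpha_t\gtrsim t\theta$ for $t$ in a constant fraction of $[1,N_\theta]$. Using $\arctan(\sqrt{1-\kappa}\tan\alpha)\ge\sqrt{1-\kappa}\,\alpha$ for $\alpha\in[0,\pi/4]$, I get
\[
\alpha_{t+1}\ge(1-\kappa(t))\alpha_t+2\theta,
\]
and with $\kappa(t)\le c/t$ this is a linear recurrence. Its solution behaves like
\[
\alpha_t\ge 2\theta\sum_{s<t}\prod_{s<r<t}(1-c/r)\asymp\theta\sum_{s<t}(s/t)^c\asymp\theta t/(c+1),
\]
where the first terms, for which $\kappa=1/2$ or $c/r\ge1$, are handled by restricting to $t\ge 2c$. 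Then $\sum_{t\le N_\theta}\kappa(t)\alpha_t^2\gtrsim\theta^2\sum_t t\asymp1$.

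\emph{The $\omega(1/t)$ regime.} This is the main obstacle, since here $\alpha_t$ is strongly damped. The upper bound $\alpha_t\le2t\theta$ is now useless, so I need an equilibrium bound instead. From $\arctan(\sqrt{1-\kappa}\tan\alpha)\le\sqrt{1-\kappa}\,\alpha\cdot(1+O(\alpha^2))$, or more simply from the bound $\le(1-\kappa/2)\alpha$, valid for $\alpha\le\pi/4$ and $\kappa\le1$, I get
\[
\alpha_{t+1}\le(1-\kappa(t)/2)\alpha_t+2\theta.
\]
Iterating this shows $\alpha_t\lesssim\theta/\kappa(t)$ once $\kappa$ varies regularly enough, which is the ``regular schedule'' assumption I would make explicit: monotone and slowly varying, so that the steady state of this recurrence is tracked. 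Then
\[
\kappa(t)\alpha_t^2\lesssim\frac{\theta^2}{\kappa(t)}=o(\theta^2t),
\]
and summing up to $N_\theta$ gives $o(\theta^2N_\theta^2)=o(1)$, again with a fixed-$T$ split for the early terms where $\alpha_t\le2t\theta$ is used instead. Justifying the comparison $\alpha_t\lesssim\min\{t\theta,\theta/\kappa(t)\}$ rigorously, via a discrete Grönwall-type argument on the linear recurrence, is the step I expect to require the most care.
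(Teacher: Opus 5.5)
Your treatment of the $\Theta(1/t)$ and $o(1/t)$ regimes is fine and is essentially the paper's argument. The paper runs the lower bound as the direct induction $\alpha_t\ge\frac{2\theta}{1+c_1}t$ from $\alpha_{t+1}\ge(1-c_1/t)\alpha_t+2\theta$, a cleaner form of your product estimate.

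The gap is the $\omega(1/t)$ case. The lemma puts no monotonicity or slow-variation hypothesis on $\kappa$, so adding one proves a weaker statement. Without such a hypothesis, the pointwise equilibrium bound $\alpha_t\lesssim\theta/\kappa(t)$ is false. Take $\kappa(t)$ of order $(\log t)/t$, except at sparse times $t_k$ where $\kappa(t_k)=1/2$. This schedule is still $\omega(1/t)$. But $\alpha_{t_k}$ is set by the earlier, weaker measurements and is of order $t_k\theta/\log t_k$, far above $\theta/\kappa(t_k)=2\theta$. For nonincreasing $\kappa$ your bound is in fact a one-line induction: $\alpha_t\le 8\theta/\kappa(t)$ implies $\alpha_{t+1}\le 8\theta/\kappa(t)\le 8\theta/\kappa(t+1)$. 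So your route covers only that special case, and even there you left the Gr\"onwall step unproved.

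The missing idea is to bound the \emph{sum} directly rather than each term, via a telescoping inequality for $\alpha_t^2$. Square $\alpha_{t+1}\le(1-\kappa(t)/4)\alpha_t+2\theta$. Absorb the cross term using $4\alpha_t\theta\le\frac{\kappa(t)}{4}\alpha_t^2+\frac{16\theta^2}{\kappa(t)}$. This gives $\alpha_{t+1}^2\le(1-\kappa(t)/4)\alpha_t^2+16\theta^2/\kappa(t)$, that is,
\[
\kappa(t)\alpha_t^2\le 4\alpha_t^2-4\alpha_{t+1}^2+\frac{64\theta^2}{\kappa(t)} .
\]
Sum over $T_M<t\le N_\theta$, where $\kappa(t)\ge M/t$. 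The right side telescopes to give $\sum_{t>T_M}\kappa(t)\alpha_t^2\le 4\alpha_{T_M+1}^2+\frac{64\theta^2}{M}\sum_{t\le N_\theta}t\to\frac{\pi^2}{2M}$. The first $T_M$ terms are $O(\theta^2T_M^3)\to 0$, and $M$ is arbitrary. No regularity of $\kappa$ is used.

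Your contraction constant is also wrong. At $\kappa=0$ the derivative $\partial_\kappa\arctan(\sqrt{1-\kappa}\tan\alpha)$ equals $-\frac14\sin(2\alpha)$, which exceeds $-\frac{\alpha}{2}$. Hence $\arctan(\sqrt{1-\kappa}\tan\alpha)\le(1-\kappa/2)\alpha$ fails for small $\kappa$ at every $\alpha\in(0,\pi/4]$. A valid version is $(1-\kappa/4)\alpha$, which follows from
\[
\alpha-\arctan(\sqrt{1-\kappa}\tan\alpha)=\int_{\sqrt{1-\kappa}}^{1}\frac{\tan\alpha}{1+s^2\tan^2\alpha}\,ds\ge\frac{\tan\alpha}{2}\bigl(1-\sqrt{1-\kappa}\bigr)\ge\frac{\kappa\alpha}{4}.
\]
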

\begin{proof}
    Recall from \cref{eq:alpha_t_bound} that for any $1\leq t \leq \floor*{\pi/\rbra{8\theta}}$, we have $\alpha_t \leq 2t\theta \leq \pi/4$.
    Using $\sqrt{2}x/\pi \leq \sin\rbra{x} \leq x$ for $x \in [0, \pi/4]$, we obtain
    \begin{equation}\label{eq:sum_sin_bound}
        \frac{2}{\pi^2}\sum_{t=1}^{\floor*{\pi/\rbra{8\theta}}}\kappa\rbra{t}\alpha_t^2 \leq \sum_{t=1}^{\floor*{\pi/\rbra{8\theta}}}\kappa\rbra{t}\sin^2\rbra{\alpha_t} \leq \sum_{t=1}^{\floor*{\pi/\rbra{8\theta}}}\kappa\rbra{t}\alpha_t^2.
    \end{equation}
    \paragraph{Sufficiency: $\kappa\rbra{t} \in \Theta\rbra{1/t}$ implies the sum is $\Theta\rbra{1}$.}
    Assume that there exist $c_1\geq c_2 > 0$ with $c_1\geq1$ such that
    \begin{equation}\label{eq:kappa_order}
        c_2/t \leq \kappa\rbra{t} \leq c_1/t.
    \end{equation}
    From an upper bound, using $\alpha_t \leq 2\theta t$, we get
    \begin{equation*}
        \sum_{t=1}^{\floor*{\pi/\rbra{8\theta}}}\kappa\rbra{t}\alpha_t^2 \leq 4\theta^2\sum_{t=1}^{\floor*{\pi/\rbra{8\theta}}}\kappa\rbra{t}t^2 \leq 4c_1\theta^2\sum_{t=1}^{\floor*{\pi/\rbra{8\theta}}}t = 2c_1\theta^2\floor*{\pi/\rbra{8\theta}}\rbra*{\floor*{\pi/\rbra{8\theta}}+1} \leq \frac{\pi^2c_1}{16},
    \end{equation*}
    where the last inequality holds because $\floor*{\pi/\rbra{8\theta}}\rbra*{\floor*{\pi/\rbra{8\theta}}+1} \leq \pi/\rbra{8\theta}\cdot \rbra{\pi/\rbra{8\theta}+1} \leq 2\rbra{\pi/\rbra{8\theta}}^2$.
    For a lower bound, note that for $0\leq x\leq 1$ and $\alpha \in \sbra{0,\pi/4}$, $\arctan\rbra{x\tan\rbra{\alpha}} \geq x\alpha$.
    With the recurrence of \cref{eq:adaptive_alpha_t}, we have
        \begin{equation*}
            \alpha_{t+1} = \arctan\rbra*{\sqrt{1-\kappa\rbra{t}}\tan\rbra{\alpha_{t}}} +2\theta \geq \sqrt{1-\kappa\rbra{t}}\alpha_t + 2\theta \geq \rbra*{1-\kappa\rbra{t}}\alpha_t + 2\theta.
        \end{equation*}
        Using the second inequality of \cref{eq:kappa_order}, we obtain
        \begin{equation*}
            \alpha_{t+1} \geq \rbra*{1-\frac{c_1}{t}}\alpha_t +2\theta.
        \end{equation*}
        Then, we can prove that $\alpha_{t} \geq \frac{2\theta}{1+c_1}t$ by induction, since if $\alpha_t \geq \frac{2\theta}{1+c_1}t$ and $t\geq c_1$, we have
        \begin{equation*}
            \alpha_{t+1} - \frac{2\theta}{1+c_1}\rbra{t+1} \geq \rbra*{1-\frac{c_1}{t}}\alpha_t +2\theta - \frac{2\theta}{1+c_1}\rbra{t+1} \geq \rbra*{1-\frac{c_1}{t}}\frac{2\theta}{1+c_1}t +2\theta - \frac{2\theta}{1+c_1}\rbra{t+1} = 0.
        \end{equation*}
        For $1\leq t<c_1$, the recurrence gives $\alpha_{t+1}\geq2\theta\geq\frac{2\theta}{1+c_1}\rbra{t+1}$, and the base case holds because $\alpha_1=\theta\geq\frac{2\theta}{1+c_1}$.
        Consequently,
        \begin{equation*}
            \sum_{t=1}^{\floor*{\pi/\rbra{8\theta}}}\kappa\rbra{t}\alpha_t^2 \geq \sum_{t=1}^{\floor*{\pi/\rbra{8\theta}}}\frac{c_2}{t}\cdot \rbra*{\frac{2\theta}{1+c_1}t}^2 = \frac{2c_2}{\rbra{1+c_1}^2}\theta^2\floor*{\pi/\rbra{8\theta}}\rbra*{\floor*{\pi/\rbra{8\theta}}+1} > \frac{\pi^2c_2}{128\rbra{1+c_1}^2},
        \end{equation*}
    where the last inequality holds for sufficiently small $\theta$ because $\floor*{\pi/\rbra{8\theta}}\rbra*{\floor*{\pi/\rbra{8\theta}}+1} > \rbra*{\pi/\rbra{8\theta}-1}^2 \geq \rbra*{\pi/\rbra{16\theta}}^2$.
    Thus, by \cref{eq:sum_sin_bound}, $\sum_{t=1}^{\floor*{\pi/\rbra{8\theta}}}\kappa\rbra{t}\sin^2\rbra{\alpha_t} = \Theta\rbra{1}$.
    \paragraph{Necessity within the three regimes.}
    Suppose $\sum_{t=1}^{\floor*{\pi/\rbra{8\theta}}}\kappa\rbra{t}\sin^2\rbra{\alpha_t} = \Theta\rbra{1}$, we conclude the proof by contradiction to rule out the following two cases.
    \begin{itemize}
        \item \underline{Case $\kappa\rbra{t} = o\rbra{1/t}$ as $t \to \infty$:} In this case, for any $\epsilon > 0$, there exists $T_\epsilon \in \NN_+$ such that
        \begin{equation*}
            \kappa\rbra{t} \leq \epsilon/t \quad \text{for any $t > T_\epsilon$.}
        \end{equation*}
        Then, using the second inequality of \cref{eq:sum_sin_bound}, we have
        \begin{align*}
            \sum_{t=1}^{\floor*{\pi/\rbra{8\theta}}}\kappa\rbra{t}\sin^2\rbra{\alpha_t} \leq \sum_{t=1}^{\floor*{\pi/\rbra{8\theta}}}\kappa\rbra{t}\alpha_t^2 &\leq 4\theta^2\sum_{t=1}^{\floor*{\pi/\rbra{8\theta}}}\kappa\rbra{t}t^2 
            = 4\theta^2\rbra*{\sum_{t=1}^{T_\epsilon}\kappa\rbra{t}t^2 + \sum_{t=T_\epsilon+1}^{\floor*{\pi/\rbra{8\theta}}}\kappa\rbra{t}t^2}.
        \end{align*}
        With
        \begin{equation*}
            \sum_{t=1}^{T_\epsilon}\kappa\rbra{t}t^2 + \sum_{t=T_\epsilon+1}^{\floor*{\pi/\rbra{8\theta}}}\kappa\rbra{t}t^2
            \leq \sum_{t=1}^{T_\epsilon}1\cdot t^2 + \sum_{t=T_\epsilon+1}^{\floor*{\pi/\rbra{8\theta}}}\epsilon t
            \leq \frac{\rbra*{T_\epsilon+1}^3}{3}+\epsilon\cdot \frac{\rbra*{\floor*{\pi/\rbra{8\theta}}+1}^2}{2},
        \end{equation*}
        we obtain
        \begin{equation*}
            \sum_{t=1}^{\floor*{\pi/\rbra{8\theta}}}\kappa\rbra{t}\sin^2\rbra{\alpha_t} \leq 4\theta^2\rbra*{\frac{\rbra*{T_\epsilon+1}^3}{3}+\epsilon\cdot \frac{\rbra*{\floor*{\pi/\rbra{8\theta}}+1}^2}{2}} \to \epsilon\cdot\frac{\pi^2}{32}
        \end{equation*}
        as $\theta \to 0$.
        Since this holds for any $\epsilon > 0$, we have $\sum_{t=1}^{\floor*{\pi/\rbra{8\theta}}}\kappa\rbra{t}\sin^2\rbra{\alpha_t} = o\rbra{1}$ as $\theta \to 0$, which contradicts the assumption.
        \item \underline{Case $\kappa\rbra{t} = \omega\rbra{1/t}$ as $t \to \infty$:}
        Considering that for $x \in \sbra{0,1}$ and $\alpha \in \sbra{0,\pi/4}$, we have
        \begin{equation}\label{eq:alpha_alphat}
            \alpha - \arctan\rbra*{\sqrt{1-x}\tan\rbra{\alpha}} = \int_{\sqrt{1-x}}^1 \frac{\tan\rbra{\alpha}}{1+s^2\tan^2\rbra{\alpha}}\mathrm{d}s \geq \frac{\tan\rbra{\alpha}}{2}\rbra*{1-\sqrt{1-x}},
        \end{equation}
        where the last inequality holds because $\frac{1}{1+s^2\tan^2\rbra{\alpha}} \geq \frac{1}{1+s^2\tan^2\rbra{\pi/4}} \geq \frac{1}{2}$ for any $s \in \sbra{0,1}$ and $\alpha \in \sbra{0,\pi/4}$.
        Using $\rbra*{1-\sqrt{1-x}} = \frac{x}{1+\sqrt{1-x}} \geq \frac{x}{2}$ for $x \in \sbra{0,1}$ and $\tan\rbra{\alpha} \geq \alpha$ for $\alpha \in \sbra{0,\pi/4}$, we get
        \begin{equation*}
            \alpha - \arctan\rbra*{\sqrt{1-x}\tan\rbra{\alpha}} \geq \frac{\tan\rbra{\alpha}}{2}\rbra*{1-\sqrt{1-x}} \geq \frac{\alpha}{2}\cdot \frac{x}{2} = \frac{x\alpha}{4}.
        \end{equation*}
        So
        \begin{equation*}
            \arctan\rbra*{\sqrt{1-x}\tan\rbra{\alpha}} \leq \rbra*{1-x/4}\alpha
        \end{equation*}
        for any $x \in \sbra{0,1}$ and $\alpha \in \sbra{0,\pi/4}$.
        Hence, with the recurrence in \cref{eq:adaptive_alpha_t} and $\kappa\rbra{t} \in \rbra{0,1}, \alpha_t \leq \pi/4$, we have
        \begin{equation*}
            \alpha_{t+1} = \arctan\rbra*{\sqrt{1-\kappa\rbra{t}}\tan\rbra{\alpha_{t}}} +2\theta \leq \rbra*{1 - \frac{\kappa\rbra{t}}{4}}\alpha_t +2\theta.
        \end{equation*}
        Then,
        \begin{align*}
            \alpha_{t+1}^2 &\leq \rbra*{1 - \frac{\kappa\rbra{t}}{4}}^2\alpha_t^2 + 4\rbra*{1 - \frac{\kappa\rbra{t}}{4}}\alpha_t\theta + 4\theta^2 \\
            &\leq \rbra*{1 - \frac{\kappa\rbra{t}}{4}}^2\alpha_t^2 + \rbra*{1 - \frac{\kappa\rbra{t}}{4}}\rbra*{\frac{\kappa\rbra{t}}{4}\alpha^2_t + \frac{16}{\kappa\rbra{t}}\theta^2} + 4\theta^2 \tag{by the mean inequality}\\
            &= \rbra*{1-\frac{\kappa\rbra{t}}{4}}\alpha^2_t +\frac{16}{\kappa\rbra{t}}\theta^2.
        \end{align*}
        Rearranging this gives $\frac{\kappa\rbra{t}}{4}\alpha_t^2 \leq \alpha_t^2 - \alpha_{t+1}^2  +\frac{16}{\kappa\rbra{t}}\theta^2$, thus
        \begin{equation}\label{eq:kappa_t_alpha_t2_bound}
            \kappa\rbra{t}\alpha_t^2 \leq 4\alpha_t^2 - 4\alpha_{t+1}^2  +\frac{64}{\kappa\rbra{t}}\theta^2.
        \end{equation}
        Now because $\kappa\rbra{t} = \omega\rbra{1/t}$ as $t \to \infty$, for any $M > 0$, there exists $T_M \in \NN_+$ such that
        \begin{equation*}
            \kappa\rbra{t} \geq M/t \quad \text{for any $t > T_M$}.
        \end{equation*}
        Summing \cref{eq:kappa_t_alpha_t2_bound} from $t = T_M+1$ to $\floor*{\pi/\rbra{8\theta}}$, we obtain
        \begin{equation*}
            \sum_{t=T_M+1}^{\floor*{\pi/\rbra{8\theta}}}\kappa\rbra{t}\alpha_t^2 \leq 4\alpha_{T_M+1}^2 +  \sum_{t=T_M+1}^{\floor*{\pi/\rbra{8\theta}}}\frac{64}{\kappa\rbra{t}}\theta^2 \leq 4\alpha_{T_M+1}^2 + \frac{64\theta^2}{M}\sum_{t=T_M+1}^{\floor*{\pi/\rbra{8\theta}}}t.
        \end{equation*}
        Using $\alpha_{T_M+1} \leq 2\theta \rbra{T_M+1}$, we get
        \begin{equation*}
            \sum_{t=T_M+1}^{\floor*{\pi/\rbra{8\theta}}}\kappa\rbra{t}\alpha_t^2 \leq 16\theta^2\rbra*{T_{M}+1}^2 + \frac{64\theta^2}{M}\sum_{t=T_M+1}^{\floor*{\pi/\rbra{8\theta}}}t \leq 16\theta^2\rbra*{T_{M}+1}^2+ \frac{32\theta^2}{M}\rbra*{\floor*{\pi/\rbra{8\theta}}+1}^2 \to \frac{\pi^2}{2M}
        \end{equation*}
        as $\theta \to 0$.
        The contribution from the initial terms $t \leq T_M$ is
        \begin{equation*}
            \sum_{t=1}^{T_M}\kappa\rbra{t}\alpha_t^2 \leq \sum_{t=1}^{T_M}\alpha_t^2 \leq 4\theta^2\sum_{t=1}^{T_M}t^2 \to 0
        \end{equation*}
        as $\theta \to 0$.
        Therefore,
        \begin{equation*}
            \limsup_{\theta\to 0} \sum_{t=1}^{\floor*{\pi/\rbra{8\theta}}}\kappa\rbra{t}\alpha_t^2 =  \lim_{\theta\to 0} \sum_{t=1}^{T_M}\kappa\rbra{t}\alpha_t^2 +  \limsup_{\theta\to 0} \sum_{t=T_M+1}^{\floor*{\pi/\rbra{8\theta}}}\kappa\rbra{t}\alpha_t^2 \leq \frac{\pi^2}{2M}.
        \end{equation*}
        With the second inequality of \cref{eq:sum_sin_bound}, we have
        \begin{equation*}
            \limsup_{\theta\to 0} \sum_{t=1}^{\floor*{\pi/\rbra{8\theta}}}\kappa\rbra{t}\sin^2\rbra{\alpha_t} \leq \limsup_{\theta\to 0} \sum_{t=1}^{\floor*{\pi/\rbra{8\theta}}}\kappa\rbra{t}\alpha_t^2 \leq \frac{\pi^2}{2M}.
        \end{equation*}
        Since this holds for any $M > 0$, we have $\sum_{t=1}^{\floor*{\pi/\rbra{8\theta}}}\kappa\rbra{t}\sin^2\rbra{\alpha_t} = o\rbra{1}$ as $\theta \to 0$, which contradicts the assumption.
    \end{itemize}
    Thus, among the three asymptotic regimes considered in the lemma, only $\kappa\rbra{t}=\Theta\rbra{1/t}$ yields a $\Theta\rbra{1}$ sum.
\end{proof}

\begin{remark}\label{rem:critical_scale}
    We note that the dyadic measurement schedule in \cite[Algorithm~6.1]{Wang18} roughly chooses $\kappa\rbra{t} = 1-e^{-1/2^k}$ if $t \in (c(2^{k}-1), c(2^{k+1}-1)]$ for some constant $c > 0$ depending on the desired failure probability.
    With $t \in (c(2^{k}-1), c(2^{k+1}-1)]$, we have $\frac{c}{t+c} < \frac{1}{2^k} \leq \frac{2c}{t+c}$, hence $\frac{1}{2^k} = \Theta\rbra{1/t}$.
    For sufficiently large $t$, $\frac{1}{2^k}$ is small, then, $\kappa\rbra{t} = 1-e^{-1/2^k} \approx \frac{1}{2^k} = \Theta\rbra{1/t}$.
    Thus, this particular schedule also satisfies $\kappa\rbra{t} = \Theta\rbra{1/t}$, in agreement with our \cref{lem:kappa_order}.
\end{remark}

The following intuition explains \cref{lem:kappa_order}.
For a known target probability $p_0$, \cref{thm:fixed_point_expected_cost} shows that the optimal measurement strength for \cref{algo:fixed_point_lower_bounded_p} scales as $\kappa^* = \Theta\rbra{\sqrt{p_0}}$.
In the adaptive setting, after $t$ consecutive failures, the algorithm has effectively performed $t$ rounds of weak measurements and Grover rotations.
At this stage, suppose that an estimate of the target probability is roughly $p_{\text{guess}}\rbra{t}$, and the number of failures $t$ should be comparable to the number of iterations that a standard Grover search would need to amplify such a probability, i.e., $t = \Theta\rbra{1/\sqrt{p_{\text{guess}}\rbra{t}}}$, so $p_{\text{guess}}\rbra{t} = \Theta\rbra{1/t^2}$.
Substituting this estimate into the optimal strength $\kappa^* = \Theta\rbra{\sqrt{p_0}}$ for \cref{algo:fixed_point_lower_bounded_p} yields
\begin{equation*}
    \kappa\rbra{t} = \Theta\rbra*{\sqrt{p_{\text{guess}}\rbra{t}}} = \Theta\rbra*{\frac{1}{t}},
\end{equation*}
which is precisely the scaling proved in \cref{lem:kappa_order}.
Thus, the lemma identifies the $\Theta\rbra{1/t}$ scale for obtaining a constant probability of success within $\floor*{\pi/\rbra{8\theta}}$ rounds.
This interpretation bridges the gap between the fixed-point schedule for lower-bounded $p$ (presented in \cref{algo:fixed_point_lower_bounded_p}) and the adaptive schedule for completely unknown $p$ (given in \cref{algo:adaptive_unknown_p}).

\subsection{The adaptive weak measurement schedule}
Motivated by the critical scale in \cref{lem:kappa_order}, we now focus on a concrete one-parameter family of strength functions:
\begin{equation}\label{eq:kappa_family}
    \kappa_b\rbra{t} = \min\cbra*{\frac{1}{2},\, \frac{b}{t}}, \qquad t > 0.
\end{equation}
The truncation at $1/2$ ensures that the measurement does not become too aggressive in the early rounds.
In this subsection, we optimise the constant $b$ in this family to minimise the expected number of oracle calls for any sufficiently small target probability $p$.
The analysis will confirm that the $1/t$ decay indeed yields the optimal $O\rbra{1/\sqrt{p}}$ complexity with a favourable constant factor.

We study the constant factor for the expected number of rounds in \cref{algo:adaptive_unknown_p} compared to $1/\theta$, i.e.,
\begin{equation*}
    \theta\sum_{t\geq 0}S_t\rbra{\theta,\kappa_b} = \theta\sum_{t\geq 0} \Abs*{u_t}^2,
\end{equation*}
where $u_t$ is defined in \cref{eq:u_m_adaptive}.
In particular,
\begin{equation}\label{eq:u_t_update_b}
    u_{t+1} = A_{\theta, \kappa_b\rbra{t+1}}u_t = D_{\kappa_b\rbra{t+1}}R_{\theta} u_t, \qquad u_0 = \begin{pmatrix}
        \cos\rbra{\theta} \\ -\sin\rbra{\theta}
    \end{pmatrix}.
\end{equation}
For any fixed $R > 0$, consider the partial sum
\begin{equation*}
    \theta\sum_{t = \ceil{2b}}^{\floor{R/\theta}} \Abs*{u_t}^2 = \sum_{t = \ceil{2b}}^{\floor{R/\theta}} \theta\Abs*{u_t}^2,
\end{equation*}
which is a Riemann sum
\begin{equation*}
    \sum_{t = \ceil{2b}}^{\floor{R/\theta}} \theta\Abs*{z\rbra{t\theta}}^2,
\end{equation*}
if there is a function $z:\sbra{0,R} \to \RR^2$ satisfying $z\rbra{t\theta} = u_t$.
Let $\tau = t\theta$; such a function $z$ satisfies
\begin{equation*}
    z\rbra*{\tau + \theta} = z\rbra*{\rbra{t+1}\theta} = u_{t+1} = D_{\kappa_b\rbra{t+1}}R_{\theta} u_{t} = D_{\kappa_b\rbra{\tau/\theta+1}}R_{\theta} z\rbra{\tau}.
\end{equation*}
For small $\theta$ and fixed $\tau > 0$, with $\tau \geq \ceil{2b} \theta$, we have the following Taylor expansions, whose remainders are uniform on compact intervals bounded away from $\tau=0$:
\begin{gather*}
    D_{\kappa_b\rbra{\tau/\theta+1}} =
    \begin{pmatrix}
        1 & 0 \\
        0 & \sqrt{1-\kappa_b\rbra{\tau/\theta+1}}
    \end{pmatrix} = \begin{pmatrix}
        1 & 0 \\
        0 & 1
    \end{pmatrix} + \theta \begin{pmatrix}
        0 & 0 \\
        0 & -\frac{b}{2\tau}
    \end{pmatrix} + O\rbra{\theta^2}, \\
    R_{\theta} = \begin{pmatrix}
        \cos\rbra{2\theta} & -\sin\rbra{2\theta} \\
        \sin\rbra{2\theta} & \cos\rbra{2\theta}
    \end{pmatrix} = \begin{pmatrix}
        1 & 0 \\
        0 & 1
    \end{pmatrix} + \theta\begin{pmatrix}
        0 & -2 \\
        2 & 0
    \end{pmatrix} + O\rbra{\theta^2}.
\end{gather*}
Then,
\begin{equation}\label{eq:A_approximation}
    D_{\kappa_b\rbra{\tau/\theta+1}}R_{\theta} = \begin{pmatrix}
        1 & 0 \\
        0 & 1
    \end{pmatrix} + \theta\begin{pmatrix}
        0 & -2 \\
        2 & -\frac{b}{2\tau}
    \end{pmatrix} + O\rbra{\theta^2}.
\end{equation}
Formally taking $\theta \to 0$, $z\rbra{\tau}$ satisfies the differential equation
\begin{equation}\label{eq:continuous_limiting}
    z'\rbra{\tau} = \begin{pmatrix}
        0 & -2 \\
        2 & -\frac{b}{2\tau}
    \end{pmatrix} z\rbra{\tau}.
\end{equation}

We now analyse the relationship between the discrete recurrence and this continuous counterpart.
The following lemma establishes the existence and uniqueness of a regular solution for the differential equation, which will serve as a reference for approximating the discrete dynamics when $\theta$ is small.
Write $e_0 = \lim_{\theta\to 0}u_0 = \begin{psmallmatrix}1 \\ 0\end{psmallmatrix}$.

\begin{lemma}\label{lem:ode_solution}
    Fix $R > 0$ and $b > 2$.
    Consider the differential equation \cref{eq:continuous_limiting} on the interval $(0,R]$ with initial condition
    \begin{equation*}
        z\rbra{0} = \lim_{\tau \to 0^+}z\rbra{\tau} = e_0 = \begin{pmatrix}
            1 \\ 0
        \end{pmatrix}.
    \end{equation*}
    Then, there exists a unique regular solution $z^0:\sbra{0,R} \to \RR^2$ that satisfies the equation for $\tau > 0$ and is continuous at $\tau = 0$ with $z^0\rbra{0} = e_0$.
\end{lemma}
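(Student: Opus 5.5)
We treat \cref{eq:continuous_limiting} as a linear system with a regular singular point at $\tau = 0$, with coefficient matrix $\tau^{-1}\Lambda + M$ where $\Lambda = \mathrm{diag}\rbra{0,-b/2}$ and $M = \begin{psmallmatrix} 0 & -2\\ 2 & 0\end{psmallmatrix}$. The indicial exponents are the eigenvalues $0$ and $-b/2$ of $\Lambda$. The exponent $0$ should yield a solution analytic at $0$, while the exponent $-b/2<0$ (for $b>2$, indeed any $b>0$) yields a solution that blows up like $\tau^{-b/2}$. Hence continuity at $0$ singles out exactly one solution up to scaling, and the normalisation $z\rbra{0}=e_0$ fixes it.

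For existence, we reduce to a scalar equation. Writing $z = \rbra{x,y}$, we have $x' = -2y$ and $y' = 2x - \frac{b}{2\tau}y$. Eliminating $y$ gives $x'' + \frac{b}{2\tau}x' + 4x = 0$, a Bessel-type equation. We seek a power series $x\rbra{\tau} = \sum_k a_k\tau^{2k}$ with $a_0=1$. The recursion is $a_{k}\cdot 2k\rbra{2k-1+b/2} = -4a_{k-1}$. Since $2k-1+b/2>0$, it is well defined, and the series has infinite radius of convergence (it is essentially $\Gamma\rbra{\nu+1}\rbra{2\tau}^{-\nu}J_\nu\rbra{2\tau}$ with $\nu = \rbra{b-2}/4$). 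We then set $y = -x'/2$. This is a power series vanishing at $0$, so $z^0$ is entire, satisfies the system for $\tau>0$, and has $z^0\rbra{0}=e_0$. Alternatively, to avoid special functions, we would write the equivalent integral equation $x\rbra{\tau} = 1 - 2\int_0^\tau y$ and $y\rbra{\tau} = \tau^{-b/2}\int_0^\tau 2s^{b/2}x\rbra{s}\,\mathrm{d}s$ (variation of constants for the $y$-equation with the integrating factor $\tau^{b/2}$). We would then solve it by Picard iteration on $C\sbra{0,R}$. The operator $x\mapsto \tau^{-b/2}\int_0^\tau 2s^{b/2}x$ is bounded by $\frac{2\tau}{b/2+1}\Abs{x}_\infty$, so the iterates are bounded by $C^n\tau^n/n!$-type terms and converge.

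For uniqueness, we take any solution $w$ on $(0,R]$ that is continuous at $0$ with $w\rbra{0}=e_0$, and set $d = w - z^0$. Then $d$ solves the system with $d\rbra{\tau}\to0$. Since the equation is regular on $(0,R]$, its solution space is two-dimensional, so $d = c\,z^1$ for a second solution $z^1$ independent of $z^0$. The Wronskian (Liouville) formula gives $\det\sbra{z^0,z^1}\rbra{\tau} = W_0\,\tau^{-b/2}$, because the trace of the coefficient matrix is $-b/(2\tau)$. If $W_0\neq 0$, then since $z^0$ is bounded, $\abs{z^1\rbra{\tau}}\gtrsim \tau^{-b/2}\to\infty$. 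This contradicts $d\to0$ unless $c=0$. A more self-contained route is to use the integral form directly. For a continuous solution, $\tau^{b/2}y\rbra{\tau}$ has derivative $2\tau^{b/2}x$, so $\tau^{b/2}y\rbra{\tau}-\epsilon^{b/2}y\rbra{\epsilon}=\int_\epsilon^\tau 2s^{b/2}x$. Letting $\epsilon\to0$ with $y$ bounded forces the integral equation. A Gronwall argument on the difference, whose components vanish at $0$, then gives $d\equiv0$.

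The main obstacle is this uniqueness step: we must justify that boundedness near $0$ kills the boundary term $\epsilon^{b/2}y\rbra{\epsilon}$. This works for any $b>0$, and the hypothesis $b>2$ is only needed later for integrability on $[0,\infty)$. We must also check that the Gronwall estimate is not spoiled by the $1/\tau$ singularity. The integral-operator bound $\frac{2\tau}{b/2+1}$ handles this, since it shows the singular kernel is in fact harmless.
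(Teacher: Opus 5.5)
Your proposal is correct, and its primary line differs from the paper's. The paper never writes down a solution. It eliminates $y$ via $y(\tau)=2\tau^{-b/2}\int_0^\tau s^{b/2}x(s)\,\mathrm{d}s$ and uses Fubini to get a single Volterra equation for $x$ with kernel $(s-\tau^{1-b/2}s^{b/2})/(b/2-1)$; this is where it invokes $b>2$. It then applies the Banach fixed-point theorem on a short interval $[0,h]$ and extends to $[0,R]$ by standard ODE theory. Uniqueness there is uniqueness of the fixed point, together with the implicit fact that any solution continuous at $0$ satisfies the integral equation.

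You instead build an explicit entire solution by power series, $x=\Gamma(\nu+1)\tau^{-\nu}J_\nu(2\tau)$ with $\nu=(b-2)/4$. The prefactor must be $\tau^{-\nu}$, not $(2\tau)^{-\nu}$, to get $x(0)=1$. You then prove uniqueness through Liouville's formula. This route gives:
\begin{itemize}
\item global existence and analyticity at $0$ in one step;
\item a structural reason why continuity selects a single solution (the other Frobenius exponent is $-b/2$);
\item the observation that $b>0$ already suffices for this lemma;
\item a closed form matching the paper's later Fourier computation, since $\hat x\propto(4-\omega^2)_+^{(b-4)/4}$ is exactly Poisson's integral for $\tau^{-\nu}J_\nu(2\tau)$.
\end{itemize}

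The Wronskian step needs one small repair. $d=w-z^0$ need not be a pure multiple of an independent $z^1$, since it may have a $z^0$ component. Argue instead that $\det[z^0,d]=W\tau^{-b/2}$ stays bounded because both columns do, so $W=0$. Hence $d=a z^0$, and $d\to 0$ forces $a=0$.

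Your fallback route is essentially the paper's argument, slightly streamlined. It runs Picard iteration on the coupled system using the bound $2\tau/(b/2+1)$, kills the boundary term $\epsilon^{b/2}y(\epsilon)$, and applies Gronwall. The Volterra structure gives factorial convergence on all of $[0,R]$, so no short-interval-plus-extension step is needed. It also makes explicit the reduction to the integral equation that the paper glosses over.
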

\begin{proof}
    Write $z\rbra{\tau} = \begin{psmallmatrix}
        x\rbra{\tau} \\ y\rbra{\tau}
    \end{psmallmatrix}$.
    Then, \cref{eq:continuous_limiting} becomes
    \begin{equation}\label{eq:continuous_limiting_xy}
        x'\rbra{\tau} = -2y\rbra{\tau}, \qquad y'\rbra{\tau} = 2x\rbra{\tau} - \frac{b}{2\tau}y\rbra{\tau}.
    \end{equation}
    The term $ \frac{b}{2\tau}y\rbra{\tau}$ is singular at $\tau = 0$.
    Rewrite the second equation as
    \begin{equation*}
        y'\rbra{\tau} + \frac{b}{2\tau}y\rbra{\tau} = 2x\rbra{\tau}.
    \end{equation*}
    Multiplying by $\tau^{b/2}$ gives
    \begin{equation*}
        \rbra*{\tau^{b/2}y\rbra{\tau}}' = 2\tau^{b/2}x\rbra{\tau}.
    \end{equation*}
    Since we require $y\rbra{0} = 0$ (from the limit $z\rbra{0} = e_0$), integrating it from $0$ to $\tau$ yields
    \begin{equation}\label{eq:y_integral_solution}
        y\rbra{\tau} = 2\tau^{-b/2}\int_{0}^{\tau} s^{b/2}x\rbra{s}\mathrm{d}s.
    \end{equation}
    Using $x' = -2y$ and the initial condition $x\rbra{0} = 1$, we also have
    \begin{equation*}
        x\rbra{\tau} = 1 - 2\int_{0}^\tau y\rbra{r}\mathrm{d}r.
    \end{equation*}
    Substituting \cref{eq:y_integral_solution} into this, we obtain a single equation for $x$:
    \begin{equation}\label{eq:x_integral_solution}
        x\rbra{\tau} = 1 - 4\int_{0}^\tau r^{-b/2}\rbra*{\int_{0}^r s^{b/2}x\rbra{s}\mathrm{d}s}\mathrm{d}r.
    \end{equation}
    By Fubini's theorem (applicable because the integrand is absolutely integrable for $b > 2$), we have
    \begin{align*}
        \int_{0}^\tau r^{-b/2}\rbra*{\int_{0}^r s^{b/2}x\rbra{s}\mathrm{d}s}\mathrm{d}r &= \int_{0}^\tau \rbra*{\int_{s}^\tau r^{-b/2}\mathrm{d}r}s^{b/2}x\rbra{s}\mathrm{d}s \\
        &= \int_{0}^\tau \frac{s - \tau^{1-b/2}s^{b/2}}{b/2-1} x\rbra{s}\mathrm{d}s. \tag{$b > 2$}
    \end{align*}
    Now fix a small $h$ (to be chosen).
    For $0 \leq s \leq \tau \leq h$, we have
    \begin{equation}\label{eq:kernel_bound}
        0 \leq \frac{s - \tau^{1-b/2}s^{b/2}}{b/2-1} \leq \frac{s}{b/2-1} \leq \frac{\tau}{b/2-1} \leq \frac{h}{b/2-1}.
    \end{equation}
    This kernel is non-negative and bounded.
    Define an operator $T$ on the Banach space $C\rbra{\sbra{0,h};\RR}$ (continuous functions from $\sbra{0,h}$ to $\RR$ with infinity norm $\Abs*{f}_{\infty} = \sup_{\tau \in \sbra{0,h}}\abs*{f(\tau)}$) by
    \begin{equation*}
        \rbra{T x}\rbra{\tau} = 1 - 4\int_{0}^\tau \frac{s - \tau^{1-b/2}s^{b/2}}{b/2-1} x\rbra{s}\mathrm{d}s.
    \end{equation*}
    For any $x_1,x_2\in C\rbra{\sbra{0,h};\RR}$, with \cref{eq:kernel_bound}, we have
    \begin{align*}
        \abs*{\rbra{T x_1}\rbra{\tau} - \rbra{T x_2}\rbra{\tau}} \leq \frac{4h}{b/2-1}\int_{0}^\tau \abs*{x_1\rbra{s} - x_2\rbra{s}}\mathrm{d}s \leq \frac{4h\tau}{b/2-1}\Abs*{x_1 - x_2}_{\infty} \leq \frac{4h^2}{b/2-1}\Abs*{x_1 - x_2}_{\infty}.
    \end{align*}
    Taking the supremum over $0\leq \tau \leq h$,
    \begin{equation*}
        \Abs*{T x_1 - T x_2}_{\infty} \leq \frac{4h^2}{b/2-1}\Abs*{x_1 - x_2}_{\infty}.
    \end{equation*}
    Choose $h$ small enough so that $\frac{4h^2}{b/2-1} < 1$.
    Then, $T$ is a contraction.
    By the Banach fixed-point theorem, there exists a unique $x\in C\rbra{\sbra{0,h};\RR}$ such that $Tx = x$, which is exactly the desired integral equation of \cref{eq:x_integral_solution}.
    Hence $x$ is well defined and continuous on $\sbra{0,h}$.

    Once $x$ exists, we define $y$ via \cref{eq:y_integral_solution}.
    Because $x$ is continuous, it is bounded near $0$.
    Therefore,
    \begin{equation*}
        y\rbra{\tau} = 2\tau^{-b/2}\int_{0}^{\tau} s^{b/2}x\rbra{s}\mathrm{d}s = 2\tau^{-b/2} O\rbra{\tau^{b/2+1}} = O\rbra{\tau}.
    \end{equation*}
    Hence $y$ is continuous at $\tau = 0$ with $y\rbra{0} = 0$.
    Thus, $z\rbra{\tau} = \begin{psmallmatrix}
        x\rbra{\tau} \\ y\rbra{\tau}
    \end{psmallmatrix}$
    is a continuous solution of the integral equations \cref{eq:x_integral_solution,eq:y_integral_solution}, and differentiating them shows that $x\rbra{\tau}$ and $y\rbra{\tau}$ satisfy the original differential equations for $\tau > 0$.
    This gives a unique local solution on $\sbra{0,h}$.

    Finally, after reaching any small positive time $h > 0$, the coefficient $-\frac{b}{2\tau}$ is no longer singular on $\sbra{h,R}$.
    So the ordinary ODE theory extends the solution uniquely from $\sbra{0,h}$ to $\sbra{0,R}$.
    The resulting function $z^0$ is the desired global solution on $\sbra{0,R}$.
\end{proof}

Since \cref{eq:continuous_limiting} is the continuous limit for $\tau = t\theta > \ceil{2b}\theta$ (recall that $\kappa_b\rbra{t}$ is defined piecewise), the behaviour of the discrete dynamics near the origin requires separate attention.
The following lemma shows that for small $\delta$, the discrete state  $u_t$ remains close to $e_0$ when $t \leq \ceil{\delta/\theta}$.

\begin{lemma}\label{lem:small_deviation}
    Let $\delta \in (0, \pi/32]$ and assume $\theta < \delta$.
    Then
    \begin{equation*}
        \sup_{0 \leq t \leq \ceil{\delta/\theta}} \Abs*{u_t - e_0} \leq 4\delta+12b\delta^2.
    \end{equation*}
\end{lemma}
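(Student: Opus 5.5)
We write $u_t = (a_t, c_t)^{\top}$. The plan is to control the angle and the norm of $u_t$ separately rather than Taylor-expanding the products. The matrix $R_\theta$ is orthogonal and $D_{\kappa}$ is a contraction. So $u_t$ is, up to a positive scalar, the rotated-coordinate version of the physical state: the physical state $R_\theta u_t$ has angle $\alpha_{t+1}$ (with $\alpha_1=\theta$), hence $u_t$ has angle $\alpha_{t+1}-2\theta$, and $\Abs{u_t}^2 = S_t$. Then
\begin{equation*}
\Abs{u_t - e_0} \leq \Abs{u_t - \Abs{u_t}e_0} + \rbra{1-\Abs{u_t}} \leq \abs{\alpha_{t+1}-2\theta} + \rbra{1-S_t},
\end{equation*}
using $\Abs{v - \Abs{v}e_0} \le \Abs{v}\abs{\text{angle of }v} \leq \abs{\text{angle}}$ and $1-\sqrt{S}\le 1-S$. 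It therefore suffices to bound the angle by something like $4\delta$ and the lost survival probability by something like $12b\delta^2$, uniformly for $0\le t\le \ceil{\delta/\theta}$.

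\textbf{Angle bound.} The first step is to note that $\ceil{\delta/\theta}\le \delta/\theta+1 \le 2\delta/\theta$, because $\theta<\delta$. Since $\delta\le\pi/32$, we have $2t\theta \le 4\delta+2\theta\le \pi/4$ for all $t\le \ceil{\delta/\theta}+1$. Hence the induction in \cref{eq:alpha_t_bound} applies: each step can only shrink the angle (the arctan map is contractive on $[0,\pi/4]$) and then adds $2\theta$. This gives $0\le \alpha_{t+1}\le (2t+1)\theta$, so $\abs{\alpha_{t+1}-2\theta}\le (2t-1)\theta \le 2\delta+\theta \le 3\delta$ for $t \geq 1$. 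The case $t=0$ is immediate: the angle of $u_0$ is $-\theta$, so $\Abs{u_0-e_0}\le 2\sin(\theta/2)\le\theta$. Some slack remains, which leaves room to absorb constants.

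\textbf{Survival bound.} Next, \cref{lem:survival_probability_bound} gives $1-S_t\le \sum_{s=1}^{t}\kappa_b(s)\sin^2(\alpha_s)/(1-\kappa_b(s)\sin^2\alpha_s)$. Because $\kappa_b\le 1/2$ and $\sin^2\alpha_s\le 1/2$, the denominator is at least $3/4$. With $\sin^2\alpha_s\le \alpha_s^2\le 4s^2\theta^2$ and $\kappa_b(s)\le b/s$, we get
\begin{equation*}
1-S_t \le \tfrac{4}{3}\sum_{s\le t} \tfrac{b}{s}\,4s^2\theta^2 \le \tfrac{16}{3} b\theta^2 \tfrac{t(t+1)}{2}.
\end{equation*}
Plugging in $t\theta\le \delta+\theta\le 2\delta$ and $(t+1)\theta \leq 3\delta$ gives a bound of order $b\delta^2$. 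To match the stated $12b\delta^2$, I would use the crude estimate $1-S\le\sum\kappa\sin^2\alpha/(1-\kappa \sin^2 \alpha)$ more carefully, or instead use $1 - \Abs{u_t} \le (1-S_t)/2$ via $1-\sqrt{S} \le (1-S)/(1+\sqrt S)$.

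\textbf{Main obstacle.} The difficulty is bookkeeping rather than any conceptual step. One must check that the angle stays in $[0,\pi/4]$ over the whole range $t\le\ceil{\delta/\theta}$, which is why $\theta<\delta\le\pi/32$ is assumed. One must also correctly identify the angle of $u_t$ with $\alpha_{t+1}-2\theta$ via $R_\theta u_t$ being the physical state. If the constants come out slightly larger than $4\delta+12b\delta^2$, I would tighten them by bounding the angle term by $2t\theta\le 2\delta+2\theta$ and using $\theta<\delta$ only at the end.
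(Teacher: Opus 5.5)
You follow the paper's route: the decomposition $\Abs{u_t-e_0}\le\abs{\alpha_{t+1}-2\theta}+(1-\Abs{u_t}^2)$, the range check $2t\theta\le\pi/4$ that lets the induction on $\alpha_t$ run up to $t=\ceil{\delta/\theta}+1$, and the angle bound. Your angle bound of $3\delta$ is even slightly better than the paper's $4\delta$. To bound the absolute value from below you need $\alpha_{t+1}\ge 2\theta$, which holds because the arctangent term is nonnegative.

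\textbf{The gap is in the survival term.} You go through the lower bound of the survival-probability lemma, which introduces the factor $1/(1-\kappa_b\sin^2\alpha_s)\le 4/3$. Your estimate then becomes $1-S_t\le\frac{8}{3}b\theta^2t(t+1)<16b\delta^2$. Combined with the angle bound this gives $3\delta+16b\delta^2$. That is at most $4\delta+12b\delta^2$ only when $4b\delta\le 1$, which fails for example at $b=10$, $\delta=\pi/32$. So the stated bound is not established.

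\textbf{Your suggested repair does not work.} For $S\in[0,1]$ we have $1-\sqrt{S}=(1-S)/(1+\sqrt{S})\ge(1-S)/2$, so the inequality $1-\Abs{u_t}\le(1-S_t)/2$ goes the wrong way. You also cannot argue that $S_t$ is close to $1$: in the example above your bound on $1-S_t$ exceeds $1$.

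\textbf{The fix is elementary.} Bound the product directly:
$1-S_t=1-\prod_{j=1}^{t}(1-a_j)\le\sum_{j=1}^{t}a_j$ for $a_j=\kappa_b(j)\sin^2\alpha_j\in[0,1]$, by induction on $t$. This has no $1/(1-a_j)$ denominators. It gives
$1-S_t\le b\sum_{j\le t}\alpha_j^2/j\le 4b\theta^2\sum_{j\le t}j=2b\theta^2t(t+1)$.
Using your own estimates $t\theta<2\delta$ and $(t+1)\theta<3\delta$, this is less than $2b(2\delta)(3\delta)=12b\delta^2$. This is exactly how the paper closes the argument, and with this change your proof establishes the lemma with the stated constants.
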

\begin{proof}
    Recall the recurrence of $\alpha_t$ in \cref{eq:adaptive_alpha_t}, we have
    \begin{equation*}
        R_{\theta}u_t = \Abs*{u_t}\begin{pmatrix}
            \cos\rbra{\alpha_{t+1}} \\
            \sin\rbra{\alpha_{t+1}}
        \end{pmatrix},
    \end{equation*}
    so that
    \begin{equation}\label{eq:u_t_alpha_t}
        u_t = \Abs*{u_t}\begin{pmatrix}
            \cos\rbra{\alpha_{t+1}-2\theta} \\
            \sin\rbra{\alpha_{t+1}-2\theta}
        \end{pmatrix}.
    \end{equation}
    From \cref{prop:survival_product},
    \begin{equation*}
        \Abs*{u_t}^2 = S_t\rbra{\theta,\kappa_b} = \prod_{j=1}^{t}\rbra*{1-\kappa_b\rbra{j}\sin^2\rbra{\alpha_j}}
    \end{equation*}
    with the empty product equal to $1$ when $t=0$.
    Using the inequality $1 - \prod_{j=1}^t\rbra{1-a_j} \leq \sum_{j=1}^ta_j$ for $a_j \in \sbra{0,1}$, we obtain
    \begin{equation}\label{eq:u_t_norm_bound}
        1-\Abs*{u_t}^2 = 1 - \prod_{j = 1}^{t}\rbra*{1-\kappa_b\rbra{j}\sin^2\rbra{\alpha_j}} \leq \sum_{j=1}^t\kappa_b\rbra{j}\sin^2\rbra{\alpha_j} \leq b\sum_{j=1}^t\frac{\alpha_j^2}{j}.
    \end{equation}

    Now, for any $t \leq \ceil{\delta/\theta}$, we estimate
    \begin{equation*}
        \Abs{u_t - e_0} \leq \Abs*{u_t - \Abs*{u_t}\cdot e_0} + \Abs*{\Abs*{u_t}\cdot e_0 - e_0} = \Abs*{u_t}\Abs*{\begin{psmallmatrix} \cos\rbra{\alpha_{t+1}-2\theta} \\ \sin\rbra{\alpha_{t+1}-2\theta} \end{psmallmatrix} - \begin{psmallmatrix} \cos\rbra{0} \\ \sin\rbra{0} \end{psmallmatrix}} + \abs{1-\Abs*{u_t}}.
    \end{equation*}
    Since $\Abs*{u_t} \leq 1$ and $\abs{1-\Abs*{u_t}} \leq 1-\Abs*{u_t}^2$ (because $\Abs*{u_t} \leq 1$), we get
    \begin{equation}\label{eq:u_t_u_0_bound}
        \Abs{u_t - e_0} \leq \abs*{\alpha_{t+1}-2\theta} + \rbra*{1-\Abs*{u_t}^2}.
    \end{equation}

    Because $\theta< \delta \leq \pi/32$, we have $\ceil{\delta/\theta}+1 < 2\ceil{\delta/\theta} < 4\delta/\theta \leq \pi/\rbra{8\theta}$.
    Hence, for all $1\leq t \leq \ceil{\delta/\theta}+1$, we apply \cref{eq:alpha_t_bound}, to obtain
    \begin{equation}\label{eq:alpha_t_bound2}
        0\leq\alpha_t \leq 2t\theta.
    \end{equation}
    In particular, for $1\leq t\leq\ceil{\delta/\theta}$, the recurrence gives $\alpha_{t+1}\geq 2\theta$, so $\abs{\alpha_{t+1}-2\theta}\leq 2t\theta<4\delta$. For $t=0$, $\abs{\alpha_1-2\theta}=\theta<\delta$.
    Moreover, from \cref{eq:u_t_norm_bound} and \cref{eq:alpha_t_bound2}, we have
    \begin{equation*}
        1-\Abs*{u_t}^2 \leq b\sum_{j=1}^t\frac{\alpha_j^2}{j} \leq 2b\theta^2t\rbra{t+1} \leq 2b\theta^2\rbra{\delta/\theta+1}\rbra{\delta/\theta+2} < 2b\theta^2\rbra{2\delta/\theta}\rbra*{3\delta/\theta} = 12b\delta^2,
    \end{equation*}
    where $t \leq \ceil{\delta/\theta} < \delta/\theta+1 < 2\delta/\theta$ since $\theta<\delta$.
    Then, substituting the estimates into \cref{eq:u_t_u_0_bound} gives
    \begin{equation*}
        \Abs{u_t - e_0} \leq 4\delta +12b\delta^2
    \end{equation*}
    for $0 \leq t \leq \ceil{\delta/\theta}$, which completes the proof.
\end{proof}

For $\tau > \delta > 0$, \cref{eq:continuous_limiting} is no longer singular, and standard numerical analysis guarantees that the discrete recurrence approximates the continuous solution when the step size $\theta$ is small.
This is formalised in the next lemma. 

\begin{lemma}\label{lem:fixed_interval_error}
    Fix $\delta \in (0,\pi/32]$, $R > \delta$ and $b > 2$.
    Let $z^0$ be the solution from \cref{lem:ode_solution}.
    Then,
    \begin{equation*}
        \limsup_{\theta\to 0} \sup_{\ceil{\delta/\theta} \leq t \leq \floor{R/\theta}}\Abs*{u_{t} - z^0\rbra{t\theta}} \leq 4\delta+12b\delta^2 +\Abs*{z^0\rbra{\delta} - z^0\rbra{0}}.
    \end{equation*}
\end{lemma}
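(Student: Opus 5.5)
We will compare the discrete recurrence, started at the index $t_0=\ceil{\delta/\theta}$ from the state $u_{t_0}$, with the ODE solution on $[\delta,R]$. The tool is a standard one-step-method error estimate (a discrete Gr\"onwall argument), which applies because the coefficient matrix
\begin{equation*}
    M\rbra{\tau}=\begin{pmatrix}0&-2\\2&-\frac{b}{2\tau}\end{pmatrix}
\end{equation*}
is smooth and bounded on $[\delta/2,R+1]$. The error we need to control splits into two parts. The first is the initial mismatch $\Abs{u_{t_0}-z^0\rbra{t_0\theta}}$. The second is the propagation of local truncation errors over $O\rbra{1/\theta}$ steps.

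\textbf{Step 1: the initial mismatch.} By \cref{lem:small_deviation},
\begin{equation*}
    \Abs{u_{t_0}-e_0}\le 4\delta+12b\delta^2 .
\end{equation*}
Since $t_0\theta\to\delta$ as $\theta\to0$ and $z^0$ is continuous, we also have $\Abs{z^0\rbra{t_0\theta}-z^0\rbra{\delta}}\to0$. By the triangle inequality,
\begin{equation*}
    \Abs{u_{t_0}-z^0\rbra{t_0\theta}}\le 4\delta+12b\delta^2+\Abs{z^0\rbra{\delta}-z^0\rbra{0}}+o\rbra{1}.
\end{equation*}

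\textbf{Step 2: the local truncation error.} Define $e_t=u_t-z^0\rbra{t\theta}$ and $A_t=D_{\kappa_b\rbra{t+1}}R_\theta$. Then
\begin{equation*}
    e_{t+1}=A_te_t+\ell_t,\qquad \ell_t=A_tz^0\rbra{t\theta}-z^0\rbra{(t+1)\theta}.
\end{equation*}
Once $\theta$ is small enough that $t_0\ge 2b$, we have $\kappa_b\rbra{t+1}=b/(t+1)$ for all $t\ge t_0$. On the relevant range $\tau=t\theta\in[\delta,R]$ this gives
\begin{equation*}
    \sqrt{1-b/(t+1)}=1-\frac{b\theta}{2\tau}+O\rbra{\theta^2},
\end{equation*}
uniformly, with constants depending only on $\delta$, $R$ and $b$. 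Combined with \cref{eq:A_approximation}, this yields
\begin{equation*}
    A_t=I+\theta M\rbra{t\theta}+O\rbra{\theta^2}
\end{equation*}
uniformly. On $[\delta,R]$ the solution $z^0$ is $C^2$ with bounded second derivative, since $M$ is smooth there. Taylor's theorem then gives $z^0\rbra{\tau+\theta}=z^0\rbra{\tau}+\theta M\rbra{\tau} z^0\rbra{\tau}+O\rbra{\theta^2}$. Hence $\Abs{\ell_t}\le C\theta^2$, with $C$ uniform in $t$ on the range.

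\textbf{Step 3: stability and summation.} Each $A_t$ is a contraction, because $\Abs{D_\kappa}\le1$ and $R_\theta$ is orthogonal. Therefore, for $t_0\le t\le\floor{R/\theta}$,
\begin{equation*}
    \Abs{e_t}\le\Abs{e_{t_0}}+\sum_{s=t_0}^{t-1}\Abs{\ell_s}\le\Abs{e_{t_0}}+C\theta^2\cdot\frac{R}{\theta}=\Abs{e_{t_0}}+CR\theta.
\end{equation*}
This is simpler than a Gr\"onwall argument, since the growth factor is exactly at most $1$. Taking $\limsup_{\theta\to0}$ and inserting the bound from Step 1 gives the claim.

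\textbf{Main obstacle.} The delicate point is the uniformity in Step 2. The shift $\kappa_b\rbra{t+1}$ versus $b/\tau$, the boundary index $t_0\theta\ne\delta$, and the fact that $z^0$ must be evaluated slightly to the left of $\delta$ when $t_0\theta<\delta$ all need handling. We deal with them by working on $[\delta/2,R+1]$, where all derivatives of $M$ and $z^0$ are bounded by constants depending only on $\delta$, $R$ and $b$.
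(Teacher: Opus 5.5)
Your proof is correct and is essentially the paper's argument: you bound the initial mismatch at $t_0=\ceil{\delta/\theta}$ using \cref{lem:small_deviation} and continuity of $z^0$, and you propagate it using the uniform $O(\theta^2)$ local truncation error together with $\Abs{D_{\kappa}R_\theta}\le 1$. The only difference is that the paper passes through an auxiliary sequence $\hat{u}_t$ started at $z^0(t_0\theta)$ and cites standard forward-Euler convergence for $\hat{u}_t-z^0(t\theta)$, whereas you feed the truncation errors directly into the contractive error recursion, which is slightly more self-contained; note also that $t_0\theta\ge\delta$ always holds, so your worry about evaluating $z^0$ to the left of $\delta$ is unnecessary.
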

\begin{proof}
    For $\tau \in \sbra*{\delta,R}$, the coefficient matrix
    \begin{equation*}
        B\rbra{\tau} = \begin{pmatrix}
            0 & -2 \\ 2 & -\frac{b}{2\tau}
        \end{pmatrix}
    \end{equation*}
    is bounded and Lipschitz.
    Moreover, for sufficiently small $\theta$ and $\tau \in \sbra*{\delta,R}$, we have $\tau/\theta \geq \ceil{2b}$.
    Hence, with \cref{eq:A_approximation}, we obtain
    \begin{equation*}
        D_{\kappa_b\rbra{\tau/\theta+1}}R_{\theta} = I + \theta B\rbra{\tau} + O\rbra{\theta^2}.
    \end{equation*}
    Consequently, the recurrence 
    \begin{equation*}
        \hat{u}_{t+1} = D_{\kappa_b\rbra{t+1}}R_{\theta} \hat{u}_t = D_{\kappa_b\rbra{\tau/\theta+1}}R_{\theta} \hat{u}_t \quad \text{for $\ceil{\delta/\theta} \leq t \leq \floor{R/\theta}-1$ with $\hat{u}_{\ceil{\delta/\theta}} = z^0\rbra{\ceil{\delta/\theta}\theta}$}
    \end{equation*}
    agrees with the forward Euler discretisation of $z'\rbra{\tau} = B\rbra{\tau}z\rbra{\tau}$ up to a uniform $O\rbra{\theta^2}$ error per step (see, e.g., \cite{Robinson04}).
    The local truncation error is $O\rbra{\theta^2}$.
    Since the number of steps is $O\rbra{\rbra{R-\delta}/\theta} = O\rbra{1/\theta}$, the global error accumulates to $O\rbra{\theta}$, provided the initial condition $\hat{u}_{\ceil{\delta/\theta}} = z^0\rbra{\ceil{\delta/\theta}\theta}$.
    Hence,
    \begin{equation}\label{eq:euler_error}
        \sup_{\ceil{\delta/\theta} \leq t \leq \floor{R/\theta}}\Abs*{\hat{u}_t - z^0\rbra{t\theta}} = O\rbra{\theta}.
    \end{equation}
    Since the recurrence of $u_t$ is the same as that of $\hat{u}_t$, for $\ceil{\delta/\theta} \leq t \leq \floor{R/\theta}-1$, we have
    \begin{equation*}
        \Abs*{u_{t+1} - \hat{u}_{t+1}} = \Abs*{D_{\kappa_b\rbra{t+1}}R_{\theta} u_t - D_{\kappa_b\rbra{t+1}}R_{\theta} \hat{u}_t} \leq \Abs*{D_{\kappa_b\rbra{t+1}}R_{\theta}}\Abs*{u_t - \hat{u}_t} \leq \Abs*{u_t - \hat{u}_t},
    \end{equation*}
    where $\Abs*{D_{\kappa_b\rbra{t+1}}R_{\theta}}$ is the operator norm of $D_{\kappa_b\rbra{t+1}}R_{\theta}$.
    Then,
    \begin{equation*}\label{eq:hat_u_error}
        \sup_{\ceil{\delta/\theta} \leq t \leq \floor{R/\theta}}\Abs*{u_t - \hat{u}_t} = \Abs*{u_{\ceil{\delta/\theta}} - \hat{u}_{\ceil{\delta/\theta}}}.
    \end{equation*}
    Combining with \cref{eq:euler_error}, we obtain
    \begin{align*}
        \sup_{\ceil{\delta/\theta} \leq t \leq \floor{R/\theta}}\Abs*{u_t - z^0\rbra{t\theta}} &\leq \sup_{\ceil{\delta/\theta} \leq t \leq \floor{R/\theta}}\Abs*{u_t - \hat{u}_t} + \sup_{\ceil{\delta/\theta} \leq t \leq \floor{R/\theta}}\Abs*{\hat{u}_t - z^0\rbra{t\theta}} \\
        &= \Abs*{u_{\ceil{\delta/\theta}} - \hat{u}_{\ceil{\delta/\theta}}} + O\rbra{\theta} \\
        &= \Abs*{u_{\ceil{\delta/\theta}} - z^0\rbra{\ceil{\delta/\theta}\theta}} + O\rbra{\theta}.
    \end{align*}
    Recalling from \cref{lem:small_deviation}, for $\delta \in (0,\pi/32]$ and $\theta < \delta$, we have
    \begin{equation*}
        \Abs*{u_{\ceil{\delta/\theta}} - z^0\rbra{0}} = \Abs*{u_{\ceil{\delta/\theta}} - e_0} \leq 4\delta+12b\delta^2.
    \end{equation*}
    Then,
    \begin{equation*}
        \Abs*{u_{\ceil{\delta/\theta}} - z^0\rbra{\ceil{\delta/\theta}\theta}} \leq \Abs*{u_{\ceil{\delta/\theta}} - z^0\rbra{0}} + \Abs*{z^0\rbra{0} - z^0\rbra{\ceil{\delta/\theta}\theta}} \leq 4\delta+12b\delta^2 + \Abs*{z^0\rbra{\ceil{\delta/\theta}\theta} - z^0\rbra{0}}.
    \end{equation*}
    Therefore,
    \begin{equation*}
        \sup_{\ceil{\delta/\theta} \leq t \leq \floor{R/\theta}}\Abs*{u_t - z^0\rbra{t\theta}} \leq \Abs*{u_{\ceil{\delta/\theta}} - z^0\rbra{\delta}} + O\rbra{\theta} \leq 4\delta+12b\delta^2 + \Abs*{z^0\rbra{\ceil{\delta/\theta}\theta} - z^0\rbra{0}} + O\rbra{\theta}.
    \end{equation*}
    Finally, taking $\theta \to 0$ and using the continuity of $z^0$ (so that $\lim_{\theta\to 0} z^0\rbra{\ceil{\delta/\theta}\theta} = z^0\rbra{\delta}$), we obtain the desired bound.
\end{proof}

We now combine the previous estimates to establish the convergence of the discrete dynamics to the continuous solution on any fixed interval $\sbra{0, R}$.
The following lemma confirms that as the step $\theta$ vanishes, $u_t$ converges uniformly to the solution $z^0$ in \cref{lem:ode_solution}.

\begin{lemma}\label{lem:ode_approximation}
    Fix $R > 0$ and $b > 2$.
    Let $z^0:\sbra{0,R} \to \RR^2$ be the solution from \cref{lem:ode_solution}.
    Then,
    \begin{equation*}
        \lim_{\theta\to 0} \sup_{0 \leq t \leq \floor{R/\theta}}\Abs*{u_{t} - z^0\rbra{t\theta}} = 0.
    \end{equation*}
\end{lemma}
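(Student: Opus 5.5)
We prove the claim by an $\varepsilon$--$\delta$ argument that splits the index range at $\ceil{\delta/\theta}$ and lets $\delta \to 0$ only after $\theta \to 0$. Fix $\varepsilon > 0$. We will choose $\delta \in (0,\pi/32]$ with $\delta < R$, and then bound the two ranges separately.

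\emph{Initial segment $0 \leq t \leq \ceil{\delta/\theta}$.} Take $\theta < \delta$. By \cref{lem:small_deviation}, $\Abs{u_t - e_0} \leq 4\delta + 12b\delta^2$ on this range. For such $t$ we have $t\theta \leq \ceil{\delta/\theta}\theta < \delta + \theta < 2\delta$. Hence
\begin{equation*}
    \Abs*{u_t - z^0\rbra{t\theta}} \leq 4\delta + 12b\delta^2 + \sup_{0 \leq \tau \leq 2\delta}\Abs*{z^0\rbra{\tau} - e_0},
\end{equation*}
using $z^0\rbra{0} = e_0$. The right-hand side tends to $0$ as $\delta \to 0$, because $z^0$ is continuous at $0$ by \cref{lem:ode_solution}.

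\emph{Remaining segment $\ceil{\delta/\theta} \leq t \leq \floor{R/\theta}$.} \cref{lem:fixed_interval_error} gives
\begin{equation*}
    \limsup_{\theta \to 0}\sup_{t}\Abs*{u_t - z^0\rbra{t\theta}} \leq 4\delta + 12b\delta^2 + \Abs*{z^0\rbra{\delta} - z^0\rbra{0}}.
\end{equation*}
This bound also tends to $0$ as $\delta \to 0$, again by continuity of $z^0$ at $0$.

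\emph{Combining.} For each fixed admissible $\delta$, the limsup as $\theta \to 0$ of $\sup_{0 \leq t \leq \floor{R/\theta}}\Abs{u_t - z^0\rbra{t\theta}}$ is at most the larger of the two bounds above. That quantity is $O\rbra{\delta} + \omega\rbra{2\delta}$, where $\omega$ is the modulus of continuity of $z^0$ at $0$. The left-hand side does not depend on $\delta$, so letting $\delta \to 0$ shows the limsup is $0$, which gives the limit.

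The one point needing care is that $z^0$ must be the same function in both lemmas, defined on $\sbra{0,R}$ and independent of $\delta$. This holds by the uniqueness in \cref{lem:ode_solution}. Small cases are harmless: if $R \leq \delta$ we simply shrink $\delta$, and if $\floor{R/\theta} < \ceil{\delta/\theta}$ the second range is empty. I expect no step here to be a real obstacle. The substantive analytic work is already contained in \cref{lem:small_deviation} and \cref{lem:fixed_interval_error}, and this lemma only assembles them with the order of limits handled correctly.
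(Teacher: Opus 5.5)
Your proposal is correct and takes essentially the same route as the paper: split at $\ceil{\delta/\theta}$, apply \cref{lem:small_deviation} on the initial segment and \cref{lem:fixed_interval_error} on the rest, take $\limsup_{\theta\to 0}$ for fixed $\delta$, and then send $\delta\to 0$ using continuity of $z^0$ at $0$. On the initial segment you use the cruder bound $t\theta<2\delta$ where the paper takes a more careful limsup over $[0,\ceil{\delta/\theta}\theta]$; that is fine, provided you take $\delta\le R/2$ so that $[0,2\delta]$ lies inside the domain $[0,R]$ of $z^0$.
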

\begin{proof}
    Fix an arbitrary $\delta \in (0,\pi/32]$ with $\delta < R$ and assume $\theta < \min\cbra{\delta, R-\delta}$.
    We split the index range into two parts: the initial segment $0\leq t\leq \ceil{\delta/\theta}$ and the remaining part $\ceil{\delta/\theta} \leq t \leq \floor{R/\theta}$.

    For the initial segment, using \cref{lem:small_deviation}, we obtain
    \begin{align*}
        \sup_{0 \leq t \leq \ceil{\delta/\theta}}\Abs*{u_{t} - z^0\rbra{t\theta}} &\leq  \sup_{0 \leq t \leq \ceil{\delta/\theta}}\Abs*{u_{t} - e_0} +  \sup_{0 \leq t \leq \ceil{\delta/\theta}}\Abs*{e_0 - z^0\rbra{t\theta}} \\
        &\leq 4\delta+12b\delta^2 +  \sup_{0 \leq t \leq \ceil{\delta/\theta}}\Abs*{z^0\rbra{0} - z^0\rbra{t\theta}}.
    \end{align*}
    Because $z^0$ is continuous on $\sbra{0,R}$, we have
    \begin{align*}
        & \limsup_{\theta \to 0} \sup_{0 \leq t \leq \ceil{\delta/\theta}}\Abs*{z^0\rbra{0} - z^0\rbra{t\theta}} \\
        \leq{}& \max\cbra*{\limsup_{\theta \to 0}\sup_{0 \leq \tau \leq \delta}\Abs*{z^0\rbra{0} - z^0\rbra{\tau}}, \limsup_{\theta \to 0}\sup_{\delta \leq \tau \leq \ceil{\delta/\theta}\theta}\Abs*{z^0\rbra{0} - z^0\rbra{\tau}}} \\
        ={}& \max\cbra*{\sup_{0 \leq \tau \leq \delta}\Abs*{z^0\rbra{0} - z^0\rbra{\tau}}, \Abs*{z^0\rbra{0} - z^0\rbra{\delta}}} \\
        ={}& \sup_{0 \leq \tau \leq \delta}\Abs*{z^0\rbra{0} - z^0\rbra{\tau}}.
    \end{align*}
    Consequently,
    \begin{equation*}
        \limsup_{\theta \to 0} \sup_{0 \leq t \leq \ceil{\delta/\theta}}\Abs*{u_{t} - z^0\rbra{t\theta}} \leq 4\delta+12b\delta^2 + \sup_{0 \leq \tau \leq \delta}\Abs*{z^0\rbra{\tau} - z^0\rbra{0}}
    \end{equation*}
    For the second segment, \cref{lem:fixed_interval_error} gives
    \begin{equation*}
        \limsup_{\theta \to 0} \sup_{\ceil{\delta/\theta} \leq t \leq \floor{R/\theta}}\Abs*{u_{t} - z^0\rbra{t\theta}} \leq 4\delta+12b\delta^2 + \Abs*{z^0\rbra{\delta} - z^0\rbra{0}}.
    \end{equation*}
    Combining the two parts, the overall supremum over the full range satisfies
    \begin{align*}
        &\limsup_{\theta \to 0} \sup_{0 \leq t \leq \floor{R/\theta}}\Abs*{u_{t} - z^0\rbra{t\theta}} \\
        \leq{}& \max\cbra*{\limsup_{\theta \to 0}\sup_{0 \leq t \leq \ceil{\delta/\theta}}\Abs*{u_{t} - z^0\rbra{t\theta}}, \limsup_{\theta \to 0}\sup_{\ceil{\delta/\theta} \leq t \leq \floor{R/\theta}}\Abs*{u_{t} - z^0\rbra{t\theta}}} \\
        \leq{}& \max\cbra*{4\delta+12b\delta^2 + \sup_{0 \leq \tau \leq \delta}\Abs*{z^0\rbra{0} - z^0\rbra{\tau}}, 4\delta+12b\delta^2 + \Abs*{z^0\rbra{\delta} - z^0\rbra{0}}} \\
        ={}& 4\delta+12b\delta^2 + \sup_{0 \leq \tau \leq \delta}\Abs*{z^0\rbra{\tau} - z^0\rbra{0}}.
    \end{align*}
    Finally, because $z^0$ is continuous at $\tau = 0$, the right-hand side tends to $0$ as $\delta \to 0$, so we conclude
    \begin{equation*}
        \lim_{\theta \to 0} \sup_{0 \leq t \leq \floor{R/\theta}}\Abs*{u_{t} - z^0\rbra{t\theta}} = 0. \qedhere
    \end{equation*}
\end{proof}

With \cref{lem:ode_approximation}, it is easy to see that
\begin{equation*}
    \lim_{\theta \to 0} \sum_{t=0}^{\floor{R/\theta}}\theta\Abs*{u_t}^2 = \int_{0}^R \Abs*{z^0\rbra{\tau}}^2\mathrm{d}\tau.
\end{equation*}
To obtain the limit of $\sum_{t=0}^{\infty}\theta\Abs*{u_t}^2$ as $\theta \to 0$, we need to control the tail of the series.
The following lemma shows that $\Abs*{u_t}^2$ has an algebraic decay in $t$.

\begin{lemma}\label{lem:algebraic_decay}
    Assume $b > 2$ and fix $p \in \rbra{1,b/2}$.
    There exist constants $C = C\rbra{b, p} > 0$ and $R_0 = R_0\rbra{b,p} > 0$ such that for all $\theta \in \rbra{0,\pi/16}$ and all $t \geq R_0/\theta$,
    \begin{equation*}
        \Abs*{u_t}^2 \leq C \cdot \rbra{t\theta}^{-p}.
    \end{equation*}
\end{lemma}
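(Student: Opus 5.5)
The plan is to build a discrete Lyapunov-type energy that decays like $\rbra{1-p'/t}$ per step once $t\theta$ is large. Summing the exact per-step identities then gives the algebraic bound. The continuous limit \cref{eq:continuous_limiting} suggests the right functional. Writing $z=\begin{psmallmatrix}x\\y\end{psmallmatrix}$, we have $\rbra{\Abs{z}^2}' = -\frac{b}{\tau}y^2$, which only dissipates the $y$-component. Since $\rbra{xy}' = 2x^2-2y^2-\frac{b}{2\tau}xy$, the modified energy
\begin{equation*}
    V\rbra{\tau} = \Abs{z\rbra{\tau}}^2 - \frac{a}{\tau}x\rbra{\tau}y\rbra{\tau}, \qquad a = \frac{b}{4},
\end{equation*}
satisfies $V' = -\frac{b}{2\tau}\Abs{z}^2 + O\rbra{\tau^{-2}}\Abs{z}^2$. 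For $\tau \geq R_0$ large, $V$ is comparable to $\Abs{z}^2$, so $V' \leq -\frac{p'}{\tau}V$ for any $p'<b/2$. This yields $V\rbra{\tau}\lesssim \tau^{-p'}$. I will transplant this to the recurrence \cref{eq:u_t_update_b}, with $p<p'<b/2$ fixed.

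Concretely, write $u_t=\begin{psmallmatrix}x_t\\y_t\end{psmallmatrix}$ and define
\begin{equation*}
    V_t = \Abs{u_t}^2 - \frac{c_\theta}{t}x_ty_t,
\end{equation*}
with $c_\theta$ of order $b/(4\theta)\cdot\theta$-normalised. More precisely, I choose $c_\theta = \frac{b\theta}{\sin\rbra{4\theta}}\cdot\frac{1}{\theta}\cdot\frac{\theta}{1}$ tuned so that the rotation's cross-term contribution matches the dissipation; for $\theta\in\rbra{0,\pi/16}$, $\sin\rbra{4\theta}/\rbra{4\theta}$ is bounded below, so $c_\theta=\Theta\rbra{b}$ uniformly. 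The key point is to use exact one-step identities rather than Taylor expansions with $O\rbra{\theta^2}$ remainders. Such remainders would be fatal, because $1/t$ can be much smaller than $\theta^2$ in the tail. The first identity is that the rotation preserves the norm, so $\Abs{u_{t+1}}^2 = \Abs{u_t}^2-\kappa_b\rbra{t+1}\tilde y_t^2$ exactly, where $\tilde u_t = R_\theta u_t$. The second is that $\tilde x\tilde y = \frac{\sin 4\theta}{2}\rbra{x^2-y^2}+\cos\rbra{4\theta}\,xy$. The damping then multiplies $\tilde y$ by $\sqrt{1-\kappa_b}$. Collecting terms, $V_{t+1}-V_t$ equals $-\frac{1}{t}\rbra{\text{positive definite quadratic form, } \geq \rbra{b/2-\epsilon}\Abs{u_t}^2}$ plus error terms. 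The errors are bounded by $\Abs{u_t}^2$ times $O\rbra{\frac{1}{t}\cdot\frac{1}{t\theta}+\frac{1}{t}\cdot\theta+\frac{\kappa_b}{t}}$. Every error therefore carries an explicit factor $1/t$ together with a small factor. The factors $1/\rbra{t\theta}\leq 1/R_0$ and $\kappa_b\leq b/t$ are small once $R_0$ is large. The factor $\theta$ is not small up to $\pi/16$; for that regime I absorb the $\theta$-dependence exactly into $c_\theta$, instead of treating it as a perturbation.

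Granting this, for $t\geq t_0=\ceil{R_0/\theta}$ I obtain $V_{t+1}\leq\rbra{1-p'/t}V_t$. I also have $\frac12\Abs{u_t}^2\leq V_t\leq 2\Abs{u_t}^2$, since $\abs{c_\theta x_ty_t/t}\leq c_\theta/\rbra{2t}\cdot\Abs{u_t}^2$ is small relative to $\Abs{u_t}^2$ when $t\theta\geq R_0$. Iterating gives $V_t\leq V_{t_0}\prod_{s=t_0}^{t-1}\rbra{1-p'/s}\leq 2\rbra{t_0/t}^{p'}$, using $\Abs{u_{t_0}}\leq 1$. Hence $\Abs{u_t}^2\leq 4\rbra{t_0/t}^{p'}\leq C\rbra{t\theta}^{-p}$ for $t\theta\geq R_0$, with $C$ depending only on $b,p$ through $R_0$. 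Here I use $t_0\theta\leq R_0+\pi/16$ and $p\leq p'$.

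The main obstacle is the one-step computation of $V_{t+1}-V_t$ with fully uniform control in $\theta\in\rbra{0,\pi/16}$. It requires exact trigonometric identities, and the choice of $c_\theta$ must make the quadratic form's coefficient on both $x^2$ and $y^2$ at least $\rbra{b/2-\epsilon}/t$. I expect to spend most effort on choosing $c_\theta$ and $R_0$ so that all error terms are genuinely $o\rbra{1/t}$ relative to $\Abs{u_t}^2$, with no additive $\theta^2$ leakage.
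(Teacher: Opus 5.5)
\paragraph{Gap in the one-parameter ansatz.}
A quadratic Lyapunov functional with a one-step contraction would be a legitimate alternative to the paper's argument. The paper instead sums the exact identity $\sum_{j<q}\sin^2(\alpha_{T+1}+2j\theta)=\frac q2-O(1/\sin 2\theta)$ over blocks of $q_0\asymp 1/\theta$ rounds and bounds the measurement-induced angle drift by $bq^2/T$. Your specific ansatz $V_t=\Abs{u_t}^2-\frac{c}{t}x_ty_t$, however, cannot give the claimed contraction uniformly in $\theta\in(0,\pi/16)$. The step ``absorb the $\theta$-dependence exactly into $c_\theta$'' is where it fails.

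Use your two exact identities, with $\tilde y_t^2=\sin^2(2\theta)x_t^2+\sin(4\theta)x_ty_t+\cos^2(2\theta)y_t^2$. Up to terms of relative size $O(b^2/(t\theta))$, the form $t(V_{t+1}-V_t)$ then has:
\begin{itemize}
\item $x^2$- and $y^2$-coefficients summing to $-b$, with difference $b\cos 4\theta-c\sin 4\theta$;
\item $xy$-coefficient $-b\sin 4\theta+c(1-\cos 4\theta)$.
\end{itemize}
Its largest eigenvalue is
\[
-\frac b2+\frac12\sqrt{b^2-2bc\sin 4\theta+2c^2(1-\cos 4\theta)}\;\geq\;-\frac b2\rbra*{1-\sin 2\theta},
\]
with equality at $c=\frac b2\cot 2\theta$.

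So no choice of $c_\theta$ certifies $V_{t+1}\leq(1-p'/t)V_t$ unless $p'\leq\frac b2(1-\sin 2\theta)$. This bound tends to about $0.31\,b$ as $\theta\to\pi/16$. Consequently the argument fails for $\theta$ near $\pi/16$ whenever $p>\frac b2(1-\sin\frac\pi8)$. When $b<2/(1-\sin\frac\pi8)\approx 3.24$, it fails for every admissible $p$.

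Your criterion that the $x^2$ and $y^2$ coefficients each be at least $(b/2-\epsilon)/t$ ignores exactly this off-diagonal term. That term is $\Theta(b\theta/t)$, not $o(1/t)$, when $\theta$ is not small.

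Separately, your displayed $c_\theta$ simplifies to $b\theta/\sin(4\theta)=\Theta(b)$. The correction must have size $\Theta(b/\theta)$, matching $\frac{b}{4\tau}$ with $\tau=t\theta$. Otherwise the $x^2$-coefficient of $V_{t+1}-V_t$ is only $O(b\theta/t)$, and the argument fails even for small $\theta$. Your comparability step (``small when $t\theta\geq R_0$'') already assumes this $1/\theta$ scaling.

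\paragraph{Repair.}
Allow a second parameter, i.e.\ a general traceless correction $V_t=u_t^{\dagger}(I+\frac1t M_\theta)u_t$. Here $M_\theta$ is the traceless symmetric solution of
\[
R_\theta^{\dagger}M_\theta R_\theta-M_\theta=b\rbra*{R_\theta^{\dagger}ER_\theta-\tfrac12 I},\qquad E=\mathrm{diag}(0,1).
\]
Conjugation by $R_\theta$ acts on traceless symmetric $2\times 2$ matrices as a rotation by $4\theta\in(0,\pi/4)$. Hence this equation is uniquely solvable, with $\Abs{M_\theta}=b/(4\sin 2\theta)=O(b/\theta)$.

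Since $\Abs{u_{t+1}}^2=u_t^{\dagger}R_\theta^{\dagger}(I-\kappa E)R_\theta u_t$ exactly, the leading part of $t(V_{t+1}-V_t)+p'V_t$ is exactly $-(\frac b2-p')\Abs{u_t}^2$. The remaining terms are $O(b(1+b)/(t\theta))\Abs{u_t}^2$, which are absorbed once $t\theta\geq R_0(b,p)$. With this change, your comparability and iteration steps go through unchanged.

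The result is a correct proof by a single-step Lyapunov certificate. The paper gets its uniformity in $\theta$ differently: it averages $\sin^2$ over blocks of $\asymp 1/\theta$ rounds, and the exact trigonometric sum there covers every $\theta\in(0,\pi/16)$ at once.
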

\begin{proof}
    Recalling the recurrence of $\alpha_t$ in \cref{eq:adaptive_alpha_t} and \cref{prop:survival_product}, we have
    \begin{equation*}
        \Abs*{u_{t+1}}^2 = \Abs*{u_{t}}^2\rbra*{1-\kappa_b\rbra{t+1}\sin^2\rbra{\alpha_{t+1}}} \leq \Abs*{u_{t}}^2\exp\rbra*{-\kappa_b\rbra{t+1}\sin^2\rbra{\alpha_{t+1}}}.
    \end{equation*}
    For any integer $T \geq 2b$ and any integer $q > 0$, we have
    \begin{equation}\label{eq:ut_q_steps_bound}
        \Abs*{u_{T+q}}^2 \leq \Abs*{u_{T}}^2\exp\rbra*{-\sum_{j=0}^{q-1}\kappa_b\rbra{T+j+1}\sin^2\rbra{\alpha_{T+j+1}}} \leq \Abs*{u_{T}}^2\exp\rbra*{-\frac{b}{T+q}\sum_{j=0}^{q-1}\sin^2\rbra{\alpha_{T+j+1}}},
    \end{equation}
    where the last inequality holds because $\kappa_b\rbra{T+j+1} = b/\rbra{T+j+1} \geq b/\rbra{T+q}$.

    We need a lower bound for $\sum_{j=0}^{q-1}\sin^2\rbra{\alpha_{T+j+1}}$.
    Recalling from \cref{eq:alpha_alphat} that
    \begin{equation*}
        \alpha - \arctan\rbra*{\sqrt{1-x}\tan\rbra{\alpha}} = \int_{\sqrt{1-x}}^1 \frac{\tan\rbra{\alpha}}{1+s^2\tan^2\rbra{\alpha}}\mathrm{d}s 
    \end{equation*}
    actually holds for $x \in \sbra{0,1}$ and $\alpha \in \rbra{-\pi/2, \pi/2}$, we obtain
    \begin{align*}
        \abs*{\alpha - \arctan\rbra*{\sqrt{1-x}\tan\rbra{\alpha}}} &\leq \int_{\sqrt{1-x}}^1 \frac{\abs*{\tan\rbra{\alpha}}}{1+s^2\tan^2\rbra{\alpha}}\mathrm{d}s \\
        &= \int_{\sqrt{1-x}}^1 \frac{1}{\frac{1}{\abs*{\tan\rbra{\alpha}}}+s^2\abs*{\tan\rbra{\alpha}}}\mathrm{d}s \\
        &\leq \int_{\sqrt{1-x}}^1 \frac{1}{2s}\mathrm{d}s \\
        &= -\frac{1}{4}\ln\rbra{1-x}.
    \end{align*}
    By restricting $x \in \sbra{0,1/2}$, we have $-\frac{1}{4}\ln\rbra{1-x} \leq x$, thus
    \begin{equation*}
        \abs*{\alpha - \arctan\rbra*{\sqrt{1-x}\tan\rbra{\alpha}}} \leq x.
    \end{equation*}
    By continuity and $\pi$-periodicity, this bound extends to all $\alpha\in\RR$, with the arctangent interpreted continuously as $\alpha$ varies.
    Consequently, for $T \geq 2b$ (so $\kappa_b\rbra{T+j} \leq 1/2$),
    \begin{equation*}
        \abs*{\alpha_{T+j+1} - \rbra*{\alpha_{T+j} + 2\theta}} = \abs*{\arctan\rbra*{\sqrt{1-\kappa_b\rbra{T+j}}\tan\rbra{\alpha_{T+j}}} - \alpha_{T+j}} \leq \kappa_b\rbra{T+j} = \frac{b}{T+j} \leq \frac{b}{T}.
    \end{equation*}
    Then, by induction starting at $T+1$,
    \begin{equation*}
        \abs*{\alpha_{T+j+1} - \rbra*{\alpha_{T+1} + 2j\theta}} \leq \frac{bj}{T}.
    \end{equation*}
    Since $\alpha \to \sin^2\rbra{\alpha}$ is a $1$-Lipschitz function, we have
    \begin{equation}\label{eq:sum_alpha_j}
    \begin{aligned}
        \abs*{\sum_{j=0}^{q-1}\sin^2\rbra{\alpha_{T+j+1}} - \sum_{j=0}^{q-1}\sin^2\rbra{\alpha_{T+1}+2j\theta}} &\leq \sum_{j=0}^{q-1} \abs*{\sin^2\rbra{\alpha_{T+j+1}} - \sin^2\rbra{\alpha_{T+1}+2j\theta}} \\
        &\leq \sum_{j=0}^{q-1} \abs*{\alpha_{T+j+1} - \rbra*{\alpha_{T+1}+2j\theta}} \\
        &\leq \sum_{j=0}^{q-1}\frac{bj}{T} =\frac{bq\rbra{q-1}}{2T}.
    \end{aligned}
    \end{equation}
    Now evaluate the reference sum:
    \begin{equation}\label{eq:sum_alpha_j_theta}
    \begin{aligned}
        \sum_{j=0}^{q-1}\sin^2\rbra{\alpha_{T+1}+2j\theta} &= \sum_{j=0}^{q-1}\rbra*{\frac{1}{2} - \frac{1}{2}\cos\rbra{2\alpha_{T+1}+4j\theta}} \\
        &= \frac{q}{2} - \frac{1}{2}\sum_{j=0}^{q-1}\cos\rbra{2\alpha_{T+1}+4j\theta} \\
        &= \frac{q}{2} - \frac{1}{2}\frac{\sin\rbra*{2q\theta}}{\sin\rbra{2\theta}}\cos\rbra*{2\alpha_{T+1}+2\rbra{q-1}\theta}.
    \end{aligned}
    \end{equation}
    Hence,
    \begin{align*}
        \abs*{\sum_{j=0}^{q-1}\sin^2\rbra{\alpha_{T+j+1}} - \frac{q}{2}} &\leq \abs*{\sum_{j=0}^{q-1}\sin^2\rbra{\alpha_{T+j+1}} - \sum_{j=0}^{q-1}\sin^2\rbra{\alpha_{T+1}+2j\theta}} + \abs*{\sum_{j=0}^{q-1}\sin^2\rbra{\alpha_{T+1}+2j\theta} - \frac{q}{2}} \\
        &\leq \frac{bq\rbra{q-1}}{2T} + \abs*{\frac{\sin\rbra*{2q\theta}}{\sin\rbra{2\theta}}\cos\rbra*{2\alpha_{T+1}+2\rbra{q-1}\theta}} \\
        &\leq \frac{bq\rbra{q-1}}{2T} + \frac{1}{\sin\rbra{2\theta}}.
    \end{align*}
    Combining \cref{eq:sum_alpha_j,eq:sum_alpha_j_theta}, we have
    \begin{equation*}
        \sum_{j=0}^{q-1}\sin^2\rbra{\alpha_{T+j+1}} \geq \frac{q}{2} - \frac{bq\rbra{q-1}}{2T} - \frac{1}{\sin\rbra{2\theta}}.
    \end{equation*}
    With \cref{eq:ut_q_steps_bound}, we obtain
    \begin{equation}\label{eq:ut_q_steps_bound2}
        \begin{aligned}
            \Abs*{u_{T+q}}^2 &\leq \Abs*{u_{T}}^2\exp\rbra*{-\frac{b}{T+q}\rbra*{\frac{q}{2} - \frac{bq\rbra{q-1}}{2T} - \frac{1}{\sin\rbra{2\theta}}}} \\
            &= \Abs*{u_{T}}^2\exp\rbra*{-\frac{q}{T+q}\rbra*{\frac{b}{2} - \frac{b^2\rbra{q-1}}{2T} - \frac{b}{q\sin\rbra{2\theta}}}}.
        \end{aligned}
    \end{equation}
    Now choose parameters so that the exponent is at most $-\frac{pq}{T+q}$.
    We require
    \begin{equation*}
        \frac{b}{2} - \frac{b^2\rbra{q-1}}{2T} - \frac{b}{q\sin\rbra{2\theta}} \geq p.
    \end{equation*}
    A sufficient condition is
    \begin{equation}\label{eq:Tq}
        \frac{b^2q}{2T} \leq \frac{1}{2}\rbra*{\frac{b}{2} - p} \quad \text{and} \quad \frac{b}{q\sin\rbra{2\theta}} \leq \frac{1}{2}\rbra*{\frac{b}{2} - p},
    \end{equation}
    which requires
    \begin{equation*}
        T \geq \frac{2b^2}{b-2p} \cdot q \quad \text{and} \quad q \geq \frac{4b}{b-2p} \cdot \frac{1}{\sin\rbra{2\theta}}.
    \end{equation*}
    Then, since $\theta \in \rbra{0,\pi/16}$, we choose
    \begin{equation*}
        q_0 = \ceil*{\frac{4b}{b-2p} \cdot \frac{\pi}{4\theta}} > \frac{4b}{b-2p} \cdot \frac{1}{\sin\rbra{2\theta}} \quad \text{and} \quad T_0 = \ceil*{\frac{2b^2}{b-2p} \cdot q_0} \geq \frac{2b^2}{b-2p} \cdot q_0,
    \end{equation*}
    to meet \cref{eq:Tq}.
    Let $T_k = T_0+kq_0$ for a non-negative integer $k$; then $T_k$ and $q_0$ meet the condition \cref{eq:Tq}.
    Thus
    \begin{equation*}
        \frac{b}{2} - \frac{b^2\rbra{q_0-1}}{2T_k} - \frac{b}{q_0\sin\rbra{2\theta}} > \frac{b}{2} - \rbra*{\frac{b^2q_0}{2T_k} + \frac{b}{q_0\sin\rbra{2\theta}}} \geq \frac{b}{2} - \rbra*{\frac{b}{2} - p} = p.
    \end{equation*}
    Then, \cref{eq:ut_q_steps_bound2} gives
    \begin{equation*}
        \Abs*{u_{T_{k+1}}}^2 = \Abs*{u_{T_k+q_0}}^2 \leq \Abs*{u_{T_k}}^2\exp\rbra*{-\frac{pq_0}{T_k+q_0}}.
    \end{equation*}
    Iterating,
    \begin{equation*}
        \Abs*{u_{T_{k}}}^2 \leq \Abs*{u_{T_{0}}}^2\exp\rbra*{-p\sum_{j=0}^{k-1}\frac{q_0}{T_j+q_0}} = \Abs*{u_{T_{0}}}^2\exp\rbra*{-p\sum_{j=0}^{k-1}\frac{q_0}{T_0+\rbra{j+1}q_0}}.
    \end{equation*}
    Since $\Abs*{u_{T_{0}}}^2 \leq 1$ and
    \begin{equation*}
        \sum_{j=0}^{k-1}\frac{q_0}{T_0+\rbra{j+1}q_0} \geq \int_{T_0/q_0+1}^{T_0/q_0+\rbra*{k+1}}\frac{1}{x}\mathrm{d}x = \ln\rbra*{\frac{T_0+\rbra{k+1}q_0}{T_0+q_0}},
    \end{equation*}
    we obtain
    \begin{equation*}
        \Abs*{u_{T_k}}^2 \leq \rbra*{\frac{T_0+\rbra{k+1}q_0}{T_0+q_0}}^{-p}.
    \end{equation*}
    For any $t \geq T_0$, let $k_t = \floor{\rbra{t - T_0}/q_0}$.
    Then, $T_{k_t} \leq t < T_{k_t+1} = T_0 +\rbra{k_t+1}q_0$, and by the monotonicity of $\Abs*{u_t}^2$, we have
    \begin{equation}\label{eq:u_t_bound}
        \Abs*{u_{t}}^2 \leq \Abs*{u_{T_{k_t}}}^2 \leq \rbra*{\frac{T_0+\rbra{k_t+1}q_0}{T_0+q_0}}^{-p} \leq \rbra*{\frac{t}{T_0+q_0}}^{-p}.
    \end{equation}
    To choose constants independent of $\theta$, set
    \begin{equation*}
        q_* = \frac{b\pi}{b-2p}+\frac{\pi}{16}, \qquad
        R_0 = \frac{2b^2}{b-2p}q_*+\frac{\pi}{16}, \qquad C = \rbra{R_0+q_*}^p.
    \end{equation*}
    The ceiling bounds and $\theta < \pi/16$ give $q_0\theta < q_*$ and $T_0\theta < R_0$.
    Therefore, for $t \geq R_0/\theta$, \cref{eq:u_t_bound} yields
    \begin{equation*}
        \Abs*{u_t}^2 \leq \rbra*{T_0\theta+q_0\theta}^p\rbra{t\theta}^{-p} < C\rbra{t\theta}^{-p}.
    \end{equation*}
    This completes the proof.
\end{proof}

We are ready to connect the discrete sum of squared norms to the integral of the limiting continuous solution.
The following lemma establishes that as the step size $\theta \to 0$, the sum $\sum_{t=0}^{\infty}\theta\Abs*{u_t}^2$ converges to the integral of $\Abs*{z^0\rbra{\tau}}^2$ over $\tau \in [0,\infty)$.

\begin{lemma}\label{lem:integral_convergence}
    Fix $b > 2$.
    Let $z^0$ be the solution from \cref{lem:ode_solution}.
    Then,
    \begin{equation*}
        \lim_{\theta \to 0}\sum_{t=0}^{\infty}\theta\Abs*{u_t}^2 = \int_{0}^\infty \Abs*{z^0\rbra{\tau}}^2\mathrm{d}\tau.
    \end{equation*}
\end{lemma}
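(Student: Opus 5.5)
The plan is a standard truncation (dominated-tail) argument. It combines the uniform convergence on compact intervals from \cref{lem:ode_approximation} with the uniform algebraic tail bound from \cref{lem:algebraic_decay}.

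First, fix an exponent $p_\star \in \rbra{1,b/2}$, which is possible since $b>2. Take the constants $C=C\rbra{b,p_\star}$ and $R_0=R_0\rbra{b,p_\star}$ from \cref{lem:algebraic_decay}. I will first check that $\Abs*{z^0\rbra{\tau}}^2 \leq C\tau^{-p_\star}$ for $\tau \geq R_0$, so that $\int_0^\infty \Abs*{z^0}^2\,\mathrm{d}\tau$ is finite and the right-hand side makes sense. To see this, fix $\tau > R_0$ and pick $R>\tau$. Set $t=\floor{\tau/\theta}$; then $t\theta \to \tau$ and $t \geq R_0/\theta$ holds for small $\theta$ (use a slightly smaller $R_0$ or pass to the limit). \cref{lem:ode_approximation} gives $u_t \to z^0\rbra{\tau}$, using continuity of $z^0$. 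Letting $\theta\to0$ in $\Abs*{u_t}^2 \leq C\rbra{t\theta}^{-p_\star}$ yields the claim. Consistency of $z^0$ across different $R$ follows from the uniqueness in \cref{lem:ode_solution}.

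Next, fix $R > R_0$ and split the sum as $\sum_{t=0}^{\floor{R/\theta}}\theta\Abs*{u_t}^2 + \sum_{t>\floor{R/\theta}}\theta\Abs*{u_t}^2$.

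For the head, I argue as follows. \cref{lem:ode_approximation} gives $\sup_{t\le \floor{R/\theta}}\abs*{\Abs*{u_t}^2-\Abs*{z^0\rbra{t\theta}}^2}\to0$, because all norms are bounded by $\max\{1,\sup_{[0,R]}\Abs*{z^0}\}$. The head therefore differs from the Riemann sum $\sum_{t\le\floor{R/\theta}}\theta\Abs*{z^0\rbra{t\theta}}^2$ by at most $(R+\theta)\cdot o(1)$. Since $\Abs*{z^0}^2$ is continuous on $[0,R]$, this Riemann sum converges to $\int_0^R\Abs*{z^0}^2$.

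For the tail, for all $\theta<\pi/16$ we have, since $t\geq R/\theta > R_0/\theta$,
\begin{equation*}
\sum_{t>\floor{R/\theta}}\theta\Abs*{u_t}^2 \leq C\theta^{1-p_\star}\sum_{t>\floor{R/\theta}} t^{-p_\star} \leq C\theta^{1-p_\star}\rbra*{\rbra{R/\theta}^{-p_\star} + \int_{R/\theta}^\infty s^{-p_\star}\mathrm{d}s} = C\theta R^{-p_\star} + \frac{C R^{1-p_\star}}{p_\star-1}.
\end{equation*}
This bound is uniform in small $\theta$ and tends to $0$ as $R\to\infty$. The analogous bound $\int_R^\infty\Abs*{z^0}^2 \le CR^{1-p_\star}/(p_\star-1)$ holds by the first step.

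Finally, combining the pieces, the $\limsup$ and $\liminf$ as $\theta\to0$ of $\abs*{\sum_{t\ge0}\theta\Abs*{u_t}^2-\int_0^\infty\Abs*{z^0}^2}$ are at most $2CR^{1-p_\star}/(p_\star-1)$. Letting $R\to\infty$ gives the result.

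The only delicate step is the first one, transferring the tail bound to $z^0$. It needs the limit along the discrete grid to be compatible with the threshold $t\ge R_0/\theta$, and the solutions on different intervals $[0,R]$ to be the same function $z^0$. Uniqueness from \cref{lem:ode_solution} and a density argument with $\tau > R_0$ strictly handle both.
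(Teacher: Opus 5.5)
Your proposal is correct and follows essentially the same route as the paper. Both use uniform convergence from \cref{lem:ode_approximation} plus Riemann sums on $[0,R]$, then the algebraic decay from \cref{lem:algebraic_decay} to control both the discrete tail and the tail of $\int\Abs*{z^0}^2$, and finally let $R\to\infty$. The only difference is cosmetic: you transfer the decay bound to $z^0$ pointwise, whereas the paper transfers it in integrated form, passing to the limit in the partial sums over $[R,R']$ and then letting $R'\to\infty$.
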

\begin{proof}
    \allowdisplaybreaks
    We first show the convergence on any fixed interval $\sbra{L, R}$ with $0 \leq L < R$.
    From \cref{lem:ode_approximation}, we have
    \begin{equation}\label{eq:ode_approximation_interval}
        \limsup_{\theta\to 0} \sup_{\ceil{L/\theta} \leq t \leq \floor{R/\theta}}\Abs*{u_{t} - z^0\rbra{t\theta}} \leq \limsup_{\theta\to 0} \sup_{0 \leq t \leq \floor{R/\theta}}\Abs*{u_{t} - z^0\rbra{t\theta}} = 0.
    \end{equation}
    Then,
    \begin{align*}
        & \abs*{\sum_{t = \ceil{L/\theta}}^{\floor{R/\theta}}\theta \Abs*{u_t}^2 - \sum_{t = \ceil{L/\theta}}^{\floor{R/\theta}}\theta \Abs*{z^0\rbra{t\theta}}^2} \\
        \leq{}& \sum_{t = \ceil{L/\theta}}^{\floor{R/\theta}} \theta \abs*{\Abs*{u_t}^2 - \Abs*{z^0\rbra{t\theta}}^2} \\
        ={}& \sum_{t = \ceil{L/\theta}}^{\floor{R/\theta}} \theta \abs*{\ave*{u_t - z^0\rbra{t\theta}, u_t + z^0\rbra{t\theta}}} \\
        \leq{}& \sum_{t = \ceil{L/\theta}}^{\floor{R/\theta}} \theta \Abs*{u_t - z^0\rbra{t\theta}} \cdot \Abs*{u_t + z^0\rbra{t\theta}} \\
        \leq{}& \sum_{t = \ceil{L/\theta}}^{\floor{R/\theta}} \theta \Abs*{u_t - z^0\rbra{t\theta}} \cdot \rbra*{2\Abs*{u_t} + \Abs*{u_t - z^0\rbra{t\theta}}} \\ 
        \leq{}& \rbra*{\floor{R/\theta} - \ceil{L/\theta}+1}\theta \cdot \sup_{\ceil{L/\theta} \leq t \leq \floor{R/\theta}}\Abs*{u_{t} - z^0\rbra{t\theta}} \cdot \sup_{\ceil{L/\theta} \leq t \leq \floor{R/\theta}}\rbra*{2\Abs*{u_t} + \Abs*{u_t - z^0\rbra{t\theta}}} \\
        \leq{}& \rbra*{R-L +\theta} \cdot \epsilon_{\theta} \cdot \rbra*{2 + \epsilon_\theta},
    \end{align*}
    where $\epsilon_\theta = \sup_{\ceil{L/\theta} \leq t \leq \floor{R/\theta}} \Abs*{u_t - z^0\rbra{t\theta}}$.
    By \cref{eq:ode_approximation_interval}, $\epsilon_\theta \to 0$ as $\theta \to 0$.
    Hence,
    \begin{equation*}
        \lim_{\theta \to 0} \abs*{\sum_{t = \ceil{L/\theta}}^{\floor{R/\theta}}\theta \Abs*{u_t}^2 - \sum_{t = \ceil{L/\theta}}^{\floor{R/\theta}}\theta \Abs*{z^0\rbra{t\theta}}^2} = 0.
    \end{equation*}
    Because $\Abs*{z^0\rbra{\tau}}^2$ is continuous on $\sbra{L,R}$, the Riemann sum converges to the integral:
    \begin{equation*}
        \lim_{\theta \to 0} \sum_{t = \ceil{L/\theta}}^{\floor{R/\theta}}\theta \Abs*{z^0\rbra{t\theta}}^2 = \int_{L}^R \Abs*{z^0\rbra{\tau}}^2\mathrm{d}\tau.
    \end{equation*}
    Thus,
    \begin{equation}\label{lem:integral_convergence1}
        \lim_{\theta \to 0}\sum_{t = \ceil{L/\theta}}^{\floor{R/\theta}}\theta \Abs*{u_t}^2 = \lim_{\theta \to 0} \sum_{t = \ceil{L/\theta}}^{\floor{R/\theta}}\theta \Abs*{z^0\rbra{t\theta}}^2 = \int_{L}^R \Abs*{z^0\rbra{\tau}}^2\mathrm{d}\tau.
    \end{equation}

    Fix $p \in \rbra{1,b/2}$, by \cref{lem:algebraic_decay}, there exists $C = C\rbra{b,p} > 0$ and $R_0 = R_0\rbra{b,p} > 0$ such that for all $\theta \in \rbra{0,\pi/16}$ and all $t \geq R_0/\theta$,
    \begin{equation*}
        \Abs*{u_t}^2 \leq C\cdot \rbra*{t\theta}^{-p}.
    \end{equation*}
    Hence, for any $R \geq R_0$ and $\theta < \min\cbra{R_0, \pi/16}$,
    \begin{equation}\label{eq:tail_sum}
        \begin{aligned}
        \sum_{t = \ceil{R/\theta}}^{\infty} \theta\Abs*{u_t}^2
        &\leq C\theta^{1-p}\sum_{t = \ceil{R/\theta}}^{\infty}t^{-p}
        \leq C\theta^{1-p}\int_{\ceil{R/\theta} - 1}^\infty x^{-p}\mathrm{d}x \\
        &= \frac{C}{\rbra{p-1}\rbra*{\theta\ceil{R/\theta} - \theta}^{p-1}}
        \leq \frac{C}{\rbra{p-1}\rbra{R-\theta}^{p-1}}.
        \end{aligned}
    \end{equation}
    Then, for any $R' > R \geq R_0$, combining with \cref{lem:integral_convergence1}, we have
    \begin{equation*}
        \int_{R}^{R'} \Abs*{z^0\rbra{\tau}}^2\mathrm{d}\tau = \lim_{\theta \to 0} \sum_{t = \ceil{R/\theta}}^{\floor{R'/\theta}} \theta\Abs*{u_t}^2 \leq \limsup_{\theta \to 0} \sum_{t = \ceil{R/\theta}}^{\infty} \theta\Abs*{u_t}^2 \leq \frac{C}{\rbra*{p-1}R^{p-1}}.
    \end{equation*}
    Taking $R' \to \infty$, we obtain
    \begin{equation}\label{eq:tail_integral}
        \int_{R}^{\infty} \Abs*{z^0\rbra{\tau}}^2\mathrm{d}\tau \leq \frac{C}{\rbra*{p-1}R^{p-1}}.
    \end{equation}
    For any $R > R_0$ and $\theta < \min\cbra{R_0, \pi/16}$, we have
    \begin{align*}
        \abs*{\sum_{t=0}^{\infty}\theta\Abs*{u_t}^2 - \int_{0}^{\infty}\Abs*{z^0\rbra{\tau}}^2\mathrm{d}\tau} &\leq \abs*{\sum_{t=0}^{\floor{R/\theta}}\theta\Abs*{u_t}^2 - \int_{0}^{R}\Abs*{z^0\rbra{\tau}}^2\mathrm{d}\tau} + \abs*{\sum_{t=\floor{R/\theta}+1}^{\infty}\theta\Abs*{u_t}^2 - \int_{R}^{\infty}\Abs*{z^0\rbra{\tau}}^2\mathrm{d}\tau} \\
        &\leq \abs*{\sum_{t=0}^{\floor{R/\theta}}\theta\Abs*{u_t}^2 - \int_{0}^{R}\Abs*{z^0\rbra{\tau}}^2\mathrm{d}\tau} + \abs*{\sum_{t=\floor{R/\theta}+1}^{\infty}\theta\Abs*{u_t}^2} + \abs*{\int_{R}^{\infty}\Abs*{z^0\rbra{\tau}}^2\mathrm{d}\tau} \\
        &\leq \abs*{\sum_{t=0}^{\floor{R/\theta}}\theta\Abs*{u_t}^2 - \int_{0}^{R}\Abs*{z^0\rbra{\tau}}^2\mathrm{d}\tau} + \frac{C}{p-1}\rbra*{\rbra{R-\theta}^{1-p}+R^{1-p}}.
    \end{align*}
    Thus, with \cref{lem:integral_convergence1}, we obtain
    \begin{equation*}
        \limsup_{\theta \to 0}\abs*{\sum_{t=0}^{\infty}\theta\Abs*{u_t}^2 - \int_{0}^{\infty}\Abs*{z^0\rbra{\tau}}^2\mathrm{d}\tau} \leq \frac{2C}{\rbra*{p-1}R^{p-1}}.
    \end{equation*}
    Finally, let $R \to \infty$, with $p > 1$, we have
    \begin{equation*}
        \limsup_{\theta \to 0}\abs*{\sum_{t=0}^{\infty}\theta\Abs*{u_t}^2 - \int_{0}^{\infty}\Abs*{z^0\rbra{\tau}}^2\mathrm{d}\tau} = 0,
    \end{equation*}
    which means
    \begin{equation*}
        \lim_{\theta \to 0} \sum_{t=0}^{\infty}\theta\Abs*{u_t}^2 = \int_{0}^{\infty}\Abs*{z^0\rbra{\tau}}^2\mathrm{d}\tau. \qedhere
    \end{equation*}
\end{proof}

We now arrive at the main theorem of this section.
For the adaptive weak measurement schedule defined by $\kappa_b\rbra{t} = \min\cbra{1/2, b/t}$ with $b > 2$, the expected number of oracle calls can be expressed as $Q_p\rbra{b} = 2\sum_{t\geq 0}\Abs*{u_t}^2-1$ (one preparation call and two calls after each failed measurement).
By the analysis above, as the target probability $p \to 0$ (equivalently $\theta \to 0$), the rescaled sum $\theta\sum_{t\geq 0}\Abs*{u_t}^2$ converges to an integral involving the solution of the limiting differential equation.
Evaluating the integral yields the following asymptotic ratio.

\begin{theorem}
    [Oracle counts for completely unknown $p$]\label{thm:adaptive_unknown_p}
    Let $b > 2$ be the parameter defining the adaptive weak measurement strength $\kappa_b$ in \cref{algo:adaptive_unknown_p}.
    Then, the ratio of expected number of calls to $\cA$ (and $\cA^{\dagger}$) required by \cref{algo:adaptive_unknown_p} compared to $1/\sqrt{p}$, i.e., $\sqrt{p}\rbra*{2\sum_{t\geq 0}\Abs*{u_t}^2-1}$, converges to
    \begin{equation*}
        C\rbra{b} = \frac{b\sqrt{\pi}\Gamma\rbra*{\frac{b+2}{4}}^2\Gamma\rbra*{\frac{b-2}{2}}}{\rbra{b-1}\Gamma\rbra*{\frac{b}{4}}^2\Gamma\rbra*{\frac{b-1}{2}}}.
    \end{equation*}
    as $p \to 0$.
\end{theorem}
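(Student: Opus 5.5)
The plan is to reduce the theorem to evaluating an explicit integral by means of \cref{lem:integral_convergence}, and then to compute that integral in closed form. Since $\theta=\arcsin\rbra{\sqrt{p}}$, we have $\sqrt{p}/\theta\to 1$ as $p\to 0$. Hence
\begin{equation*}
    \sqrt{p}\rbra*{2\sum_{t\geq 0}\Abs*{u_t}^2-1} = \frac{\sqrt{p}}{\theta}\rbra*{2\theta\sum_{t\geq 0}\Abs*{u_t}^2} - \sqrt{p} \longrightarrow 2\int_0^\infty \Abs*{z^0\rbra{\tau}}^2\mathrm{d}\tau,
\end{equation*}
where $z^0=\rbra{x,y}^{\dagger}$ is the regular solution from \cref{lem:ode_solution}. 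Note that \cref{lem:ode_solution} is stated on $\sbra{0,R}$ for each fixed $R$; by uniqueness these solutions agree on overlaps, so they patch into a single solution on $[0,\infty)$. It therefore suffices to show $2\int_0^\infty\rbra{x^2+y^2}\mathrm{d}\tau = C\rbra{b}$.

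\textbf{Explicit solution.} I will eliminate $y$ as the overview suggests. From \cref{eq:continuous_limiting_xy} we have $y=-x'/2$. Substituting into the second equation gives
\begin{equation*}
    x''+\frac{b}{2\tau}x'+4x=0 .
\end{equation*}
This is a Bessel-type equation. Writing $x=\tau^{-\nu}w\rbra{2\tau}$ with $\nu=\frac{b-2}{4}>0$ reduces it to Bessel's equation of order $\nu$. The solution that is regular at $0$ with $x\rbra{0}=1$ is
\begin{equation*}
    x\rbra{\tau}=\Gamma\rbra{\nu+1}\tau^{-\nu}J_\nu\rbra{2\tau}.
\end{equation*}
The other solution, built from $Y_\nu$ or $J_{-\nu}$, blows up like $\tau^{-2\nu}$, so uniqueness in \cref{lem:ode_solution} identifies our solution with this one. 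The identity $\rbra{t^{-\nu}J_\nu\rbra{t}}'=-t^{-\nu}J_{\nu+1}\rbra{t}$ then gives
\begin{equation*}
    y\rbra{\tau}=\Gamma\rbra{\nu+1}\tau^{-\nu}J_{\nu+1}\rbra{2\tau},
\end{equation*}
and this indeed satisfies $y\rbra{0}=0$. I will verify directly that this pair satisfies \cref{eq:continuous_limiting_xy}.

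\textbf{Evaluating the integral.} Substituting $s=2\tau$,
\begin{equation*}
    \int_0^\infty\rbra{x^2+y^2}\mathrm{d}\tau=\Gamma\rbra{\nu+1}^2 2^{2\nu-1}\int_0^\infty s^{-2\nu}\rbra*{J_\nu\rbra{s}^2+J_{\nu+1}\rbra{s}^2}\mathrm{d}s .
\end{equation*}
The integrand converges at infinity because $J^2=O\rbra{1/s}$ and $2\nu>0$; this is exactly where $b>2$ enters. It converges at $0$ since $s^{-2\nu}J_\nu^2\to 2^{-2\nu}/\Gamma\rbra{\nu+1}^2$. Each piece is a Weber--Schafheitlin integral,
\begin{equation*}
    \int_0^\infty s^{-\lambda}J_\mu\rbra{s}^2\mathrm{d}s=\frac{\Gamma\rbra{\lambda}\Gamma\rbra*{\mu+\frac{1-\lambda}{2}}}{2^\lambda\Gamma\rbra*{\frac{1+\lambda}{2}}^2\Gamma\rbra*{\mu+\frac{1+\lambda}{2}}},
\end{equation*}
valid for $0<\lambda<2\mu+1$. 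This covers $\mu=\nu+1$ directly. For $\mu=\nu$ with $\lambda=2\nu$ we are at the boundary of the range, so I would either use analytic continuation or, as the paper's overview indicates, pass to Fourier space. In the Fourier route, $s^{-\nu}J_\nu\rbra{s}$ has Fourier transform proportional to $\rbra{1-\omega^2}^{\nu-1/2}\mathbf{1}_{\abs{\omega}<1}$, and Parseval reduces the integral to a Beta integral $\int_{-1}^{1}\rbra{1-\omega^2}^{2\nu-1}\mathrm{d}\omega$. The $J_{\nu+1}$ term would be handled in the same way, or via $y=-x'/2$, which multiplies the transform by $i\omega/2$. After adding the two pieces, I would use the duplication formula and $\Gamma\rbra{z+1}=z\Gamma\rbra{z}$ to convert the result into the stated form with $\Gamma\rbra{\frac{b+2}{4}}$, $\Gamma\rbra{\frac b4}$, $\Gamma\rbra{\frac{b-2}{2}}$ and $\Gamma\rbra{\frac{b-1}{2}}$. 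As a sanity check, $b=4$ (so $\nu=1/2$) should give $2\pi/3$.

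\textbf{Main obstacle.} The analysis is already done by \cref{lem:integral_convergence}, so the main difficulty is the special-function computation. Three things need care there: the boundary-case evaluation for the $J_\nu^2$ term, getting the $2^{2\nu}$ factors right, and the Gamma-function algebra that brings the result into exactly the stated form of $C\rbra{b}$. I would fix constants using the $b=4$ case, where the functions are elementary: $x=\sin\rbra{2\tau}/\rbra{2\tau}$.
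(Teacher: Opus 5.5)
Your proposal is correct. The reduction through \cref{lem:integral_convergence}, including the factor $\sqrt{p}/\theta\to 1$ that the paper uses implicitly, is the same as the paper's, but you evaluate the integral by a different route.

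\textbf{How the paper does it.} The paper never writes $z^0$ explicitly. It extends $x$ evenly and Fourier-transforms $\tau x''+\frac{b}{2}x'+4\tau x=0$ into a first-order equation in $\omega$. It then argues that $\hat{x}=C(4-\omega^2)_+^{(b-4)/4}$ by excluding components outside $[-2,2]$ and at $\pm 2$, and finishes with Plancherel and two Beta integrals.

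\textbf{How you do it.} You exhibit the explicit pair $x=\Gamma(\nu+1)\tau^{-\nu}J_\nu(2\tau)$ and $y=\Gamma(\nu+1)\tau^{-\nu}J_{\nu+1}(2\tau)$, with $\nu=\frac{b-2}{4}$. This pair does satisfy \cref{eq:continuous_limiting_xy}: one has $x'=-2y$, and also $(\tau^{b/2}y)'=2\tau^{b/2}x$ because $\tau^{b/2}y=\Gamma(\nu+1)\tau^{\nu+1}J_{\nu+1}(2\tau)$. It is also continuous at $0$ with value $e_0$, so the uniqueness in \cref{lem:ode_solution} identifies it with $z^0$.

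\textbf{Comparison.} Your route gains rigour in the identification step. You verify a candidate rather than solve in Fourier space, so you avoid the distributional bookkeeping that the paper handles only briefly. The cost is importing Bessel-function facts: the reduction to Bessel's equation, the derivative identity, the asymptotics, and the Weber--Schafheitlin formula. The paper's argument is self-contained by comparison. The two routes meet in the middle. Poisson's integral shows that the Fourier transform of $\tau^{-\nu}J_\nu(2\tau)$ is a multiple of $(4-\omega^2)_+^{\nu-1/2}=(4-\omega^2)_+^{(b-4)/4}$, which is exactly the paper's $\hat{x}$ and also your fallback.

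\textbf{One correction.} The case $\mu=\nu$, $\lambda=2\nu$ is not on the boundary of the Weber--Schafheitlin range. The condition $0<\lambda<2\mu+1$ reads $0<2\nu<2\nu+1$, which holds strictly for $b>2$. So the formula applies to both terms directly, and no analytic continuation or Fourier detour is needed. Carrying it out gives
\[
\int_0^\infty s^{-2\nu}J_\nu(s)^2\,\mathrm{d}s=\frac{\sqrt{\pi}\,\Gamma(2\nu)}{2^{2\nu}\Gamma(\nu+\frac12)^2\Gamma(2\nu+\frac12)}.
\]
The $J_{\nu+1}$ integral is this value times $\frac{\Gamma(3/2)\Gamma(2\nu+1/2)}{\Gamma(1/2)\Gamma(2\nu+3/2)}=\frac{1}{4\nu+1}=\frac{1}{b-1}$, so the sum of the two integrals carries the factor $\frac{b}{b-1}$. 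Multiplying by $2\Gamma(\nu+1)^2 2^{2\nu-1}$ then gives exactly $C(b)$. This uses only $\Gamma(z+1)=z\Gamma(z)$, not the duplication formula.
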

\begin{proof}
    With \cref{lem:integral_convergence}, we have
    \begin{equation*}
        \lim_{p\to 0}\sqrt{p} Q_p\rbra{b} = \lim_{p\to 0} 2\sqrt{p}\sum_{t\geq 0}S_t\rbra*{\arcsin\rbra{\sqrt{p}}, \kappa_b} = 2\lim_{\theta \to 0}\sum_{t\geq 0}\theta\Abs*{u_t}^2 = 2\int_{0}^{\infty}\Abs*{z^0\rbra{\tau}}^2\mathrm{d}\tau,
    \end{equation*}
    where $z^0$ is the solution from \cref{lem:ode_solution}.
    Write $z^0\rbra{\tau} = \begin{psmallmatrix}
        x\rbra{\tau} \\ y\rbra{\tau}
    \end{psmallmatrix}$, with \cref{eq:continuous_limiting_xy}, we have
    \begin{equation}\label{eq:continuous_limiting_x}
        x''\rbra{\tau} + \frac{b}{2\tau} x'\rbra{\tau} + 4x\rbra{\tau} = 0
    \end{equation}
    and
    \begin{equation}\label{eq:norm2_integral}
        \int_{0}^{\infty}\Abs*{z^0\rbra{\tau}}^2\mathrm{d}\tau = \int_{0}^{\infty}\rbra*{x\rbra{\tau}^2 + \frac{1}{4}x'\rbra{\tau}^2}\mathrm{d}\tau.
    \end{equation}

    Because $x'\rbra{0} = 0$, we extend $x\rbra{\tau}$ evenly to the whole $\RR$, then use the Fourier transform convention:
    \begin{equation}
        \hat{x}\rbra{\omega} = \cF\sbra*{x}\rbra{\omega} = \int_{-\infty}^{\infty}x\rbra{\tau}e^{-i\omega\tau}\mathrm{d}\tau, \qquad x\rbra{\tau} = \frac{1}{2\pi}\int_{-\infty}^{\infty}\hat{x}\rbra{\omega}e^{i\omega\tau}\mathrm{d}\omega.
    \end{equation}
    With
    \begin{equation*}
        \cF\sbra*{x'}\rbra{\omega} = i\omega\hat{x}\rbra{\omega}, \quad \cF\sbra{x''}\rbra{\omega} = -\omega^2\hat{x}\rbra{\omega}, \quad \text{and} \quad \cF\sbra{\tau f}\rbra{\omega} = i\frac{\mathrm{d}\hat{f}}{\mathrm{d}\omega}\rbra{\omega},
    \end{equation*}
    multiplying \cref{eq:continuous_limiting_x} by $\tau$ gives
    \begin{equation*}
        -i\rbra*{2\omega\hat{x}\rbra{\omega} +\omega^2\frac{\mathrm{d}\hat{x}}{\mathrm{d}\omega}\rbra{\omega}} + \frac{b}{2}i\omega\hat{x}\rbra{\omega} + 4i\frac{\mathrm{d}\hat{x}}{\mathrm{d}\omega}\rbra{\omega} = 0.
    \end{equation*}
    Thus,
    \begin{equation*}
        \frac{1}{\hat{x}\rbra{\omega}}\frac{\mathrm{d}\hat{x}}{\mathrm{d}\omega}\rbra{\omega} = \frac{4-b}{2}\frac{\omega}{4-\omega^2}.
    \end{equation*}
    Integrating gives
    \begin{equation*}
        \hat{x}\rbra{\omega} = C \rbra*{4 - \omega^2}^{\frac{b-4}{4}}
    \end{equation*}
    for a constant $C$ on $\rbra{-2,2}$.
    On either interval outside $\sbra{-2,2}$, the solution is a constant multiple of $\rbra{\omega^2-4}^{\rbra{b-4}/4}$, which must vanish because $\omega\hat{x} \in L^2\rbra{\RR}$ and $b > 2$.
    Since $\hat{x} \in L^2\rbra{\RR}$ also excludes distributions supported at $\omega = \pm 2$, we obtain
    \begin{equation*}
        \hat{x}\rbra{\omega} = C \rbra*{4 - \omega^2}_+^{\frac{b-4}{4}}.
    \end{equation*}
    Then,
    \begin{equation*}
        x\rbra{\tau} = \frac{1}{2\pi}\int_{-\infty}^{\infty}\hat{x}\rbra{\omega}e^{i\omega\tau}\mathrm{d}\omega = \frac{C}{2\pi}\int_{-2}^{2}\rbra*{4 - \omega^2}^{\frac{b-4}{4}}e^{i\omega\tau}\mathrm{d}\omega.
    \end{equation*}
    Using $x\rbra{0} = 1$ and
    \begin{equation}
        \label{eq:beta_gamma}
    \begin{aligned}
        \int_{-2}^{2}\rbra*{4 - \omega^2}^{\frac{b-4}{4}}\mathrm{d}\omega &= \int_{-1}^{1}\rbra*{4 - \rbra{2s}^2}^{\frac{b-4}{4}}\mathrm{d}2s \\
        &= 2^{\frac{b-2}{2}}\int_{-1}^1\rbra{1-s^2}^{\frac{b-4}{4}}\mathrm{d}s \\
        &= 2^{\frac{b}{2}}\int_{0}^1\rbra{1-s^2}^{\frac{b-4}{4}}\mathrm{d}s \\
        &= 2^{\frac{b-2}{2}}\int_{0}^1\rbra{s^2}^{-\frac{1}{2}}\rbra{1-s^2}^{\frac{b-2}{4}-\frac{1}{2}}\mathrm{d}s^2 \\
        &= 2^{\frac{b-2}{2}}\int_{0}^1\rbra{s^2}^{\frac{1}{2}-1}\rbra{1-s^2}^{\frac{b}{4}-1}\mathrm{d}s^2 \\
        &= \frac{2^{\frac{b-2}{2}} \Gamma\rbra*{\frac{1}{2}}\Gamma\rbra*{\frac{b}{4}}}{\Gamma\rbra*{\frac{b+2}{4}}} = \frac{2^{\frac{b-2}{2}} \sqrt{\pi}\Gamma\rbra*{\frac{b}{4}}}{\Gamma\rbra*{\frac{b+2}{4}}},
    \end{aligned}
    \end{equation}
    where the last two equalities use the relation between the Beta and Gamma functions~\cite[Chapter~12.4]{WW21} and $\Gamma\rbra*{\frac{1}{2}} = \sqrt{\pi}$, we have
    \begin{equation*}
        C = \frac{2\sqrt{\pi}\Gamma\rbra*{\frac{b+2}{4}}}{2^{\frac{b-2}{2}}\Gamma\rbra*{\frac{b}{4}}} \quad \text{ and } \quad \hat{x}\rbra{\omega} = \frac{2\sqrt{\pi}\Gamma\rbra*{\frac{b+2}{4}}}{2^{\frac{b-2}{2}}\Gamma\rbra*{\frac{b}{4}}} \rbra*{4 - \omega^2}_+^{\frac{b-4}{4}}.
    \end{equation*}

    Since we have extended $x\rbra{\tau}$ evenly to the whole $\RR$, with Parseval-Plancherel identity, we obtain
    \begin{align*}
        \int_{0}^{\infty}\rbra*{x\rbra{\tau}^2 + \frac{1}{4}x'\rbra{\tau}^2}\mathrm{d}\tau &= \frac{1}{4\pi}\int_{-\infty}^{\infty}\rbra*{\abs*{\cF\rbra{x}\rbra{\omega}}^2 + \frac{1}{4}\abs*{\cF\rbra{x'}\rbra{\omega}}^2}\mathrm{d}\omega \\
        &= \frac{1}{4\pi}\int_{-\infty}^{\infty}\rbra*{1+\frac{\omega^2}{4}}\abs*{\hat{x}\rbra{\omega}}^2\mathrm{d}\omega \\
        &= \frac{\Gamma\rbra*{\frac{b+2}{4}}^2}{2^{b-2}\Gamma\rbra*{\frac{b}{4}}^2}\int_{-2}^{2}\rbra*{1+\frac{\omega^2}{4}}\rbra*{4 - \omega^2}^{\frac{b-4}{2}}\mathrm{d}\omega.
    \end{align*}
    With the similar derivation of \cref{eq:beta_gamma}, we have
    \begin{align*}
        \int_{-2}^{2}\rbra*{4 - \omega^2}^{\frac{b-4}{2}}\mathrm{d}\omega &= 2^{b-2}\int_{0}^1\rbra{1-s^2}^{\frac{b-4}{2}}\mathrm{d}s = 2^{b-3}\int_{0}^1\rbra{s^2}^{\frac{1}{2}-1}\rbra{1-s^2}^{\frac{b-2}{2}-1}\mathrm{d}s^2 = 2^{b-3}\sqrt{\pi}\frac{\Gamma\rbra*{\frac{b-2}{2}}}{\Gamma\rbra*{\frac{b-1}{2}}}. \\
        \int_{-2}^{2}\frac{\omega^2}{4}\rbra*{4 - \omega^2}^{\frac{b-4}{2}}\mathrm{d}\omega &= 2^{b-2}\int_{0}^{1}s^2\rbra*{1 - s^2}^{\frac{b-4}{2}}\mathrm{d}s = 2^{b-3}\int_{0}^{1}\rbra{s^2}^{\frac{3}{2}-1}\rbra*{1 - s^2}^{\frac{b-2}{2}-1}\mathrm{d}s^2 = 2^{b-4}\sqrt{\pi}\frac{\Gamma\rbra*{\frac{b-2}{2}}}{\Gamma\rbra*{\frac{b+1}{2}}}.
    \end{align*}
    Hence,
    \begin{align*}
        \int_{0}^{\infty}\rbra*{x\rbra{\tau}^2 + \frac{1}{4}x'\rbra{\tau}^2}\mathrm{d}\tau &= \frac{\Gamma\rbra*{\frac{b+2}{4}}^2}{2^{b-2}\Gamma\rbra*{\frac{b}{4}}^2}\rbra*{2^{b-3}\sqrt{\pi}\frac{\Gamma\rbra*{\frac{b-2}{2}}}{\Gamma\rbra*{\frac{b-1}{2}}} + 2^{b-4}\sqrt{\pi}\frac{\Gamma\rbra*{\frac{b-2}{2}}}{\Gamma\rbra*{\frac{b+1}{2}}}} \\
        &= \frac{\sqrt{\pi}\Gamma\rbra*{\frac{b+2}{4}}^2}{4\Gamma\rbra*{\frac{b}{4}}^2}\rbra*{2\frac{\Gamma\rbra*{\frac{b-2}{2}}}{\Gamma\rbra*{\frac{b-1}{2}}} + \frac{\Gamma\rbra*{\frac{b-2}{2}}}{\Gamma\rbra*{\frac{b+1}{2}}}} \\
        &= \frac{\sqrt{\pi}\Gamma\rbra*{\frac{b+2}{4}}^2}{4\Gamma\rbra*{\frac{b}{4}}^2}\rbra*{\frac{2\Gamma\rbra*{\frac{b-2}{2}}}{\Gamma\rbra*{\frac{b-1}{2}}} + \frac{2\Gamma\rbra*{\frac{b-2}{2}}}{\rbra{b-1}\Gamma\rbra*{\frac{b-1}{2}}}} \\
        &= \frac{b\sqrt{\pi}\Gamma\rbra*{\frac{b+2}{4}}^2\Gamma\rbra*{\frac{b-2}{2}}}{2\rbra{b-1}\Gamma\rbra*{\frac{b}{4}}^2\Gamma\rbra*{\frac{b-1}{2}}}.
    \end{align*}
    Multiplying by $2$ implies the desired
    \begin{equation*}
        C\rbra{b} = \frac{b\sqrt{\pi}\Gamma\rbra*{\frac{b+2}{4}}^2\Gamma\rbra*{\frac{b-2}{2}}}{\rbra{b-1}\Gamma\rbra*{\frac{b}{4}}^2\Gamma\rbra*{\frac{b-1}{2}}}. \qedhere
    \end{equation*}
\end{proof}

After establishing the asymptotic constant $C\rbra{b}$ for the adaptive weak measurement schedule, we can explore its dependence on the parameter $b$.
Define the normalised expected cost
\begin{equation*}
    R_p\rbra{b} = \frac{2\sqrt{p}\,Q_p\rbra{b}}{\pi}.
\end{equation*}
Then $R_p\rbra{b} \to 2C\rbra{b}/\pi$ as $p \to 0$.
\cref{fig:Cb} plots $2C\rbra{b}/\pi = C\rbra{b}/\rbra{\pi/2}$ for $b \in \rbra{2,10}$, where $\pi/2$ is the corresponding asymptotic constant for standard Grover search.
As $b \to 2^+$, the function diverges because the Gamma factor $\Gamma\rbra*{\frac{b-2}{2}}$ becomes singular.
It then decreases to a minimum near $b \approx 5.2$, before increasing slowly for larger $b$.

\definecolor{paperNavy}{HTML}{345995}
\definecolor{paperOrange}{HTML}{E17C05}
\definecolor{paperGreen}{HTML}{2A9D8F}
\definecolor{paperRed}{HTML}{D1495B}

\begin{figure}[htbp]
\centering
\begin{tikzpicture}
\begin{axis}[
    scale=.9,
    width=9cm,
    height=6cm,
    title={$2C(b)/\pi$},
    xlabel={$b$},
    x label style={overlay},
    x tick label style={
        overlay
    },
    xmin=2-.5, xmax=10+.5,
    ymin=1-.5,
    ytick distance=1,
    xtick distance=1,
    grid=major,
    minor y tick num=2,
    minor x tick num=9,
]

\addplot[
    very thick,
    mark=none,
    color=paperRed,
] table[
    x=b,
    y=C,
] {./figures/Cb.dat};

\addplot[
    only marks,
    mark=*,
    mark size=2.2pt,
    color=paperRed,
] coordinates {(5.2,1.28467878)};

\node[
    yshift=5pt,
    anchor=south west,
] at (axis cs:5.2,1.28467878)
{$(5.2,1.28467878)$};

\end{axis}
\end{tikzpicture}
\vspace{1cm}
\caption{Limiting normalised cost $\lim_{p\to 0}R_p\rbra{b} = 2C\rbra{b}/\pi$ for our adaptive schedule. The analytic expression diverges as $b \to 2^+$ and has a numerical minimum near $b = 5.2$, where $C\rbra{b} \approx 2.02$.}
\label{fig:Cb}
\end{figure}

To put our result in context, we compare our schedule with those of \cite{Mizel09} and \cite[Algorithm~6.1]{Wang18}.
Both are designed for the setting where $p$ is completely unknown; in particular, \cite{Wang18} proves the same $1/\sqrt{p}$ scaling.
We use the following strength functions in the same weakly measured loop, preparing $\cA\ket{0^n}$ and measuring before each Grover rotation:
\begin{equation*}
    \kappa^{\text{M}}\rbra{t} = 1 - \rbra*{\frac{1-\sin\rbra{\pi/(2t)}}{1+\sin\rbra{\pi/(2t)}}}^2
\end{equation*}
and
\begin{equation*}
    \kappa_b^{\text{W}}\rbra{t} = 1-e^{-1/2^{\ceil*{\log_2\rbra{t/b}}}}, \qquad t \geq 1,
\end{equation*}
respectively.
The Wang-type family retains the dyadic structure, with the initial blocks specified by the formula above; the original algorithm sets block lengths using a target error. We compare these strength functions under our preparation and query conventions.
For comparison, we re-parameterise the latter schedule by $b$ so that, for large $t$,
\begin{equation*}
    \kappa_b^{\mathrm{W}}\rbra{t} \sim \lambda_b\rbra{t}
    = \frac{1}{2^{\ceil*{\log_2\rbra{t/b}}}} \in \left(\frac{b}{2t},\frac{b}{t}\right].
\end{equation*}
This places both families on the same $1/t$ scale, but equal values of $b$ need not be equally well tuned.

\begin{figure}[htbp]
\centering
\input{figures/comparison_parameters.tex}
\begingroup
\pgfplotsset{comparison axis/.style={
    width=1\linewidth,
    height=5cm,
    xlabel={$1/p$},
    ylabel={$2\sqrt{p}\,Q_p/\pi$},
    grid=both,
    axis line style={black},
    tick style={black},
    tick label style={black},
    label style={black},
    x label style={overlay},
    y label style={overlay},
    x tick label style={overlay},
    title style={font=\small},
    legend cell align=left,
    legend style={font=\scriptsize,fill=white,draw=gray!50},
    unbounded coords=discard,
}}
\begin{minipage}[t]{0.49\linewidth}
\vspace{0pt}
\centering
\begin{tikzpicture}
\begin{axis}[
    scale=.9,
    comparison axis,
    title={(a) Small target probabilities},
    xmode=log,
    xmin=1e4,xmax=1e12,
    xtick={1e4,1e5,1e6,1e7,1e8,1e9,1e10,1e11,1e12},
    ymin=1.25,ymax=1.43,
    y label style={yshift=-8mm},
    extra y ticks={1.28467878},
    extra y tick labels={\(1.28467878\)},
    extra y tick style={
        ticklabel pos=right,
        tick label style={font=\scriptsize,gray!70!black,overlay},
        tick style={gray!70!black},
    },
    legend to name=comparisonlegend,
    legend columns=2,
    legend style={/tikz/every even column/.append style={column sep=7mm}},
]
\addplot[paperOrange,thick] table[x=N,y=M] {figures/comparison_small_plot.dat};
\addlegendentry{Mizel law}
\addplot[paperGreen,thick,dashed] table[x=N,y=W52] {figures/comparison_small_plot.dat};
\addlegendentry{Wang-type, $b=5.2$}
\addplot[paperNavy,thick] table[x=N,y=Wbest] {figures/comparison_small_plot.dat};
\addlegendentry{Wang-type, $b=\comparisonWangB$}
\addplot[paperRed,very thick] table[x=N,y=smooth] {figures/comparison_small_plot.dat};
\addlegendentry{Ours, $b=5.2$}
\addplot[black!50,dotted,thick,forget plot] coordinates {(1e4,1.28467878) (1e12,1.28467878)};
\end{axis}
\end{tikzpicture}
\end{minipage}
\hfill
\begin{minipage}[t]{0.49\linewidth}
\vspace{0pt}
\centering
\begin{tikzpicture}
\begin{axis}[
    scale=.9,
    comparison axis,
    title={(b) Moderate target probabilities},
    xmode=log,
    xmin=2,xmax=1e4,
]
\addplot[paperOrange,thick] table[x=N,y=M] {figures/comparison_moderate_plot.dat};
\addplot[paperGreen,thick,dashed] table[x=N,y=W52] {figures/comparison_moderate_plot.dat};
\addplot[paperNavy,thick] table[x=N,y=Wbest] {figures/comparison_moderate_plot.dat};
\addplot[paperRed,very thick] table[x=N,y=smooth] {figures/comparison_moderate_plot.dat};
\end{axis}
\end{tikzpicture}
\end{minipage}
\par\vspace{2.5\baselineskip}
\noindent\makebox[\linewidth][c]{\pgfplotslegendfromname{comparisonlegend}}
\endgroup
\caption{Finite-$p$ normalised expected cost $R_p = 2\sqrt{p}\,Q_p/\pi$ for the stated strength laws, with the same preparation and loop convention. Panel (a) uses 81 logarithmically spaced points from $1/p = 10^4$ to $10^{12}$; panel (b) extends the comparison to $p = 0.49$. The Wang-type value $b = 7.25$ is selected once by the fixed-grid procedure described in the text and held fixed in both panels. Each curve is evaluated using a common tolerance $10^{-6}$ for the analytical bound on the omitted normalised cost, with the cutoff chosen as described below.}
\label{fig:comparison}
\end{figure}

The finite-$p$ costs are compared in \cref{fig:comparison}, using the same asymptotic Grover reference $\pi/\rbra{2\sqrt{p}}$ for normalisation throughout as \cref{fig:Cb}.
We perform a systematic fixed-$b$ comparison on $p_j = 10^{-12+j/10}$, $j = 0,\ldots,80$, with equal weight for each logarithmic grid point.
For the Wang-type family we test $b \in \cbra{3,3.25,\ldots,12}\cup\cbra{5.2}$ and compare the arithmetic means of the normalised costs using the lower and upper bounds defined below.
A candidate is discarded once its lower mean exceeds the upper mean of another candidate; otherwise, its numerical sum is extended.
This selects $b = 7.25$, whose upper mean is below the lower means of all other tested values, and which is then held fixed for every $p$, including the moderate-$p$ panel.

\paragraph{Numerical evaluation and the omitted tail.}
We evaluate $S_t = \Abs*{u_t}^2$ using \cref{eq:u_m_adaptive} and approximate the expected costs by
\begin{equation*}
    Q_p^{[T]} = 2\sum_{t=0}^{T}S_t-1,
    \qquad R_p^{[T]} = \frac{2\sqrt{p}\,Q_p^{[T]}}{\pi}.
\end{equation*}
The cutoff is chosen from an upper bound on the omitted sum.
Write $d_t = \sqrt{1-\kappa\rbra{t}}$ and define
\begin{equation*}
    H\rbra{d} = \begin{pmatrix}
        d & -\dfrac{\rbra{1-d}\cos\rbra{2\theta}}{2\sin\rbra{2\theta}} \\[1ex]
        -\dfrac{\rbra{1-d}\cos\rbra{2\theta}}{2\sin\rbra{2\theta}} & 1
    \end{pmatrix},
    \qquad \lambda\rbra{d} = 1-\frac{1-d}{2}\rbra*{1+\frac{1}{\sin\rbra{2\theta}}}.
\end{equation*}
Here $\lambda\rbra{d}$ is the smallest eigenvalue of $H\rbra{d}$.
Direct calculation gives $A_{\theta,\kappa}^{\dagger}H\rbra{d}A_{\theta,\kappa} = dH\rbra{d}$ for $d = \sqrt{1-\kappa}$.
Since $d_t$ is nondecreasing for all three schedules and the matrices $H\rbra{d}/\lambda\rbra{d}$ decrease in the positive semidefinite order whenever $\lambda\rbra{d} > 0$, induction gives
\begin{equation*}
    S_t \leq \frac{u_T^{\dagger}H\rbra{d_T}u_T}{\lambda\rbra{d_T}}\prod_{j=T+1}^{t}d_j,
    \qquad t > T, \quad \lambda\rbra{d_T} > 0.
\end{equation*}
Consequently, for any $F_T \geq \sum_{t>T}\prod_{j=T+1}^{t}d_j$,
\begin{equation*}
    0 \leq R_p-R_p^{[T]} \leq B_T,
    \qquad B_T = \frac{4\sqrt{p}}{\pi}\frac{u_T^{\dagger}H\rbra{d_T}u_T}{\lambda\rbra{d_T}}F_T.
\end{equation*}
For our schedule with $T \geq 2b$, and for the Mizel law, respectively, we use
\begin{equation*}
    F_T = \frac{T+1}{b/2-1},
    \qquad F_T = \frac{T+1}{\pi-1}.
\end{equation*}
These follow by bounding $d_j$ by $e^{-b/(2j)}$ and $e^{-\pi/j}$, respectively, and summing the resulting power bounds.
For our schedule with $b > 2$ and any fixed $0 < p < 1/2$, we have $d_T \to 1$ and $\lambda\rbra{d_T} \to 1$ as $T \to \infty$. Thus, the bound on $S_t$ with the first choice of $F_T$ gives $\sum_{t\geq 0}S_t < \infty$, proving almost-sure termination and finite expected oracle cost.
For the Wang-type law, at a block endpoint $T = \floor{b2^k}$, we use
\begin{equation*}
    F_T = e^{2^{-k-1}}\rbra*{\frac{b2^k}{1-2e^{-b/4}}+\frac{1}{1-e^{-b/4}}},
    \qquad b > 4\ln\rbra{2}.
\end{equation*}
This follows by summing over the subsequent dyadic blocks, whose lengths differ from $b2^{k+j-1}$ by at most one; the displayed bound includes these rounding errors.
All tested Wang-type parameters satisfy $b > 4\ln\rbra{2}$.
For each plotted value, we stop at the first numerical checkpoint with $\lambda\rbra{d_T} > 0$ and $B_T \leq 10^{-6}$.
Thus $\sbra*{R_p^{[T]},R_p^{[T]}+B_T}$ bounds the infinite normalised cost.

\section*{AI use statement}
\addcontentsline{toc}{section}{AI use statement}
The authors used ChatGPT Pro (GPT-5.5/5.4) and OpenAI Codex to assist with language editing, exploratory coding, preliminary literature searches, and brainstorming of analytical approaches.
The authors verified all results, revised the manuscript as needed, and take full responsibility for its content.

\addcontentsline{toc}{section}{References}

\bibliographystyle{alphaurl}
\bibliography{main}

\end{document}